\definecolor{lightred}{rgb}{1, .30, 0.30}
\definecolor{lightblue}{rgb}{0.30, .30, 1.0}
\providecommand{\keywords}[1]{\noindent\textbf{\textit{Keywords: }} #1}
\newtheorem{theorem}{Theorem}
\newtheorem{definition}{Definition}
\newtheorem{lemma}{Lemma}
\newtheorem{proposition}{Proposition}
\newtheorem{corollary}{Corollary}
\theoremstyle{remark}
\newtheorem{rmrk}{Remark}
\newcommand{\poly}{\mathrm{poly}}
\begin{document}

\title{The expressiveness  of quasiperiodic and minimal shifts of finite type\footnote{Supported by the ANR grant Racaf  ANR-15-CE40-0016-01. 
Preliminary versions of some of the presented results we published in conference papers on MFCS-2015 \cite{mfcs2015} (Theorems~3-4) and MFCS-2017 \cite{mfcs2017}  (Theorems~6-7).}
}

\author{Bruno Durand and Andrei Romashchenko}

\maketitle

\begin{abstract}
We study multidimensional \emph{minimal} and \emph{quasiperiodic} shifts of finite type. We prove for these classes several results that were
previously known for the shifts of finite type in general, without restriction.
 We show that some quasiperiodic shifts of finite type admit only non-computable configurations; we characterize the classes of Turing degrees that can be represented by quasiperiodic shifts of finite type.
 We also transpose to the classes of minimal/quasiperiodic shifts of finite type some  results on subdynamics previously known for the effective shifts without restrictions: every \emph{effective} minimal (quasiperiodic) shift of dimension $d$ can be represented as a projection of a subdynamics of a  minimal (respectively, quasiperiodic) shift \emph{of finite type} of dimension $d+1$. 
\end{abstract}

\keywords{minimal SFT; quasiperiodicity; tilings}

\section{Introduction}

In this paper we study the multi-dimensional \emph{shifts}, 
i.e.,  the shift-invariant and topologically closed sets of configurations in  $\mathbb{Z}^d$ over a finite alphabet. 
The \emph{minimal shifts} are those shifts in which all configurations contain exactly  the same finite patterns;
the \emph{quasiperiodic shifts} are the shifts where each configuration contains all of its  finite patterns  infinitely often,  at least once in 
every large enough region. 

Two  classes of  shifts play a prominent role in symbolic dynamics, in language theory,  and in the theory of computability:
the \emph{shifts of finite type} (obtained by forbidding a finite number of finite patterns) and the \emph{effective shifts}
(obtained by forbidding a computable set of finite patterns). 

The class of shifts of finite type is known to be very rich --- even a finite set of simple local rules can induce a rather sophisticated global structure. It is known that some (non-empty) multi-dimensional shifts of finite type admit only aperiodic (see \cite{berger}) or even only non-computable (\cite{nonrecursive1,nonrecursive2}) configurations.
The value of the  combinatorial entropy of a shift of finite type can be any right computable real number $h\ge 0$, see \cite{hochman-meyerovitch}. 
Recent results (see \cite{hochman, drs, aubrun-sablik}) show that if we look at the projective subdynamics of a shift of finite type of some dimension $d>1$, we can obtain any effective shift of a dimension
below $d$.

The proofs of the results mentioned above are based on embedding some computation in the structure of a shift of finite type.  These embedding use rather tricky combinatorial gadgets that permit marrying the `computational' techniques with the framework of symbolic dynamics.
Though the implied computational tricks may look natural to a computer scientist, they give in the end constructions that are rather complex and artificial   if we look at them as objects of dynamical systems theory. 

Thus, a natural question is whether the phenomena mentioned above hold for `simpler' and `more natural' shifts of finite type. A formal version of this problem is  to transpose some results known for shifts of finite type \emph{in general} to some sort of \emph{primary}  shifts. More technically, in this paper we deal with \emph{minimal}, \emph{quasiperiodic}, and \emph{transitive} shifts. Our results can be subdivided into three groups:
\begin{itemize}
\item We prove that some quasiperiodic shifts of finite type admit only non-computable configurations. Moreover, we characterize the classes of Turing degrees that may correspond to a quasiperiodic shift of finite type.
\item We transpose the  characterization  by Hochman and Meyerovitch of the entropies of multidimensional shifts of finite type to the class of transitive shifts. (The same result was recently proven by Gangloff and Sablik in \cite{gangloff-sablik} with a different technique.)
\item We extend to the classes of minimal/quasiperiodic shifts of finite type some known results on subdynamics:
every \emph{effective} minimal (quasiperiodic) shift of dimension $d$ can be represented as a projection of a subdynamics of a  minimal (respectively, quasiperiodic) shift \emph{of finite type} of dimension $d+1$, which answers positively a question  by E.~Jeandel, \cite{jeandel-email}
\end{itemize}
All constructions in this paper involve the technique of self-simulating tilings developed in~\cite{drs} (see also variants of this technique in \cite{zinoviadis,westrick}).

\subsection{Notation and basic definitions}

\paragraph{Shifts.}
Let $\Sigma$ be a finite set (an alphabet). Fix an integer $d>0$. A $\Sigma$-\emph{configuration}  (or just a \emph{configuration} if $\Sigma$ is clear from the context) on $\mathbb{Z}^d$ is a mapping
$
 \mathbf{f} \ :\ \mathbb{Z}^d \to \Sigma,
$
i.e.,  a coloring of $\mathbb{Z}^d$ by ``colors'' from $\Sigma$. 
A $\mathbb{Z}^d$-\emph{shift}  (or just a \emph{shift}) is a set of configurations that is 
 (i)~translation invariant (with respect to the translations along each  coordinate axis), and
 (ii)~closed in Cantor's topology. 
 
 If a $\mathbb{Z}^d$-shift ${\cal S}_1$ is subset of a $\mathbb{Z}^d$-shift ${\cal S}_2$, we say that ${\cal S}_1$ is a \emph{subshift} of ${\cal S}_2$
 The entire space $\Sigma^{\mathbb{Z}^d}$ is itself a shift (where no pattern is forbidden) and is called \emph{the full shift}. Thus, every 
 $\mathbb{Z}^d$-shift is a subshift of the full shift.

 A \emph{pattern} is a mapping from a finite subset of $\mathbb{Z}^d$ to $\Sigma$ (a coloring of a finite set of $\mathbb{Z}^d$); this set is called the support of the pattern. 
 We say that a pattern $P$ \emph{appears} in a configuration  $ \mathbf{f}(\bar x)$
if for some  $\bar c\in \mathbb{Z}^d$
the pattern $P$  coincides with the restriction of the shifted configuration $ \mathbf{f}_{\bar c}(\bar x):= \mathbf{f}(\bar x+\bar c)$ to the support of this pattern.
A pattern that appears in some configuration of a shift is called \emph{globally admissible}. 

 Every shift is determined by the corresponding set of forbidden finite patterns $\cal F$ (a configuration belongs to the shift if and only if no patterns from $\cal F$ appear in this configuration). 

A shift is called  \emph{effective} (or \emph{effectively closed}) if it can be defined by a computably enumerable  set of forbidden patterns. A shift is called a \emph{shift of finite type} (SFT) if it can be defined by a finite set of forbidden patterns.   

 \paragraph{Wang tilings.}
A special class of two-dimensional SFT is defined in terms of \emph{Wang tiles}. In this case, we interpret the alphabet $\Sigma$ as a set of \emph{tiles}, i.e.,  a set of unit squares with colored sides, assuming that all colors belong to some finite set $C$ (we assign one color to each side of a tile, so technically $\Sigma$ is a subset of $C^4$). A (valid) \emph{tiling} is  a set of all configurations
 $
 \mathbf{f}\ :\ \mathbb{Z}^2\to \Sigma
 $
where every two neighboring tiles match, i.e., share the same color on adjacent sides. 
Wang tiles are powerful enough to simulate any SFT in a  very strong sense: for each SFT $\cal S$ there exists a set of Wang tiles $\tau$ such that the 
set of all $\tau$-tilings is isomorphic to $\cal S$.  In  this paper we mainly use the formalism of tilings since Wang tiles are better adapted for explaining our techniques of self-simulation.

 \paragraph{Dynamics and subdynamics.}
Every shift ${\cal S}\subset \Sigma^{\mathbb{Z}^d}$ can be interpreted as a dynamical system. Indeed, there are $d$  translations along  the coordinate axes, and each of these translations  maps  $\cal S$ to itself. Therefore, the group $\mathbb{Z}^d$ naturally acts on $\cal S$. 

Let $\cal S$  be a shift on $\mathbb{Z}^d$ and $L$ be  $k$-dimensional sub-lattice  in $\mathbb{Z}^d$ (i.e., $L$ must be an additive  subgroup of $\mathbb{Z}^d$ that is isomorphic to $\mathbb{Z}^k$). Then
the $L$-\emph{projective subdynamics}  ${\cal S}_L$ of $\cal S$ is the set of configurations of $\cal S$ restricted on $L$.
The $L$-projective subdynamics of a $\mathbb{Z}^d$-shift can be understood as a $\mathbb{Z}^k$-shift
(note that $L$ naturally acts on ${\cal S}_L$).
 In particular, for every $d'<d$ we have a $\mathbb{Z}^{d'}$-projective subdynamics on the shift $\cal S$, generated by the lattice on the first $d'$ coordinate axis.

A configuration $\mathbf{x}$ is called \emph{recurrent} if every pattern that appears in $\mathbf{x}$ at least once, must then appear in this configuration
infinitely often.  

A shift $\cal S$ is called \emph{transitive} if there exists a configuration $x\in{\cal S}$ that contains every finite pattern
that appears in  at least one configuration $y\in{\cal S}$.

 \paragraph{Quasiperiodicity and minimality.}
A configuration $\mathbf{x}$ is called \emph{quasiperiodic} (or \emph{uniformly recurrent}) if every pattern $P$ that appears in $\mathbf{x}$ at least once must appear in every large enough cube $Q$  in $\mathbf{x}$. Note that every periodic configuration is also quasiperiodic. A \emph{quasiperiodic} shift is a shift that contains only quasiperiodic configurations.

Given a configuration $\mathbf{x}$, a \emph{function of a quasiperiodicity} for $\mathbf{x}$ is a mapping\label{def-function-of-quasiperiodicity}
 $
  \varphi \ : \ \mathbb{N} \to \mathbb{N}\cup \{\infty\}
 $  
such that  every finite pattern of size (diameter) $n$  either never appears in  $\mathbf{x}$ or appears in every cube of size  
$\varphi(n)$ in $\mathbf{x}$ (see \cite{bruno}). We assume $\varphi(n)=\infty$ if some pattern $P$ of size $n$ appears in $\mathbf{x}$ but there exist arbitrarily large areas in $\mathbf{x}$ that are free of $P$. By definition, for a quasiperiodic $\mathbf{x}$ we have  $\varphi(n)<\infty$ for all $n$.
We say that a  shift $\cal S$ has a function of quasiperiodicity if 
there exists a function $ \varphi(n)$ (finite  for all $n$) that is a function of a quasiperiodicity
for every configuration  in ${\cal S}$. 

A shift is called \emph{minimal} if it  contains no non-trivial subshifts  (except the empty set and itself). 
A shift ${\cal S}$ is minimal if and only if all its configurations contain exactly the same patterns.
If a shift is minimal, then it is quasiperiodic (the converse is not true).

Since each minimal  shift $\cal S$ is quasiperiodic,  for every minimal  shift $\cal S$ there is a function of quasiperiodicity  $\varphi(n)$
that is finite for all $n$. 
Besides,  for an effective  minimal  shift, the set of all finite patterns (that can appear in any configuration)  is computable, see \cite{hochman, ballier-jeandel-2008}. 
From this fact it follows that every effective and minimal  shift  contains some computable  configuration.
Indeed, with an algorithm that checks whether a given pattern appears in every $\mathbf{x}\in{\cal S}$, we can  incrementally (and algorithmically) increase a finite pattern, maintaining the property  that this pattern appears in every configuration in $\cal S$.

\paragraph{Topological entropy.}
Let $Q_k$ be the $d$-dimensional \emph{cube} of size $k$,  $$Q_k:=\{0,1,\ldots,k-1\}^d.$$
For a shift  $\cal S$ in $\Sigma^{\mathbb{Z}^d}$, we denote by $N_{\cal S}(k)$ the number of distinct $\Sigma$-colorings of $Q_k$
that appear as a pattern  in configurations from $\cal S$.
The topological entropy of $\cal S$ is defined by
$$
h({\cal S})  = \lim \limits_{k\to \infty}\frac{ \log N_{\cal S}(k)}{|Q_k|}
$$
with the logarithm to base $2$.
\textup(The limit above exists for any shift.\textup)

\subsection{The main results}

The main results of this paper can be subdivided into three groups:  constructions combining quasiperiodicity with high computational complexity, a construction of a transitive SFT with a given topological entropy, and subdynamics of minimal and quasiperiodic SFT. In what follows we state our results more precisely.

\subsubsection{Quasiperiodicity is compatible with non-computability}

A configuration 
$
 \mathbf{f} \ :\ \mathbb{Z}^d \to \Sigma,
$
is called \emph{periodic} if there exists a non-zero $c\in \mathbb{Z}^d$ such that 
$
 \mathbf{f}(x+c) =  \mathbf{f}(x)
$
for all $x$. Otherwise a configuration is called \emph{aperiodic}.
In the classic paper \cite{berger}, Berger came up with a shift of finite type where all configurations are aperiodic. 
\begin{theorem}[Berger]
\label{thm-berger}
For every $d>1$ there exists a non-empty  SFT on $\mathbb{Z}^d$ where   each configuration is aperiodic.
\end{theorem}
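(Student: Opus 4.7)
The plan is to first reduce to the two-dimensional case and then exhibit an aperiodic set of Wang tiles via a self-similar (fixed-point) construction. For the reduction, given an aperiodic SFT $\mathcal{S} \subset \Sigma^{\mathbb{Z}^2}$, I would define a $d$-dimensional SFT over the same alphabet whose local rules are: the 2D rules of $\mathcal{S}$ on each $\mathbb{Z}^2$-slice orthogonal to the first two axes, plus finitely many constraints forcing equality between $\bar x$ and $\bar x + e_i$ for $i = 3,\dots,d$. Any valid configuration is then constant in the last $d-2$ directions and encodes a configuration of $\mathcal{S}$; a period $\bar c \in \mathbb{Z}^d$ would project to a non-zero period in $\mathbb{Z}^2$ (unless $\bar c$ is supported on the last $d-2$ axes, which is ruled out because the 2D projection is already aperiodic and the construction forces all slices to be identical, so any such $\bar c$ would still act trivially and thus the period must be non-trivial on the first two coordinates — contradiction).

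For the core case $d=2$, I would construct a Wang tile set $\tau$ whose valid tilings all have a rigid hierarchical structure. Fix some integer $N\ge 2$. The tile set should guarantee that every valid $\tau$-tiling of $\mathbb{Z}^2$ admits, for each level $k\ge 1$, a partition of the plane into aligned squares (``macro-tiles'') of side $N^k$, such that the $N\times N$ array of level-$k$ macro-tiles filling a single level-$(k+1)$ square again behaves like a tile from $\tau$ (with induced ``colors'' on its sides). In other words, $\tau$ simulates itself at scale $N$. This can be achieved by a fixed-point construction: design a universal interpreter of a tile set $\tau'$ operating inside each level-1 macro-tile, then use Kleene's recursion theorem to pick $\tau = \tau'$, as in the self-simulation technique of~\cite{drs} referenced throughout this paper.

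Non-emptiness follows from the fixed-point construction, since an explicit limit tiling obtained by iterating the macro-tile substitution can be exhibited. Aperiodicity is then a short combinatorial argument: suppose a valid tiling $\mathbf{x}$ has a non-zero period $\bar c = (a,b)$. The level-$k$ grid lines partition $\mathbb{Z}^2$ into $N^k \times N^k$ blocks whose positions are determined modulo $N^k$; translation by $\bar c$ must preserve this grid, hence $N^k \mid a$ and $N^k \mid b$. Choosing $k$ large enough that $N^k > \max(|a|,|b|)$ forces $\bar c = 0$, a contradiction.

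The main obstacle is the self-similar tile construction itself: one must simultaneously (i)~encode in finitely many local rules a mechanism that checks the colors of a macro-tile against the rules of $\tau$, (ii)~ensure the encoding fits inside a single level-1 macro-tile of size $N$, and (iii)~make the whole scheme consistent via a fixed-point argument so that the simulating and simulated tile sets coincide. The remaining steps (the dimensional lift and the periodicity argument) are straightforward once this self-similar $\tau$ is in hand.
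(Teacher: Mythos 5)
Your core argument -- construct a self-similar tile set by the fixed-point technique and derive aperiodicity from the uniqueness of the hierarchical grid -- is precisely the paper's approach (Section~2, with the remark after the fixed-point construction that "a self-similar tile set has only aperiodic tilings"). Your aperiodicity argument for $d=2$ is correct: the unique grid of level-$k$ macro-tiles must be preserved by any period $\bar c=(a,b)$, forcing $N^k\mid a$ and $N^k\mid b$ for all $k$.

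However, the dimensional lift is genuinely broken, and your parenthetical tries to paper over exactly the problem it creates. By adding constraints $f(\bar x)=f(\bar x+e_i)$ for $i=3,\dots,d$, you are \emph{imposing} $e_3,\dots,e_d$ as periods on every valid configuration. So a translation $\bar c$ supported on the last $d-2$ axes is not "ruled out"; it is a bona fide non-zero period of every configuration, and the SFT you build for $d\ge 3$ is never aperiodic. The paper sidesteps this by constructing the self-similar SFT \emph{directly in dimension $d$} (macro-tiles are $N\times\cdots\times N$ hypercubes), so the unique-grid argument kills periods in all $d$ coordinates; this is what "mutatis mutandis" covers. If you insist on bootstrapping from 2D, the standard fix is to superpose $d-1$ layers, where the $i$-th layer is a configuration from $\mathcal S$ extruded to be constant in all directions except $e_i,e_{i+1}$: any global period $\bar c$ must then project to a period of each layer, which forces $(c_i,c_{i+1})=(0,0)$ for every $i$ and hence $\bar c=0$. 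As written, your lift fails.
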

This construction, as well as a simpler one proposed by Robinson in \cite{robinson}, is quite sophisticated, with tricky combinatorial gadgets embedded in the structure of all configurations of the shift. 
So the SFT obtained in these seminal papers (and many subsequent constructions elaborating the ideas of Berger and Robinson)
look rather messy in terms of  dynamical systems. The SFTs proposed by  Berger and Robinson
are   not ``simple'' as a topological object, neither minimal nor transitive\footnote{%
The basic construction of an aperiodic SFT by Robinson admits configurations that consist of two  half-planes 
which can be arbitrarily  shifted one with respect to the other. 
Robinson referred to  this phenomenon as a \emph{fault} in a tiling, see  \cite[Section~3]{robinson}.
This makes the SFT non-minimal and even non-transitive.

The shifts obtained by enriching  the generic  construction of Robinson (tilings with embedded computations) may be even more irregular.
If they admit configurations with potentially different computations (which is  unavoidable in some applications),
then it is difficult  to enforce the appearance  in \emph{one single  configuration} of all fragments of computations that 
potentially might appear in \emph{at least one} valid configuration. Recently
several authors proposed nontrivial regularizations of  Robinson's construction overcoming the aforementioned obstacles, see below.}.

Later it was shown that the property of aperiodicity can be enforced for \emph{minimal} shifts of finite type. In other words, the combinatorial complexity (the property of aperiodicity of an SFT) can be combined with topological simplicity (the property of minimality), as in the following theorem. 
\begin{theorem}[Ballier--Ollinger]
\label{thm-alexis-nicolas}
For every $d>1$ there exists a non-empty  SFT on $\mathbb{Z}^d$ where   each configuration is aperiodic and quasiperiodic \textup(and even minimal\textup).
\end{theorem}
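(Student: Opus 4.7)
The plan is to build the SFT using the self-simulating tile set technique of \cite{drs} rather than redoing the combinatorial constructions of Berger and Robinson. Working (without loss of generality) in dimension $d=2$ with Wang tiles, the goal is to produce a tile set $\tau$ every valid tiling of which is forced into a unique infinite hierarchical structure: there is an integer $N$ such that every $\tau$-tiling decomposes canonically into $N\times N$ blocks of ``level-$1$ macro-tiles'', which group into $N\times N$ blocks of ``level-$2$ macro-tiles'', and so on, where at each level $k$ the set of macro-tile types is in bijection with $\tau$ itself and the matching rules at the macro level mirror those of $\tau$. This is achieved by the standard fixed-point argument: a Turing machine that checks the local rules of $\tau$ and controls the macro-tile layout is embedded inside every macro-tile, and Kleene's recursion theorem is invoked to let the machine reason about its own specification.

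Aperiodicity of every $\tau$-tiling follows routinely from the rigidity of this hierarchy: a non-zero period $\bar v$ would have to preserve the grid at every level $k$, but this is impossible as soon as the side $N^k$ of a level-$k$ macro-tile exceeds $|\bar v|$, since the grid origin modulo $N^k$ is uniquely determined. The main new point for this theorem is minimality. I would ensure it by insisting that the internal layout of each macro-tile be \emph{entirely deterministic}, i.e., fixed by the macro-tile's type (its four border colors) with no free internal parameters, no free choice of inputs to the embedded computation, and no idle regions that could be filled in several compatible ways. Additionally, the embedded computation should lay out, inside every macro-tile of level $k\geq 1$, a canonical ``catalog'' that contains a copy of every one of the $|\tau|$ macro-tile types of level $k-1$.

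Once determinism and the catalog property are in place, minimality follows by an inductive self-similarity argument. Every finite pattern $P$ of diameter $n$ that appears in some valid tiling already appears within a single macro-tile of level $k(n)$, where $k(n)$ is the smallest level with $N^{k(n)}\geq 2n$, because the hierarchy is unique and each macro-tile of a given type has only one valid content. By the catalog property, every macro-tile of level $k(n)+1$ contains copies of all $|\tau|$ macro-tile types of level $k(n)$, hence a copy of $P$. Consequently, any cube of side at least $2 N^{k(n)+1}$ fully contains at least one level-$(k(n)+1)$ macro-tile and therefore contains $P$. This is minimality, with explicit function of quasiperiodicity $\varphi(n)=O(n)$.

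The hard part is engineering the above determinism and catalog property within the fixed-point framework without over- or under-constraining the construction. Typical self-simulating tile sets allocate computation zones inside macro-tiles that leave room for several valid fillings --- for instance, different tape contents at a computation step not yet reached by the simulated machine, or idle tiles that can be permuted --- and these apparently harmless degrees of freedom are exactly what would break minimality. Eliminating them requires running only a finite, deterministic computation on a fixed input (a self-description of $\tau$ obtained via the recursion theorem), aligning the computation zone and the catalog rigidly with the macro-tile boundary, and padding unused cells with a unique ``blank'' tile, all while keeping macro-tiles large enough for the computation and the catalog to fit and keeping the resulting tile set finite. Once this rigidity is achieved, non-emptiness, aperiodicity, and minimality drop out simultaneously.
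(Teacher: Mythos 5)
You have the right high-level strategy --- a self-simulating fixed-point tile set, rigidified so that everything globally admissible reappears densely --- and this is indeed the route the paper takes. But the step on which your minimality argument rests is wrong. You claim that every pattern $P$ of diameter $n$ that appears in some valid tiling ``already appears within a single macro-tile of level $k(n)$.'' This fails: the macro-tile grid is part of the data of the configuration, and nothing rules out tilings in which a fixed point of the plane is a corner of macro-tiles at \emph{every} level simultaneously (the paper deliberately considers such ``standardly aligned'' tilings in Section~6). In such a tiling any pattern covering that corner straddles a macro-tile boundary at every level, so it never sits inside a single macro-tile, and your catalog of single macro-tile types cannot reproduce it. The paper instead works with $2\times 2$ \emph{blocks} of level-$k$ macro-tiles: for $n<N^k$, every pattern of diameter $n$ lies within some $2\times 2$ block of level-$k$ macro-tiles regardless of grid alignment, so one must guarantee density of all globally admissible $2\times 2$ blocks, not of single macro-tiles.

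This correction interacts badly with your catalog. There are $|\tau|=\Theta(N^2)$ macro-tile types (at least $N^2$ just to encode coordinates modulo $N$), and a level-$(k-1)$ macro-tile has $N^{2(k-1)}$ ground cells, so a catalog of \emph{single} types already costs $\Theta(N^{2k})$ cells out of the $N^{2k}$ available in a level-$k$ macro-tile, leaving no room for the computation zone, wires, or skeleton; a catalog of $2\times 2$ blocks, which is what you would actually need, is worse still. The paper's fix is more surgical. It first observes that in the fixed-point construction the content of a macro-tile \emph{is} already determined by its macro-colors (your ``determinism'' requirement is automatic, not an extra constraint). It then classifies $2\times 2$ blocks of level-$k$ macro-tiles by their role in the father: skeleton blocks recur periodically for free; blocks touching a communication wire are handled by mild structural conditions (wires well-separated so that a block touches at most one, and each bit of a macro-color encoding taking both values at many aligned positions); and only the $\poly(\log N)$ constant-size blocks that touch the computation zone are replicated by dedicated ``diversification slots.'' Each slot is a $2\times 2$ block (not a whole macro-tile) sealed by a ring of skeleton cells, with a $2\times 2$-determinacy property forcing its content, so it fits comfortably inside a macro-tile, unlike your catalog.
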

(This result was proven in \cite{alexis} for a tile set $\tau$ constructed in \cite{ollinger}.
Other examples of minimal aperiodic SFTs were suggested in in \cite{goodman-strauss, labbe}.)
Our next result  in some sense strengthens Theorem~\ref{thm-alexis-nicolas}: we claim that there exists a  quasiperiodic SFT that contains only non-computable configurations.
\begin{theorem}\label{thm1}
For every $d>1$ there exists a non-empty  SFT on $\mathbb{Z}^d$  where all configurations
are  non-computable and quasiperiodic.
\end{theorem}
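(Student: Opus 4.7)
The plan is to combine the self-simulating (fixed-point) tiling technique of~\cite{drs} with an embedded Turing computation that forces non-computability, while preserving the rigid hierarchical self-similarity that will give quasiperiodicity. All the work takes place for $d=2$ (Wang tiles); the lift to arbitrary $d>1$ is by a standard ``product with the full shift'' argument on the extra coordinates.

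First, I would recall the fixed-point tileset $\tau$ of~\cite{drs}: its valid configurations are necessarily organized in a rigid hierarchy of macro-tiles of sizes $N, N^2, N^3, \ldots$, where each level-$(k{+}1)$ macro-tile is partitioned into an $N\times N$ grid of level-$k$ macro-tiles, and each macro-tile contains a ``computation zone'' in which a simulated universal Turing machine verifies the rules of $\tau$ applied to the next scale. By self-simulation, the combinatorial type of a macro-tile is the same at every level. This structure is already known to be quasiperiodic---and one can arrange that it be minimal---since any pattern of diameter $n$ is contained inside some macro-tile of level $k_0 = O(\log n)$, and level-$k_0$ macro-tiles tile the plane with bounded gaps of order $O(N^{k_0+1}) = \poly(n)$.

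Second, I would enlarge the program simulated inside every macro-tile by a harmless ``side computation'' that enumerates a fixed non-computable r.e.\ set $S$ (say, the halting problem), and I would reserve a strip inside each macro-tile to record the prefix of $S$ obtained so far. At a level-$k$ macro-tile the embedded machine has time and space polynomial in $N^k$, so it can output a prefix of $S$ of length $\Theta(N^k)$. The tile-matching rules enforce (i) that this side computation be performed identically in every macro-tile of a given level, and (ii) that the bits appearing in a level-$k$ macro-tile agree with the bits appearing in the larger level-$(k{+}1)$ macro-tile containing it. Because $S$ is hard-coded into the program (which is itself hard-coded into $\tau$ via the fixed-point theorem), the same set $S$ is enumerated in every valid configuration.

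Third, I would conclude: \emph{non-computability} follows because, from any valid configuration $\mathbf{x}$, one can algorithmically locate the hierarchy (the hierarchy of macro-tiles in a configuration of $\tau$ is readable from arbitrarily large finite patches), read off the enumeration strips at level $k$, and thus produce the first $\Theta(N^k)$ bits of $S$ for every $k$; hence $\mathbf{x}$ Turing-computes $S$, and $\mathbf{x}$ is non-computable. \emph{Quasiperiodicity} is inherited from the underlying self-simulating structure, because the enrichment we added is itself part of the self-simulated behavior and therefore appears in the same way in every macro-tile of a given level: a pattern of diameter $n$ that occurs at one place sits inside some level-$O(\log n)$ macro-tile, and such macro-tiles occur in every window of side $\poly(n)$.

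The main obstacle is precisely the tension between the two desired properties: forcing a non-computable sequence looks like a ``global'' constraint, whereas quasiperiodicity is a ``local uniformity'' constraint. The fixed-point self-simulation of~\cite{drs} is exactly the tool that reconciles them---the same hard-coded program runs at every scale and in every macro-tile, so the spatial layout of the encoded information is uniform, even though the information itself is non-computable. Care must be taken to ensure that the consistency constraints between levels can indeed be checked by the finite bandwidth available at the boundary of a macro-tile (this is done as in~\cite{drs}, by passing only hash-like summaries of the computation across boundaries) and that the side computation does not overflow the available time/space budget, which is controlled by choosing $N$ large enough as a function of the simulated program.
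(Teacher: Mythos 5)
Your construction has a fundamental flaw in the non-computability step, and it would in fact produce an SFT that \emph{does} contain computable configurations. You propose to hard-code a non-computable r.e.\ set $S$ into the program and have each level-$k$ macro-tile ``enumerate $S$'' for the available time and record ``a prefix of $S$ of length $\Theta(N^k)$.'' But a bounded-time enumeration of $S$ never yields a prefix of the characteristic sequence of $S$: it only yields the (computable) set of elements discovered so far, and for a non-trivial r.e.\ set there is no computable function $k\mapsto t(k)$ such that running the enumerator for $t(k)$ steps decides membership of the first $N^k$ integers. If such a thing were possible, $S$ would be decidable. Consequently, everything written into your macro-tiles is determined by running a fixed algorithm for a computable amount of time; the standard ``canonically aligned'' tiling would then be computable, contradicting the statement you want to prove. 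The final sentence of your third paragraph (``hence $\mathbf{x}$ Turing-computes $S$'') is where the argument silently assumes that a bounded computation can produce the actual characteristic sequence of $S$.

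The paper's proof avoids this by a conceptually different move: the bit sequence $X$ embedded in the tiling is \emph{not} computed by the macro-tiles at all. It is a free parameter, propagated coherently between levels via the macro-colors, and the computation performed in each macro-tile is a \emph{verification} rather than a \emph{generation}: the macro-tile enumerates (for bounded time) the forbidden patterns of a $\Pi^0_1$ class $A\subseteq\{0,1\}^{\mathbb N}$ with no computable members --- concretely, the class of separators of two recursively inseparable r.e.\ sets --- and checks that the available prefix of $X$ contains none of the forbidden patterns discovered so far. Since every level is given more time, every forbidden pattern is eventually tested, so the constraints force $X\in A$, hence $X$ (and therefore every configuration encoding it) is non-computable. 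Quasiperiodicity is then argued \emph{within each class} ${\cal T}(X)$ of tilings encoding a fixed $X$. There is also a secondary gap in your sketch: the bare fixed-point construction of~\cite{drs} is not quasiperiodic, because $(2\times2)$ patterns touching the computation zone or the communication wires need not recur; the paper must add ``diversification slots'' and properties (p1)--(p4) of Section~3 precisely to repair this, and your appeal to ``the enrichment is part of the self-simulated behavior'' does not substitute for that repair.
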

Note that we cannot strengthen Theorem~\ref{thm1} further and change \emph{quasiperiodicity} to \emph{minimality}:
in every minimal SFT the set of globally admissible patterns is decidable, and therefore there exists a computable configuration, 
see \cite{ballier-jeandel-2008} and \cite{hochman}.

With a similar technique we prove the following (more general) result, which characterize the classes of Turing degrees that can be represented by quasiperiodic SFTs.
\begin{theorem}\label{thm1-bis}
For every effectively closed set  $\cal A$ and for every $d>1$ there exists a non-empty  SFT $\cal S$  in $\mathbb{Z}^d$ where all configurations 
are quasiperiodic,
and the Turing degrees of all configurations in $\cal S$ form exactly the upper closure of  ${\cal A}$ 
\textup(defined as  the set of all Turing degrees $d$ such that $d\ge_T \mathbf{x} $ for at least one $\mathbf{x}\in {\cal A}$\textup).
\end{theorem}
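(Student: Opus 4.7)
The plan is to strengthen the fixed-point self-simulating SFT that underlies Theorem~\ref{thm1}, threading both the effectively closed condition defining $\cal A$ and an arbitrary auxiliary sequence $z\in\{0,1\}^\mathbb{N}$ through the hierarchical structure. The SFT $\cal S$ I want to build will have its configurations in natural bijection (up to translation) with pairs $(x,z)\in{\cal A}\times\{0,1\}^\mathbb{N}$, and the Turing degree of the configuration $\mathbf{c}_{x,z}$ will be $\deg(x\oplus z)$. Since the image of the map $(x,z)\mapsto\deg(x\oplus z)$ over ${\cal A}\times\{0,1\}^\mathbb{N}$ is precisely the upper closure of the degrees of $\cal A$ --- any $d\ge_T\deg(x)$ with $x\in{\cal A}$ is realised by choosing $z$ of degree $d$, which then also computes $x$ --- this will prove the theorem.

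I would start from the tile set of \cite{drs}. In the computation zone of each level-$k$ macro-tile, two programs run in parallel. The first writes the $k$-th bit of a decoded sequence $x$ at a canonical hierarchical position and uses the growing time budget to run a universal enumerator of the finite strings witnessing $x\notin{\cal A}$; this rejects any $x$ outside $\cal A$ at some finite level, forcing $x\in{\cal A}$ in every valid configuration. The second program simply reserves a single free slot to hold a bit $z_k$. Because the fixed-point self-simulation forces every level-$k$ macro-tile in a valid configuration to encode the very same program, both $x_k$ and $z_k$ are automatically global across $\mathbb{Z}^d$: all level-$k$ macro-tiles carry the same value. Unlike the bits of $x$, the bit $z_k$ is constrained by nothing, so every $z\in\{0,1\}^\mathbb{N}$ is admissible.

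From $\mathbf{c}_{x,z}$ one recovers both $x$ and $z$ by reading the canonical slots along the hierarchy, and conversely $x\oplus z$ together with the computable tile set reconstructs $\mathbf{c}_{x,z}$, giving $\deg(\mathbf{c}_{x,z})=\deg(x\oplus z)$ as required. The main obstacle is to verify that, despite $z$ being potentially very irregular, every $\mathbf{c}_{x,z}$ remains quasiperiodic. The argument generalises the quasiperiodicity step used for Theorem~\ref{thm1}: a finite pattern of diameter $n$ lies inside a small neighbourhood of some level-$k$ macro-tile as soon as the macro-tile size $N_k$ exceeds $n$, and its appearance is fully determined by the local hierarchical coordinates together with the bits $x_1,\ldots,x_k,z_1,\ldots,z_k$. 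Since those bits are shared by \emph{every} level-$k$ macro-tile in the plane --- the crucial point is to place the free slot inside the self-simulated program rather than at a local cell of $\mathbb{Z}^d$, so that self-simulation automatically broadcasts it --- the pattern reappears in every cube large enough to contain a level-$k$ macro-tile at the matching hierarchical offset, which yields the required uniform bound on gaps.
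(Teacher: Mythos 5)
Your construction is essentially the paper's (a self-simulating tile set with the effectively closed condition verified by the embedded computation), but your degree accounting contains a genuine gap: you have forgotten the degrees of freedom coming from the choice of the hierarchical grid. In a self-simulating tiling, a configuration is not determined by the embedded data $(x,z)$ ``up to translation.'' Besides the broadcast bits, one must also choose, at every level $k$, the offset of the grid of level-$k$ macro-tiles relative to the origin; these choices form two coherent infinite sequences $\sigma_h,\sigma_v$ (one per axis), they range over an uncountable set, and different choices give configurations that are genuinely distinct and in general not translates of one another. Consequently a configuration of your SFT is determined by the \emph{triple} $(x,z,\sigma_h,\sigma_v)$, and its Turing degree is $\deg(x\oplus z\oplus\sigma_h\oplus\sigma_v)$, not $\deg(x\oplus z)$. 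In particular the sentence ``conversely $x\oplus z$ together with the computable tile set reconstructs $\mathbf{c}_{x,z}$'' is false: without knowing the offsets you cannot recover the configuration.

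The final statement you want to prove nevertheless survives, because $\deg(x\oplus z\oplus\sigma_h\oplus\sigma_v)\ge\deg(x)$ is still in the upper closure, and any degree $d\ge_T x$ is still realised (e.g. by choosing $\sigma_h$ of degree $d$ and $z$ computable). But once you notice this, the extra auxiliary sequence $z$ is redundant: the offsets $\sigma_h,\sigma_v$ already supply exactly the unconstrained extra information needed to climb from $\deg(x)$ to any degree above it. This is precisely what the paper's proof does --- it embeds only $X\in\cal A$ and then observes that a tiling is determined by $(X,\sigma_h,\sigma_v)$, so the degrees realised are those of $X\oplus\sigma_h\oplus\sigma_v$ with $X\in\cal A$ and $\sigma_h,\sigma_v$ arbitrary, which is the upper closure. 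You should drop the $z$-slot, correct the degree computation to include the offsets, and the argument then matches the paper's.
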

\begin{rmrk}
The Turing degree spectrum
of a non effective minimal shift can be much more complex than what we get in Theorem~\ref{thm1-bis}, 
see \cite{hochman-vanier}. 
\end{rmrk}

\subsubsection{The  Hochman--Meyerovitch theorem on the entropy of SFTs}

Hochman and Meyerovitch showed in \cite{hochman-meyerovitch} that a real number $h\ge 0$  is the topological entropy of some SFT
 if and only if $h$ is right recursively enumerable. They raised the question whether the same property holds for the class of \emph{transitive} shifts.
 Recently Gangloff and Sablik \cite{gangloff-sablik} answered  this question positively using a construction based on Robinson's aperiodic tilings. 
 We prove the same result (Theorem~\ref{t:transitiv-hochman-meyerovich} below) using a technique similar to the proof of Theorem~\ref{thm1}.

\begin{theorem}\label{t:transitiv-hochman-meyerovich}
For every integer $d>1$ and 
for every  nonnegative right recursively enumerable real $h\ge 0$ there exists a  \emph{transitive} SFT on $\mathbb{Z}^d$ with the topological entropy $h$.
\end{theorem}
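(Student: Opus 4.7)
My plan is to combine the fixed-point self-simulating tilings of~\cite{drs} (as used in the proof of Theorem~\ref{thm1}) with the entropy-encoding trick of Hochman and Meyerovitch.

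First, I would build a hierarchical self-simulating SFT $\mathcal{S}$ on $\mathbb{Z}^d$ whose every configuration is partitioned (up to a global shift) into macro-tiles of scales $N_1 \ll N_2 \ll \ldots$, each embedding a simulation of a universal Turing machine in $\mathrm{polylog}$-space. Inside each macro-tile I reserve a large free-bit field (whose unconstrained bits contribute to the topological entropy) and a small \emph{transit region} of relative volume $o(1)$ (reserved to enforce transitivity).

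Second, I would use the embedded computation at level $k$ to produce the $k$-th rational upper approximation $h_k$ of the target $h$ (available because $h$ is right recursively enumerable) and to forbid enough free-bit patterns inside each level-$k$ macro-tile to reduce the number of admissible free-bit colorings of that macro-tile to at most $2^{h_k V_k}$, where $V_k$ is the macro-tile volume. A standard Hochman--Meyerovitch calculation then gives $h(\mathcal{S}) = \lim_k h_k = h$; the transit region contributes only a vanishing correction since its density tends to zero.

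Third, to ensure transitivity, the SFT rules would permit (but not force) the free bits in the transit region of the level-$k$ macro-tile to carry a canonical enumeration of all \emph{locally admissible} patterns of size at most $f(k)$, for some computable $f(k) \to \infty$; the level-$k$ computation has enough room to list all such patterns and to check that the level-$k+1$ transit region extends the level-$k$ one consistently. A compactness argument applied to the coherent tower of nested macro-tiles that realize the enumeration at every level produces at least one configuration $\mathbf{x}^* \in \mathcal{S}$ whose transit regions exhibit every locally admissible pattern of every size. Since every globally admissible pattern is in particular locally admissible, $\mathbf{x}^*$ contains every globally admissible pattern, and hence $\mathcal{S}$ is transitive.

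The main obstacle is reconciling the entropy-control mechanism of step~2 with the transit region of step~3: one must guarantee that some free-bit assignment simultaneously satisfies the $2^{h_k V_k}$-pattern constraint and realizes the universal transit enumeration, without the two constraints conflicting. I would resolve this by allocating disjoint free-bit sub-fields to the two tasks and by letting the transit region be permissive rather than forced, so that the entropy is computed on a sub-field independent of the transit content. The rest is standard self-simulation bookkeeping, parallel to the scheme used in the proof of Theorem~\ref{thm1}.
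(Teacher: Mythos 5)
Your approach is genuinely different from the paper's, and the transitivity mechanism in your step~3 has a real gap.

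The paper first constructs a \emph{minimal} SFT $\mathcal{S}$ (Proposition~\ref{p:transitiv-hochman-meyerovich}) in which every tile is colored red or blue, and the parameters $\alpha_k,\beta_k$ governing the opposite-color corners inside macro-tiles are tuned, using the right-r.e.\ approximation of $h$ computed in the computation zone, so that the $\limsup$ of the density of red tiles equals $h$. The entropy-bearing degree of freedom is then added \emph{on top of} this fixed combinatorial skeleton, by splitting each red tile into two copies. Because the underlying shift $\mathcal{S}$ is minimal, every configuration of $\mathcal{S}$ already contains every globally admissible pattern of $\mathcal{S}$; for a fixed configuration $x\in\mathcal{S}$, choosing the copy of each red tile independently at random yields (with probability one) a configuration of $\mathcal{S}'$ containing every instantiation of every pattern, hence a transitive point. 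The entropy of $\mathcal{S}'$ is then exactly the red-tile density $h$, since the hierarchical skeleton contributes nothing. Your plan instead tries to obtain transitivity via a ``transit region'' that enumerates all \emph{locally admissible} patterns of bounded size and then invoke compactness to get a configuration $\mathbf{x}^*$ realizing all of them.

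The gap: this transit mechanism cannot work as stated. If ``carry an enumeration'' means the transit region merely \emph{encodes} descriptions of small patterns, then $\mathbf{x}^*$ contains only the encodings, not the patterns themselves, so nothing is gained towards transitivity. If instead it means the transit region actually \emph{instantiates} those patterns, then the claimed $\mathbf{x}^*$ does not exist: a locally admissible pattern need not be globally admissible, and a pattern that is not globally admissible cannot appear in any configuration of the SFT, including $\mathbf{x}^*$. Hence the compactness tower cannot be built. You cannot repair this by switching to globally admissible patterns, because for a general SFT (and in particular for one rich enough to realize arbitrary right-r.e.\ entropies) the set of globally admissible patterns is not computable, so the level-$k$ computation cannot produce the required enumeration. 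The paper sidesteps this issue entirely: it does not need to enumerate globally admissible patterns because minimality of the base $\mathcal{S}$ guarantees for free that every configuration contains all of them, and the entropy is pushed into an independent, structureless degree of freedom (the two copies of each red tile) that conflicts with nothing in the hierarchy. Your second difficulty --- reconciling the entropy constraint with the transit region --- is also nontrivial: the Hochman--Meyerovitch--style counting bound $2^{h_k V_k}$ must be enforced by \emph{local} rules, which requires a concrete encoding scheme that you leave unspecified; the paper avoids this by controlling a density rather than a raw count.
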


\subsubsection{Subdynamics of minimal and quasiperiodic shifts}

Our next theorem claims that the subdynamics of an effective quasiperiodic SFT can be very rich. More specifically, we prove that
 every effective quasiperiodic $\mathbb{Z}^d$-shift can be simulated by a quasiperiodic SFT on $\mathbb{Z}^{d+1}$.
By \emph{simulation} we mean a factor of  the subdynamics of a shift on $\mathbb{Z}^{d+1}$ (where the subdynamics can be understood as the restriction of a shift on the first $d$ coordinate axis). We proceed with a  formal definition:

\begin{definition}\label{def:subdynamics}
We say that a shift $\cal A$ on $\mathbb{Z}^{d}$ is \emph{simulated} by 
a shift $\cal B$ on $\mathbb{Z}^{d+1}$ if there exists a projection $\pi \ : \ \Sigma_B\to \Sigma_A$ such that for every configuration
 $$
  \textbf{f}\ : \ \mathbb{Z}^{d+1} \to \Sigma_B
 $$
from $\cal B$ and for all $i_1,\ldots,i_d,j,j'$  
we have $\pi (\textbf{f}(i_1,\ldots,i_d,j)) = \pi(\textbf{f}(i_1,\ldots,i_d,j'))$ 
\textup(i.e., the projection $\pi$ takes a constant value along each column $(i_1,\ldots,i_d,*)$, see Fig.~\ref{fig-tiling-with-embedded-letters}\textup), 
and the resulting $d$-dimensional configuration 
$$
 \{ \pi(\textbf{f}(i_1,\ldots,i_d,*)) \}
$$
belongs to $\cal A$; moreover, each configuration of $\cal A$ can be represented in this way by some configuration of $\cal B$.  
Informally, we can say that each configuration from $\cal B$ encodes a configuration from $\cal A$, and each configuration from $\cal A$ is encoded by some configuration from $\cal B$. 
\end{definition}
\begin{figure}
\begin{center}
\includegraphics[scale=0.8]{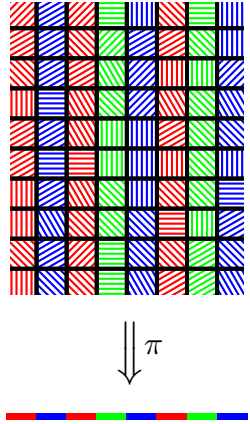}
\caption{In this example, each cell of a two-dimensional configuration has two characteristics: the color and the direction of hatching. The color is maintained unchanged along each vertical line. The projection $\pi$ maps each column to its color.}\label{fig-tiling-with-embedded-letters}
\end{center}
\end{figure}
\begin{theorem}\label{thm-main}
(a)~Let $\cal A$ be an effective quasiperiodic $\mathbb{Z}^d$-shift over some alphabet $\Sigma_A$. Then there exists a quasiperiodic SFT $\cal B$ 
 \textup(over another alphabet $\Sigma_B$\textup) of dimension $d+1$ such that
$\cal A$ is \emph{simulated} by $\cal B$ in the sense of Definition~\ref{def:subdynamics}. 

(b)~Let $\cal A$ be a $\mathbb{Z}^d$-shift simulated in the sense of Definition~\ref{def:subdynamics} 
by some quasiperiodic SFT $\cal B$ of dimension $d+1$. Then $\cal A$ is  effective and quasiperiodic. 
\end{theorem}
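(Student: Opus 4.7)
For (b), both effectivity and quasiperiodicity follow from compactness applied to the column-constant projection $\pi$. A finite pattern $P$ on a domain $D\subset\mathbb{Z}^d$ is globally admissible in $\cal A$ iff some $\cal B$-pattern $Q$ on the slab $D\times\{0\}$ with $\pi\circ Q=P$ is globally admissible in $\cal B$; there are only finitely many candidate $Q$, and for each one the property ``not globally admissible in the SFT $\cal B$'' is recursively enumerable, since by compactness it holds iff some finite extension of $Q$ already violates every possible combination of local rules of $\cal B$. Hence the forbidden patterns of $\cal A$ form a c.e.\ set, giving effectivity. For quasiperiodicity, any pattern of diameter $n$ in a configuration $\mathbf{g}\in\cal A$ is the $\pi$-image of a slab pattern $Q$ of comparable diameter in the corresponding $\mathbf{f}\in\cal B$; quasiperiodicity of $\mathbf{f}$ forces $Q$ to recur in every sufficiently large cube, hence $P$ recurs in every large cube of $\mathbf{g}$.

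For (a), the plan is to build $\cal B$ using a $(d+1)$-dimensional self-simulating tileset in the spirit of \cite{drs}. The tiles would carry several superimposed layers of information: a hierarchical grid of nested macro-tiles of rapidly growing sizes $N_1<N_2<\dots$, each level simulating the previous one via the usual fixed-point construction; an extra $\Sigma_A$-symbol attached to each cell, forced constant along the $(d+1)$-direction so that the projection $\pi$ is well-defined and column-constant; computation wires inside each macro-tile of level $k$ running a universal Turing machine that enumerates the forbidden patterns of $\cal A$ (available because $\cal A$ is effective) for a number of steps growing with $k$; and consistency channels that feed the $\Sigma_A$-values from the floor of the macro-tile into the computation, so that the machine checks whether any enumerated forbidden pattern of $\cal A$ appears inside the chunk of data visible to that macro-tile. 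A macro-tile is declared invalid the instant its computation witnesses a forbidden pattern. Standard fixed-point arguments then ensure that valid configurations of $\cal B$ are exactly those whose $\pi$-projection lies in $\cal A$, and that every configuration of $\cal A$ can be lifted.

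The main obstacle will be proving quasiperiodicity of $\cal B$. The bare self-similar tileset is already known to be quasiperiodic, since every level-$k$ macro-tile occupies each sufficiently large region, but once we inject $\cal A$-data into the computation layer different spatial locations end up running computations on different data, which a priori could destroy uniform recurrence. The plan is to exploit the assumed quasiperiodicity of $\cal A$ itself: given a pattern $R$ of diameter $n$ occurring in some $\mathbf{f}\in\cal B$, choose $k$ with $N_k\gg n$ and note that $R$ is determined by the relative offset of its support inside an ambient level-$k$ macro-tile, together with a finite sub-pattern of the underlying $\cal A$-configuration $\mathbf{g}=\pi(\mathbf{f})$ visible to that macro-tile. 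Offsets recur uniformly by self-similarity of the tileset, and the $\cal A$-sub-pattern recurs uniformly because $\mathbf{g}$ is quasiperiodic; combining the two yields uniform recurrence of $R$ in $\mathbf{f}$. The delicate point is that the quasiperiodicity function of $\mathbf{g}$ is not computable from $\cal A$, so the growth rates $N_k$ and the sizes of the embedded computation windows must be chosen generously enough that the geometric scales of the hierarchy eventually absorb the worst-case recurrence delays of whatever $\cal A$-configuration is injected --- a padding trick standard in this line of fixed-point constructions.
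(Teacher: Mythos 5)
Your part~(b) argument is fine and matches the paper's (the paper only spells out the quasiperiodicity direction, treating effectivity as routine, exactly as you do). Your part~(a), however, has the right skeleton — self-simulating $(d+1)$-dimensional tileset, a $\Sigma_A$-letter per column constant along the extra direction, hierarchical delegation of chunks of the embedded configuration to computation zones that enumerate and test forbidden patterns — but the quasiperiodicity argument has two real gaps.

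First, the claim that ``the bare self-similar tileset is already known to be quasiperiodic, since every level-$k$ macro-tile occupies each sufficiently large region'' is not correct as stated, and the paper says so explicitly: the basic self-simulating construction is \emph{not} quasiperiodic, because finite patterns that touch the computation zone or a communication wire of a macro-tile appear at only a sparse set of positions and need not recur with bounded gaps. The paper devotes its Section~3 to fixing this by adding extra gadgets (the $2\times2$ diversification slots, plus the isolation of wires and the cyclic re-encoding of macro-colors, properties (p1)--(p4)), and then verifies quasiperiodicity case by case at the level of $(2\times2)$-blocks of macro-tiles. Without such additional machinery, your base layer is not quasiperiodic to begin with, so there is nothing to ``combine'' with.

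Second, and more importantly, the step ``offsets recur uniformly by self-similarity, the $\cal A$-sub-pattern recurs uniformly because $\mathbf{g}$ is quasiperiodic; combining the two yields uniform recurrence'' is exactly where the nontrivial content sits and cannot be waved through. The hierarchical ``offset data'' of a $(2\times2)$-block recurs only along a rigid grid with horizontal period $L_{k+2}$, while the $\cal A$-factor recurs with bounded gaps but a priori at arbitrary horizontal positions. To find a location where both coincide, you need the factor of $\mathbf{g}$ to recur with bounded gaps \emph{among positions congruent to the original one modulo $L_{k+2}$}. This is a genuine combinatorial fact about quasiperiodic sequences — the paper's Lemma~\ref{lemma-quasiperiodic-times-periodic} (after Avgustinovich--Fon-Der-Flaass--Frid and Salimov): in a uniformly recurrent configuration, every finite pattern reappears with translations divisible by any prescribed $q$, and these reappearances are uniformly dense. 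Without this lemma the two recurrences need not align, and the argument does not close. Your proposed fix of ``choosing $N_k$ generously'' does not help here: the construction is fixed once and for all and cannot be calibrated against the (uncomputable, configuration-dependent) recurrence function of whatever $\mathbf{g}\in\cal A$ is injected; the alignment problem is solved combinatorially, not by padding.

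A further point worth noting: the paper's actual proof does not compare a pattern to an ``ambient'' macro-tile, but rather works with $(2\times2)$-blocks of level-$k$ macro-tiles and uses a determinacy lemma (Lemma~\ref{lemma-clones-with-embedded-bits} / Corollary~\ref{corollary-clones-with-embedded-bits}) asserting that such a block is uniquely determined by its hierarchical role data together with the factor of the embedded sequence in its zone of responsibility. The recurrence argument then consists of exhibiting, for each of the three cases (skeleton block, block touching the computation zone, block touching a wire), a ``horizontally aligned grid of siblings'' and applying the modular-recurrence lemma to find a sibling carrying the same $\cal A$-factor. Your sketch is compatible with this, but the determinacy lemma and the three-case analysis are what make ``$R$ is determined by its offset plus an $\cal A$-sub-pattern'' into a provable statement rather than a heuristic.
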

\begin{rmrk}
The parts (a) and (b) of Theorem~\ref{thm-main} are the \emph{if} and \emph{only if} parts of the following characterization:
an effective shift is quasiperiodic if and only if it is simulated by a quasiperiodic SFT of dimension higher by $1$.
\end{rmrk}

\smallskip

\noindent
A similar result holds for effective minimal shifts:
\begin{theorem}
\label{thm-main-min}
(a)~For every effective minimal $\mathbb{Z}^d$-shift $\cal A$ there exists a minimal SFT $\cal B$ 
 in $\mathbb{Z}^{d+1}$ such that
$\cal A$ is simulated by $\cal B$ in the sense of Definition~\ref{def:subdynamics}. 

(b)~Let $\cal A$ be a $\mathbb{Z}^d$-shift simulated in the sense of Definition~\ref{def:subdynamics} 
by some minimal SFT $\cal B$ of dimension $d+1$. Then $\cal A$ is effective and minimal. 
\end{theorem}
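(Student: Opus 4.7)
For part (b), the argument is immediate from the definitions. If $P$ is a pattern that appears in some projection $\pi(\mathbf{f}) \in \cal A$, I lift $P$ to a $\cal B$-pattern $Q$ on the column prism above the support of $P$; since $\cal B$ is minimal, $Q$ appears in every configuration of $\cal B$, so $P = \pi(Q)$ appears in every configuration of $\cal A$, proving that $\cal A$ is minimal. For effectiveness, I use the fact (cited above) that a minimal SFT has a decidable language: the language of $\cal A$ is the $\pi$-image of the $\cal B$-patterns supported on column prisms, hence computable. Its complement is therefore computable as well, giving an explicit (in fact decidable) family of forbidden patterns defining $\cal A$.

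For part (a), my plan is to follow the proof strategy of Theorem~\ref{thm-main}(a) and refine it so that the output SFT is minimal rather than merely quasiperiodic. Since $\cal A$ is minimal it is in particular effective and quasiperiodic, so Theorem~\ref{thm-main}(a) already produces a quasiperiodic SFT $\cal B$ on $\mathbb{Z}^{d+1}$ that simulates $\cal A$; the task is to arrange that this $\cal B$ is minimal. I would revisit the self-simulating tileset of \cite{drs} used for Theorem~\ref{thm-main}(a) and show that every finite pattern $P$ occurring in some configuration of $\cal B$ is determined, up to boundedly many choices, by three ingredients: (i)~the local geometry of the nested macro-tiles in a neighborhood of $P$; (ii)~the finite prefix of the embedded computation visible in that neighborhood; (iii)~the fragment of the encoded $\cal A$-configuration visible through $P$. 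Since $\cal A$ is minimal, ingredient~(iii) is a globally admissible pattern of $\cal A$ and therefore appears in the encoding of every configuration of $\cal B$. Ingredient~(ii) corresponds to a single algorithm replicated at every scale of the hierarchy and so is forced on every configuration. For ingredient~(i), the self-simulating structure guarantees that any local phase pattern of the macro-tile grid realized in one configuration of $\cal B$ is realized in every configuration, after a suitable translation. Combining these three points, $P$ must appear in every configuration of $\cal B$, which gives the desired minimality.

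The main obstacle I anticipate is precisely the control of ingredient~(i) across configurations. Different configurations of $\cal B$ may differ in the global phases of the macro-tile grids at each hierarchical level, and one has to argue that every finite pattern of phases that occurs in some configuration occurs in all configurations. In the setting of Theorem~\ref{thm-main}(a) it sufficed to show that such phase patterns appear sufficiently often in every configuration; the minimal case requires the strictly stronger statement that they appear in every configuration. I expect this strengthening to follow from the minimality of the encoded shift $\cal A$, together with a more careful coupling of macro-tile positions to the encoded configuration, so that the occurrence locations of each finite phase pattern are transported back to occurrence locations of an associated pattern of $\cal A$. Since those occurrence locations are dense in every configuration of a minimal $\cal A$, they are dense in every configuration of $\cal B$ as well, and the argument closes. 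Making this coupling precise is the substantive part of the proof.
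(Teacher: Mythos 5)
Your part~(b) is essentially correct and matches the intent of the paper's one-line proof. The lifting/projecting argument gives minimality of~$\cal A$, and effectiveness of~$\cal A$ is automatic once one notes that $\cal B$ is an SFT (hence effectively closed) and that a projective subdynamics with a column-constant projection is a computable operation on effectively closed shifts; the extra observation that a minimal SFT has a \emph{decidable} language is not strictly needed here, though it does no harm.

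Part~(a) correctly diagnoses the obstruction — the free ``phase'' of the hierarchical macro-tile grid relative to the embedded sequence — but the proposed resolution is not a proof and does not close the gap. Your sketch asserts that minimality of $\cal A$ forces every finite phase pattern realized in one configuration of $\cal B$ to appear in every configuration, because ``occurrence locations of each finite phase pattern are transported back to occurrence locations of an associated pattern of $\cal A$.'' This transport does not exist without additional structure: as the paper explicitly warns, the construction of Theorem~\ref{thm-main}(a) applied verbatim to a minimal $\cal A$ produces an SFT that is merely quasiperiodic, and even a periodic $\cal A$ already yields a non-minimal $\cal B$. The reason is that nothing in the self-simulating tile set couples the translate of the embedded $\cal A$-configuration to the translate of the macro-tile grid, so the residue class (modulo $L_k$) at which a factor $w$ of the embedded sequence sits is an extra degree of freedom not controlled by $\cal A$. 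The paper's actual fix is a new ingredient you do not have: it fixes a \emph{computable canonical configuration} $\mathbf{x}\in\cal A$ (using that the language of an effective minimal shift is decidable), declares one ``standardly aligned'' macro-tile grid, and then adds to the computation zone of every macro-tile a \emph{computable} check that the delegated chunk of the embedded sequence appears at a residue $m \bmod L_k$ at which it can also occur in the standard embedding of $\mathbf{x}$. The decidability of this positional constraint is exactly Lemma~\ref{lemma-minimal-times-periodic}(b), and the fact that the resulting restriction still admits tilings and makes them minimal is Lemma~\ref{lemma-minimal-times-periodic}(a) together with Corollary~\ref{corollary-clones-with-embedded-bits}. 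Without this canonical-alignment constraint and the accompanying combinatorial lemma, the phase degree of freedom survives and minimality fails, so the coupling you defer as ``the substantive part of the proof'' is precisely the part that is missing.
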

\begin{rmrk}
The \emph{only if} part of this theorem (Theorem~\ref{thm-main-min}(b)) is due to
the fact that the notion of \emph{simulation} in Definition~\ref{def:subdynamics} is rather restrictive. 
We should keep in mind that in general a $d$-dimensional subdynamics of a minimal $\mathbb{Z}^{d+1}$-shift is not necessarily minimal.
\end{rmrk}

Theorem~\ref{thm-main} implies the following rather surprising corollary: a quasiperiodic $\mathbb{Z}^2$-SFT can have a highly ``complex'' language of patterns:
\begin{corollary}\label{thm-kolmogorov}
There exists a quasiperiodic SFT $\cal A$ of dimension $2$ such that
the Kolmogorov complexity of every $(N\times N)$-pattern in every configuration of $\cal A$ is $\Omega(N)$.
\end{corollary}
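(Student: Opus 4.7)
The plan is to derive the corollary from Theorem~\ref{thm-main}(a) by lifting a one-dimensional effective quasiperiodic shift of high Kolmogorov complexity. First, I would produce an effective quasiperiodic $\mathbb{Z}^1$-shift $\mathcal{A}_0$ over a finite alphabet $\Sigma_0$ with the property that every length-$N$ factor appearing in some configuration has Kolmogorov complexity $\Omega(N)$. Such a shift can be obtained by an effective quasiperiodization of a Levin-style uniformly recurrent sequence of linear complexity, or, following the spirit of the self-simulating fixed-point construction of~\cite{drs}, by directly prescribing a computably enumerable set of forbidden words that simultaneously enforces a recurrence schedule and embeds, in every sufficiently long factor, a block of incompressible bits of size linear in the factor length.

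Once $\mathcal{A}_0$ is in place, I would invoke Theorem~\ref{thm-main}(a) to obtain a quasiperiodic SFT $\mathcal{A}$ on $\mathbb{Z}^2$ together with a projection $\pi:\Sigma_\mathcal{A}\to\Sigma_0$ which is constant along each vertical column and whose horizontal sequence of column-values belongs to $\mathcal{A}_0$. For any $N\times N$ pattern $P$ appearing in a configuration of $\mathcal{A}$, applying $\pi$ to any single row of $P$ yields a length-$N$ factor $w$ of some configuration of $\mathcal{A}_0$. Since $\pi$ is a fixed computable map, $K(w)\le K(P)+O(1)$, and the linear lower bound for $\mathcal{A}_0$ then forces $K(P)=\Omega(N)$, which is exactly the statement of the corollary.

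The crux of the argument lies in the construction of $\mathcal{A}_0$. A single Levin-style quasiperiodic sequence of linear complexity exists by a direct Kolmogorov-complexity argument, but its orbit closure is not automatically effectively closed, whereas we need an effective shift \emph{all} of whose configurations enjoy the linear-complexity property uniformly. I expect the hard part to be harmonising three a priori conflicting requirements on the same forbidden set: effectiveness (it must be computably enumerable), quasiperiodicity (every admissible word must recur with a uniform gap in every configuration), and the absence of low-complexity factors. I anticipate that a one-dimensional adaptation of the fixed-point self-simulating tilings of~\cite{drs} does the job: a hierarchical structure of nested blocks can be used to synchronise the recurrence of all finite patterns while reserving, at every scale, free room to inject Kolmogorov-random data of linear size, after which the argument above goes through.
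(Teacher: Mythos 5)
Your proposal matches the paper's proof exactly in structure: reduce to a one-dimensional effective quasiperiodic shift all of whose factors have Kolmogorov complexity $\Omega(n)$, then lift it to a quasiperiodic $\mathbb{Z}^2$-SFT via Theorem~\ref{thm-main}(a) and observe that projecting a row of any $N\times N$ pattern through the fixed computable map $\pi$ yields a length-$N$ factor of the embedded $1$D configuration, forcing $K(P)=\Omega(N)$. The one ingredient you flag as the hard part --- an effective quasiperiodic $\mathbb{Z}$-shift with uniformly linear factor complexity --- is in fact a known result that the paper simply cites from Rumyantsev and Ushakov~\cite{rumyantsev-ushakov}, so no fresh fixed-point construction is needed there.
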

In other words, a quasiperiodic $\mathbb{Z}^2$-SFT can have extremely ``complex'' languages of patterns.

\begin{rmrk}A standalone pattern of size $N\times N$ over an alphabet $\Sigma$ (with at least two letters) can have Kolmogorov
complexity up to $\Theta(N^2)$. However, this density of information cannot be enforced by local rules, because in every SFT on $\mathbb{Z}^2$ there exists
a configuration such that the Kolmogorov complexity of all $N\times N$-patterns is bounded by $O(N)$, see \cite{dls}. Thus, the lower bound $\Omega(N)$
 in Corollary~\ref{thm-kolmogorov} is optimal in the class of all SFT on $\mathbb{Z}^2$.
\end{rmrk}

\begin{rmrk}
Every effective (effectively closed) \emph{minimal} shift is computable: given a pattern, we can algorithmically decide whether it belongs to the configurations of the shift, \cite{hochman}, 
which is in general not the case for effective quasiperiodic shift. 
On the other hand, it is known that patterns of high Kolmogorov complexity cannot be found algorithmically. Thus Corollary~\ref{thm-kolmogorov} cannot be extended to the class of minimal SFT.
\end{rmrk}

\subsection{General remarks and organization of this paper}

In the theorems stated above, we claim something about (quasiperiodic, minimal, transitive) SFTs. In the proofs we deal mostly with tilings, which are a very special type of SFT. Since the principal results of the paper are positive statements (we claim that SFTs with some specific properties do exist), the focus on tilings does not restrict the generality. On the other hand, the formalism of Wang tiles perfectly matches  the constructions of self-similar and self-simulating shifts of finite type, which are the main technique of this paper. To simplify the notation and make the argument more visual, in what follows we focus on the case $d=2$. The proofs extend to any $d>1$ in a straightforward way, \emph{mutatis mutandis}.

The central idea of our arguments is the notion of \emph{self-simulation}, which goes back to 
\cite{gacs-self-similar-automata} and which was extensively developed in the context of symbolic dynamics in \cite{drs}. The technique of hierarchical self-simulating tilings is quite generic and flexible. The drawback of this approach is that it is hard to isolate the core technique from features specific to some particular application. In every specific result we cannot just cite the \emph{statement} of a previously known theorem about self-simulating tilings: rather, we need to reemploy the \emph{constructions} from the proofs of a previously known theorem (embedding some new gadgets in the previously known general scheme). 
This makes the proofs long and somewhat cumbersome. To help the reader, we have tried to make this paper self-contained and explain here the entire machinery of hierarchical self-simulating tilings. 
The exposition of the main results becomes itself ``hierarchical'' and ``self-similar.'' We start with a general perspective of our technique (illustrating it by proofs of several classic results). Then in each succeeding section, we show how to adjust and extend this general construction to obtain this or that new result.
We encourage the reader familiar with the concept of self-similar tilings to skip Section~2, which may look oversimplified, and go directly to Sections~3--7.

\section{The general framework of self-simulating SFT}

In this section we recall the principal ingredients of the technique of \emph{self-simulating tile sets}. We start this section with a very basic version of a construction of self-simulating tile sets from \cite{drs}. This part of the construction will be enough, in particular, to obtain a proof of the classic Theorem~\ref{thm-berger}. Later, we will extend this construction and adapt it to prove much stronger statements.

\subsection{The relation of simulation for tile sets}\label{ss:2-1}

Let $\tau$ be a tile set and $N>1$ be an integer. We call a \emph{macro-tile} an $N \times N $ square tiled by matching tiles from $\tau$. Every side of a $\tau$-macro-tile contains a sequence of $N$ colors (of tiles from $\tau$); we refer to this sequence as a \emph{macro-color}. \label{def-macro-color}
Further, let $T$ be some set of $\tau$-macro-tiles (of size $N\times N$). We say that $\tau$ \emph{implements} $T$ with a \emph{zoom factor} $N$
if 
\begin{itemize}
\item some $\tau$-tilings exist, and 
\item for every $\tau$-tiling there exists a unique lattice of vertical and horizontal lines that cuts this tiling into $N\times N$ macro-tiles from $T$. 
\end{itemize}
A tile set $\tau$ \emph{simulates} another tile set $\rho$ if $\tau$ implements a set of macro-tiles $T$ (with a zoom factor $N>1$) that is isomorphic to $\rho$, i.e., there exists a one-to-one correspondence between $\rho$ and $T$ such that the matching pairs of $\rho$-tiles correspond exactly to the matching pairs of $T$-macro-tiles.  A tile set $\tau$ is called \emph{self-similar} if it simulates itself.

If a tile set $\tau$ is self-similar, then all $\tau$-tilings have a hierarchical structure. Indeed, each $\tau$-tiling can be uniquely split into $N\times N$ macro-tiles from a set $T$, and these macro-tiles are isomorphic to the initial tile set $\tau$.  Further, the grid of macro-tiles can be uniquely grouped into blocks of size $N^2\times N^2$, where each block is a macro-tile of rank $2$ (again, the set of all macro-tiles of rank $2$ is isomorphic to the initial tile set $\tau$), etc. It is not hard to deduce  that a self-similar tile set $\tau$ has only aperiodic tilings (for more details, see  \cite{drs}).  Below, we discuss a generic construction of self-similar tile sets.

\subsection{Simulating a tile set defined by a Turing machine}
\label{section:2.2}
 
Let us have a tile set $\rho$. In what follows, we present a general construction that allows to simulate  $\rho$ by some other tile set $\tau$, with a large enough zoom factor $N$. The number of tiles in the simulating tile set $\tau$ will be $O(N^2)$, and the constant in the $O(\cdot)$-notation does not depend on the simulated $\rho$.

We assume that  each color is a string of $k$ bits (i.e., the set of colors $C \subset \{0,1\}^k$) and the set of tiles $\rho \subset C^4$ is presented by a predicate $P(c_1,c_2,c_3,c_4)$ (the predicate is true if and only if the quadruple $(c_1,c_2,c_3,c_4)$ corresponds to a tile from $\rho$). Suppose we have a Turing machine $\cal M$ that computes $P$. (It might look wasteful to construct a Turing machine that computes a predicate with a finite domain, but we will see that this kind of abstraction is useful.)
Now we  construct in parallel a tile set $\tau$ and a set of $\tau$-macro-tiles that simulate the given~$\rho$.

When constructing a tile set $\tau$, we will keep in mind the desired structure  of $\tau$-macro-tiles (that should simulate the given tile set $\rho$).
We require that  each tile in $\tau$ ``knows'' its coordinates modulo $N$ in the tiling. This information is included in the tile's colors. More precisely, for a tile that is supposed to have  coordinates $(i,j)$ modulo $N$, the colors on the left and on the bottom sides should involve $(i,j)$, the color on the right side should involve $(i+1\mod N, j)$, and the color on the top side involve $(i, j+1 \mod N)$, see Fig.~\ref{fig-0}.
\begin{figure}
\begin{center}
\includegraphics[scale=0.7]{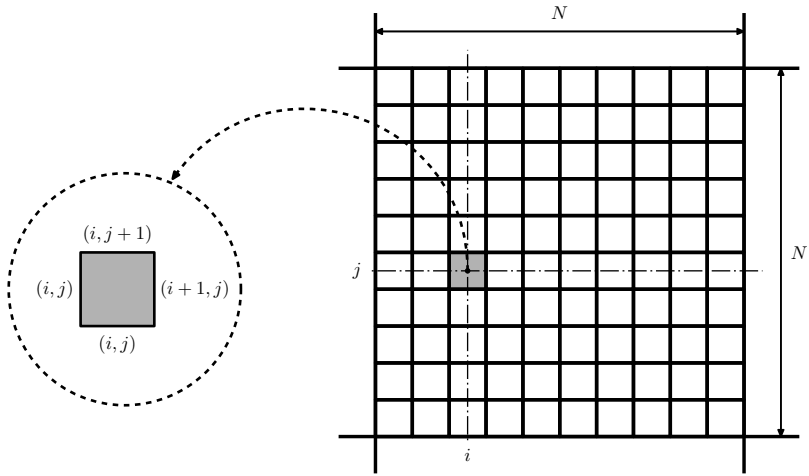}
\caption{The basic structure of a macro-tile: A block of size $N\times N$ consists of tiles,  the coordinates of which in this block 
(a pair of integers between $0$ and $N-1$) are written as ``colors'' of the left and the bottom side of each tile. On the right and on the bottom sides
of every tile, one of the coordinates is incremented (modulo $N$).}\label{fig-0}
\end{center}
 \end{figure}

This means that every $\tau$-tiling can be uniquely split into blocks (macro-tiles) of size $N\times N$, where the coordinates of the cells range from $(0,0)$ in the bottom-left corner to $(N-1,N-1)$ in top-right corner, as shown in Fig.~\ref{fig-0}. Intuitively, each tile ``knows'' its position in the corresponding macro-tile.
We will require that in addition to the coordinates, each tile in $\tau$  has some supplementary information   encoded in the colors on its sides.  
On the border of a macro-tile (where one of the coordinates of a cell is zero), we assign  to the colors of each cell one additional bit of information.
Thus, for each macro-tile of size $N\times N$ the corresponding macro-colors (in the sense of the definition in Section~\ref{def-macro-color},  p.~\pageref{def-macro-color}) can be  represented as strings of $N$ zeros and ones.  

The number of bits encoding a macro-color (an $N$-bit string representing a macro-color of a macro-tile of size $N\times N$) is excessively large for our future constructions. We choose an integer number $k$  ($k\ll N$) and allocate   in the middle of a macro-tile's sides $k$ positions;  we make them represent colors from $C$. The other $(N-k)$ bits on the sides of a macro-tile are ``dummy'' (formally speaking, we set them all to zero),
see Fig.~\ref{fig-macrocolors}.

\begin{figure}
\begin{center}
\includegraphics[scale=0.7]{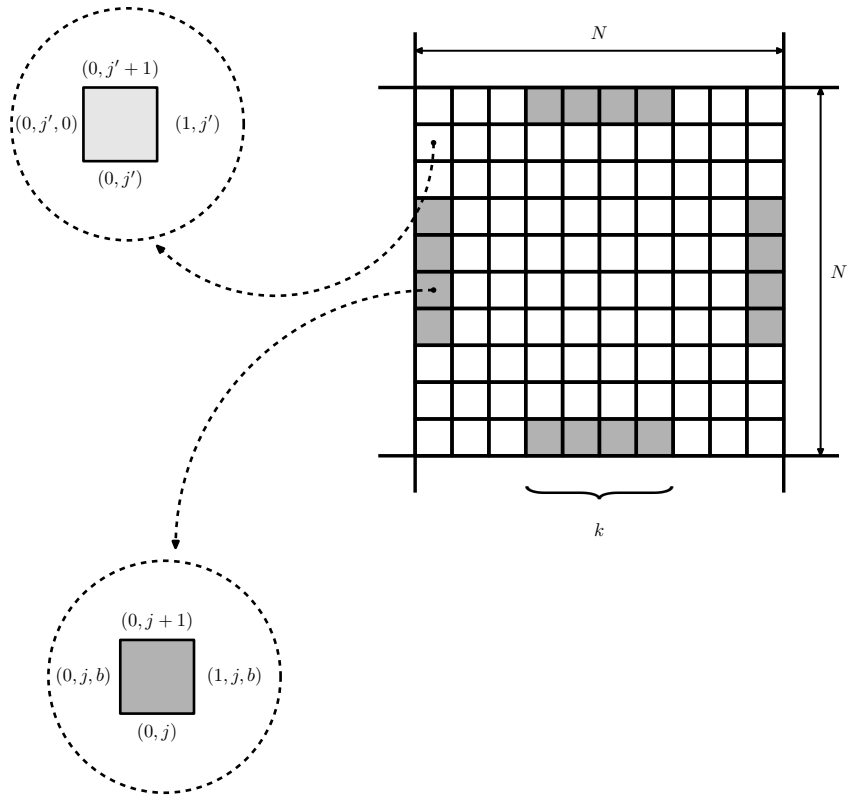}
\caption{We require that  tiles on the margins of an $N\times N$ macro-tile  carry one supplementary bit.
For example, the $j$-th tile on the left margin of a macro-tile contains in the color of its left side a triple $(0,j,b)$, 
where $(0,j)$ represent the coordinate of this tile in the macro-tile,  and the bit $b$ may be equal to $0$ or $1$.
Thus, each macro-color can be now represented by a sequence of $N$ bits embedded in the sides of the tiles on the
corresponding margin of a macro-tile. \newline %
\protect\rule{2em}{0ex} 
We assume that these supplementary bits are nontrivial only 
for the $k$ tiles in the middle of each macro-tile's margin. In the figure, these tiles are shown in gray, e.g., the $j$-th tile on the left
side of a macro-tile carries a bit $b$, which may be equal to $0$ or $1$. The other  tiles on macro-tile's margin  (e.g., the $j'$-th
tile on the left side of a macro-tile) do not contribute in the macro-color  --- their supplementary bits are always set to $0$. 
Thus, each macro-color is actually determined by a sequence of only $k$ bits.
\newline %
\protect\rule{2em}{0ex} 
The bits representing the macro-colors are ``propagated'' in the macro-tile. In what follows we discuss how this information is ``processed'' inside.
\newline %
\protect\rule{2em}{0ex} 
Note that the value of $k$ in our construction is usually \emph{much} less than $N$, e.g., $k=O(\log N)$.}
\label{fig-macrocolors}
\end{center}
 \end{figure}
  
 Now we introduce additional restrictions on the tiles in $\tau$ that will guarantee that the macro-colors on the macro-tiles satisfy the ``simulated'' relation $P$. To this end, we ensure that bits from the macro-tile side are transferred to the central part of the tile, and the central part of a macro-tile is used to simulate  a  computation of the predicate $P$.
We fix which cells in a macro-tile are ``communication wires'' 
and then require that these tiles carry the same (transferred) bit on two sides, see Fig.~\ref{fig-1-wires}. 
\begin{figure}
\begin{center}
\includegraphics[scale=0.30]{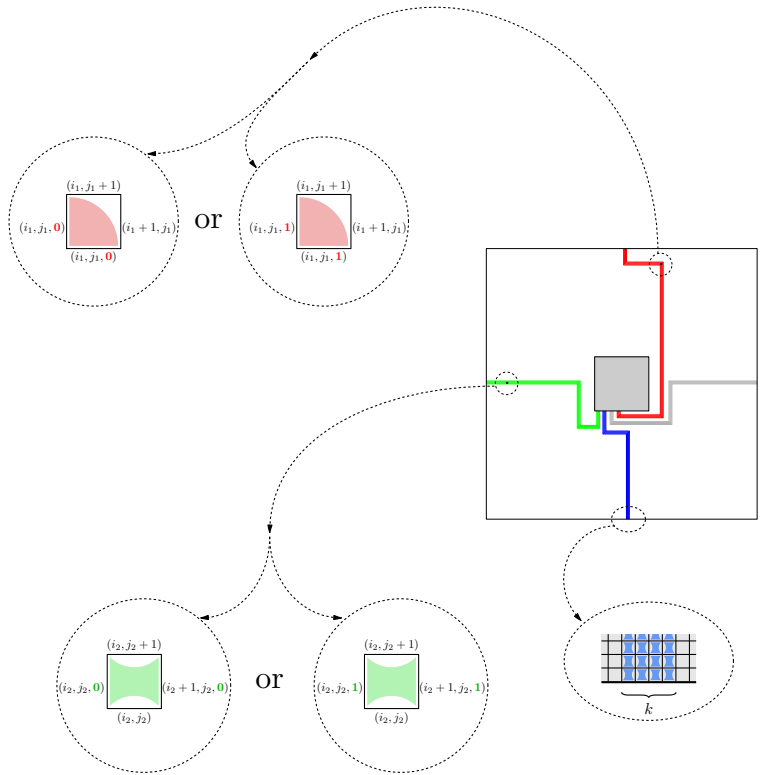}
\caption{In the middle of each margin of a macro-tile we allocate $k$ tiles that contain the bits representing a macro-color. These tiles are plugged
into ``communication wires'' that transfer these data inside of a macro-tile.
The value of $k$ (which is the width of the red, green, blue, and gray stripes in the figure) is much less than the size of the macro-tile.
\newline %
\protect\rule{2em}{0ex} 
In the figure,  we zoom in  two tiles involved in the ``communication wires.''
The tile with coordinates $(i_1,j_1)$ is a part of one of the communication wires   shown in red, which transfer the bits of the top macro-color.
As a part of a communication cable, this tile  transfers a bit value \emph{zero} or \emph{one}. This tile is placed on a corner  of a wire,  so  it ``conducts'' a bit value  from its left side  to its bottom side (the bit values  embedded in its left and bottom sides must be both zeros or both ones).
Similarly, the tile with coordinates $(i_2,j_2)$ is a part of a  communication wires shown in green, which transfer the bits of the left macro-color. This tile is involved in a horizontal part of a wire, and it ``conducts'' a bit value  \emph{zero} or \emph{one} from the left to the right
(so the bit values embedded in its left and right sides must be equal to each other). 
\newline %
\protect\rule{2em}{0ex} 
The coordinates and the values of the transferred bits are embedded in the ``colors'' on the sides of the tiles. 
The colors inside individual tiles (red, green, blue, gray in this figure) are only illustrative and show the functional role of each tile in a macro-tile.
}\label{fig-1-wires}
\end{center}
 \end{figure}

The central part of a macro-tile (of size, say, $m\times m$, where $m\ll N$) should represent a space-time diagram of the machine $\cal M$ (the tape is horizontal, and time goes up), see Fig.~\ref{fig-macrotile-tm}.
 \begin{figure}
\begin{center}
\includegraphics[scale=0.3]{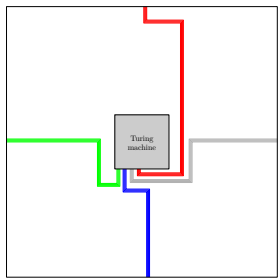}
\caption{A macro-tile with a space-time diagram of a Turing machine in the middle part.}\label{fig-macrotile-tm}
\end{center}
 \end{figure}
This Turing machine processes the quadruple of inputs, which are  the $k$-bit strings representing the macro-colors of this macro-tile. 

Let us explain in more detail how we represent the computation of a Turing machine in a tiling.
First of all, we assume that the machine has a single tape. We  understand the \emph{space-time diagram} of a Turing machine in a pretty standard way,
as a table where each vertical column corresponds to one cell on the tape of the machine, and  
each horizontal row of the diagram represents an instance of a configuration of the Turing machine. 
For each row of the diagram, we 
\begin{itemize}
\item  specify  for each cell (within a bounded part of the tape) the letter written in this position on the tape, 
\item specify with a special mark the position of the read head, and 
\item write the index of the internal state of the machine into the cell where the read head is currently located.  
\end{itemize}
Each next row of the diagram represents the  configuration of the machine at the next step of computation (once again, we assume that time goes up).
Thus, the entire diagram is  determined by its bottom line (with the input data of the machine).

The property of being a valid space-time diagram is defined locally,  so we can easily represent such diagram of a given Turing machine by local matching rules for tiles. The details of this representation are not very important in the sequel; we may take, for example, the representation described in \cite{allauzen-durand}. In what follows, we need only keep in mind some natural properties of the chosen representation
(which trivially hold for the  representations from \cite{allauzen-durand}):
\begin{itemize}
\item a correct tiling of a frame $m\times m$ represents a space-time diagram of  the same\footnote{In fact, it is enough to assume that a tiling of size $m\times m$ represents a space-time diagram of size  $\Omega(m) \times \Omega(m)$.
}
size $m\times m $,
\item a  correct tiling of a frame $m\times m$ with some specific bottom line can be formed  if and only if the computation (with the corresponding input data) terminates in an accepting state  in  at most $m$ steps and during this computation the read head never leaves the available finite part of the tape,
\item every $(k\times k)$-fragment\footnote{In what follows we use only a restricted version of this property. We need to be able to reconstruct every $(2\times2)$-block of tiles
given the $12$ tiles around this block, see Fig.~\ref{fig-slots-detailed} below.}   
of a correct tiling can be reconstructed by  its  borderline 
(this is the only property where we  need the Turing machine to be deterministic).\label{remark-on-determ-turing-machine-diagram}
\end{itemize}
The communication of the ``computation zone'' (the area representing a space-time diagram of a Turing machine) with the ``outside world'' is restricted to the bottom line (the input data of the computation), which must cohere with the bits representing the four macro-colors of the macro-tile. 
 
  To make all of this  construction work, the size of a macro-tile (the integer $N$) should be large enough: first, we need enough room to place the ``communication wires'' that transfer the bits of macro-colors to the ``computation zone''; second, we need enough time  and space  in the computation zone of size $m\times m$ so that all accepting computations of $\cal M$ terminate in time $m$ and on space $m$.
 
In this construction, the number of additional  bits encoded in the colors of the tiles
depends on the choice of machine $\cal M$. To avoid this dependency, we replace $\cal M$ by a fixed universal Turing machine $\cal U$ that runs a program simulating $\cal M$. Moreover, we prefer to separate the general program of a Turing machine (that involves a description of the predicate $P$ corresponding to the simulated tile set $\rho$) from the zoom factor $N$.

Technically, we assume that the tape of the universal Turing machine\footnote{%
We assume that the reader is familiar with the basic notions of computability theory. For a detailed discussion of the notion of a 
universal computable function and universal Turing machines we refer the reader to the textbooks \cite[Section~II.1]{odifreddi}
and references therein, and to \cite[Section~2.1]{shen-vereshchagin-book}.%
}
 has an additional read-only layer. Each cell of this layer carries a bit that never  changes during the computation (so in the computation zone the columns carry unchanged bits). The construction of a tile set  guarantees that these bits form two \emph{read-only input fields}: (i)~the program for $\cal M$ and (ii)~the binary expansion of an integer $N$ (which is interpreted as the zoom factor). Accordingly, the computation zone of a macro-tile represents a view of an accepting computation for that program given $N$ as one of the input, see 
Fig.~\ref{fig-macrotile-utm-n}.
\begin{figure}
\centering
\includegraphics[scale=0.62]{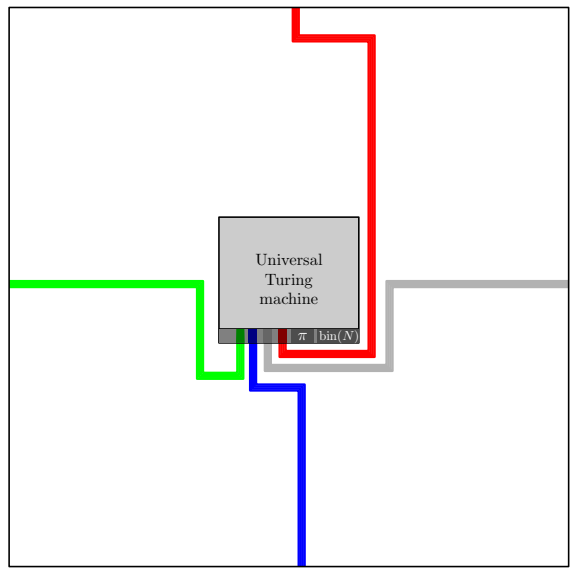}
\caption{The computation zone represents a space-time diagram of the universal Turing machine. This machine simulates program $\pi$, which gets as input the binary codes of  four macro-colors, the binary expansion  of the zoom factor $N$, and  its own text.}\label{fig-macrotile-utm-n}
\end{figure}

Thus, from now on we assume that the simulated program is given five inputs: the binary codes of four macro-colors (that are `transferred' from the sides of this macro-tile) and the binary expansion of an integer $N$ (interpreted as the zoom factor).

Without loss of generality,  we assume that the positions of the ``wires'' and the size of the ``computation zone'' in a macro-tile are chosen in some simple and natural way, and can be effectively computed given the size of the macro-tile $N$. Moreover, we may assume that the ``geometry'' of a macro-tile (the positions of the communication wires and of the computation zone) can be computed in polynomial time. That is, given the binary expansions of $N, i, j$, we can compute in time $\poly(\log N)$ the `role' played by a tile with coordinates $(i,j)$ in a macro-tile of size $N\times N$ (whether this tile is a part of a wire, or a computation zone, or none of these, and if it belongs to the computation zone, then what bits of the \emph{read-only input fields} it should carry).

In this way we obtain an \emph{explicit} construction of a tile set $\tau$ that has $O(N^2)$ tiles and simulates $\rho$. This construction works for all large enough $N$. Note that most steps of the construction of  $\tau$ do not depend on the program for $\cal M$:
The tile set $\tau$ does depend on the program simulated in the computation zone and on the choice of the zoom factor $N$.  However, this dependency is very limited. The simulated program (and, implicitly, the predicate $P$) affects only the rules for the tiles used in the bottom line of the computation zone. The colors on the sides of all other tiles are generic and do not depend on the simulated tile set~$\rho$.

The \emph{explicitness} of the described construction can be understood quite formally as follows: there exists an algorithm that takes 
the binary expansions of $N$, $k$, $m$ and a program for $\cal M$ as an input,  and returns the list of tiles in the tile set $\tau$ described above. 
The algorithm halts with an appropriate warning if $N$, $k$, or $m$ is too small for our construction. 
Moreover, for every quadruple of colors we can verify in polynomial
time (in time $\poly(\log N)$) whether this quadruple forms a tile in $\tau$ or not.

\subsection{Self-simulation with Kleene's recursion technique}

In the previous section 
we explained  how to simulate any given tile set $\rho$ by another tile set $\tau$ with a large enough zoom factor $N$. Now we want to find 
a $\rho$ such that the corresponding simulating  tile set $\tau$ is isomorphic to the simulated $\rho$. We achieve this by using an idea that comes from the proof of Kleene's recursion theorem. Roughly speaking, we employ the idea that a program can somehow \emph{access its own text} and use its bits in the computation. 
In most textbooks, the proof of Kleene's recursion theorems involves the so-called s-m-n theorem, 
which explains how an algorithm can process the source codes of other programs (given as an input), see e.g.
 \cite{odifreddi}. For a more informal discussion of the ideas behind  Kleene's recursion theorems we recommend \cite{shen-vereshchagin-book}.%

Our goal is to construct a tile set $\tau$ that simulates itself.
At the end of the previous section we observed that our construction of a simulating tile set $\tau$ has a very limited dependency on $\cal M$ 
(and, therefore, on the simulated tile set $\rho$).  
Let us fix a triple of parameters $k,m,N$; we apply the  construction from the previous section and produce the list of all tiles from the  tile set $\tau$
that simulates some tile set $\rho$ (the simulated tiles may use at most $2^k$ colors; $N$ is the zoom factor and $m$ is the size of 
the computation zone in $\tau$-macro-tiles). Since the simulated $\rho$ is not fixed yet, we cannot produce the complete list of tiles in $\tau$;
we obtain only those tiles that  \emph{do not depend on the simulated tile set},
i.e., all tiles except for the tiles that appear in the bottom line of the computation zone. In what follows we choose the missing tiles
in such a way that the resulting $\tau$ simulates itself.

To complete the construction,  we will add to $\tau$ a few other tiles  (exactly those tiles that appear in the bottom line of the computation zone). 
This construction will work for $k = 2 \log N + O(1)$ (so that we can encode $O(N^2)$ colors in strings of $k$ bits) and
$m = \poly(\log N)$ (so that we can put in the computation zone a space-time diagram of a polynomial-time computation of the Universal Turing machine).

Though the definition of $\tau$ is not finished, we already know that every valid $\tau$-tiling (if such a tiling exists) consists of an $N\times N$-grid
of macro-tiles, with $k$-bit macro-colors embedded in their sides, with communication wires and a computation zone, as shown in Fig.~\ref{fig-macrotile-utm-n}.
We want to construct a program $\pi$ that takes as an input the integers  $k$, $m$,  $N$,  and four $k$-bit strings  embodying
macro-colors, and checks that a macro-tile with the given macro-colors represents one of the  tiles in the would-be $\tau$.
Then we  embed this program in the tile set $\tau$ and complete the construction.

Now let us focus on the program $\pi$ that should perform the necessary checks 
(these checks should be simulated in the computation zone of every macro-tile).
For the macro-tiles that represent the tiles that are already included in $\tau$, the required checks  are straightforward (see the comment
at the end of the previous section). It remains to implement (and encode in the program $\pi$)  
the checks for the tiles of $\tau$ that are not defined yet. 

Let us remind that the tiles in $\tau$ that are still missing are exactly the tiles that should represent the hardwired program. 
This might seem as a paradox: 
we have to write a text of a program that handles the tiles where this program will be hardwired. On the one hand,  we need to know these tiles
to write a program; on the other hand,  we need to know the text of the program to produce the tiles.
However, we can complete the description of the program $\pi$ without knowing the missing tiles of $\tau$.
Since in our construction of a macro-tile,  the would-be program $\pi$ 
(the list of instructions interpreted by the universal Turing machine) is written on the tape of the universal machine,
 this program can be instructed to access the bits of its own ``text''  and check that if a macro-tile plays a role of a tile in the computation zone, 
 then  this macro-tile carries the correct bit of the program.
 
This is the crucial point of the construction. The algorithm implemented in $\pi$ can be explained as follows. 
The algorithm obtains as an input the bits encoding the four macro-colors of the macro-tile. 
We suppose that a macro-tile represents a tile from $\tau$, and we know that each $\tau$-tile contains a pair of coordinate $(i,j)$ modulo $N$.
We can extract these coordinates $(i,j)$ from the macro-colors of the macro-tile. If this position does \emph{not} belong to the 
bottom line of the computation zone of a macro-tile, then our task is simple:  we use the algorithm outlines at the end of the previous section.
But if the position $(i,j)$ corresponds to the bottom line in the computational zone, then the task is subtle:
in this case the $\tau$-tile represented by this macro-tile must involve some bit from the text of $\pi$
(and it should be encoded in the color on the top side of the macro-tile). 
How to check that the corresponding bit embedded in the macro-color is correct? Where do we get the ``correct'' value of this bit? 
The answer is straightforward:  the Universal Turing machine simulating $\pi$ should move its reading head to the column $j$ 
in its own computation zone and read there the required bit value. Note that there is
no chicken-and-egg paradox,  we can write the instructions for $\pi$ before we know its full text.

 Thus, we obtain the complete text (list of instructions) of the program $\pi$. This is exactly the program that should be simulated by the Universal Turing
 machine in the computational zone of each macro-tile, so it is the program whose text must be written on the bottom line of the computation zone. 
 This program provides us with the missing part of the tile set $\tau$ 
 (we supplement the tile set $\tau$ with the tiles that represent  in the bottom line of the computation zone the text of $\pi$).

It remains to choose the parameters $N$ and $m$. We need them to be large enough so that the computation described above (which deals with inputs of size $O(\log N)$) can fit in the computation zone. The computations are rather simple (polynomial in the input size, i.e., polynomial in $O(\log N))$, so they  fit in the space and time bounded by $m=\poly(\log N)$.  Thus, we set $m(N)=\poly(\log N)$ for some specific polynomial that is not too small (e.g., $m:=(\log N)^3$ is enough) and choose $N$ large enough so that $m(N)\ll N$, and the geometry of a macro-tile shown on Fig.~\ref{fig-macrotile-utm-n} 
can be realized. This completes the construction of a self-similar aperiodic tile set.
Now, it is not hard to verify that the constructed tile set (i)~allows a tiling of the plane, and (ii)~is self-similar.

The construction described above works well for all large enough zoom factors $N$. In other words, for all large enough $N$ we get a self-similar tile set $\tau_N$, and the tilings for all these $\tau_N$ have very similar structure, with macro-tiles as shown in Fig.~\ref{fig-macrotile-utm-n}. 
Technically, the program $\pi$ (simulated by the universal Turing machine) now takes as its input a tuple of six strings of bits: the bit strings of length $k=k(N)$ representing the four macro-colors of a macro-tile,  the binary expansion of the zoom factor $N$, and  its own text.   This program  checks whether the given strings are coherent, i.e., whether the given quadruple of macro-colors in fact represents a quadruple of colors of one tile in our self-similar tile set $\tau_N$
(corresponding to the given value $N$ of the zoom factor).

The presented construction of a self-simulating tile set provides a proof of Theorem~\ref{thm-berger}, see a comment at the end of Section~\ref{ss:2-1}.
Indeed, if a tile set $\tau$ simulates itself with a zoom factor $N$, then by definition every $\tau$ tiling can be uniquely split into $N\times N$ macro-tiles.
Since these macro-tiles are isomorphic to the tiles of $\tau$, the grid of macro-tiles can be  uniquely split into macro-macro-tiles of size $N^2\times N^2$.
The macro-macro-tiles are obviously also isomorphic to the $\tau$-tiles, so they also can be batched in macro-tiles of higher rank, and so on.
We obtain a hierarchical structure of macro-tiles of rank $k=1,2,\ldots$, see Fig.~\ref{fig-self-similar-macrotile}.

For  discussion of the technique of self-simulating tilings and a motivation behind it we refer the reader to \cite{drs}.
 In what follows, we extend and generalize this construction step by step,  and  then apply it to prove much stronger statements.

\begin{figure}
\centering
\includegraphics[scale=0.52]{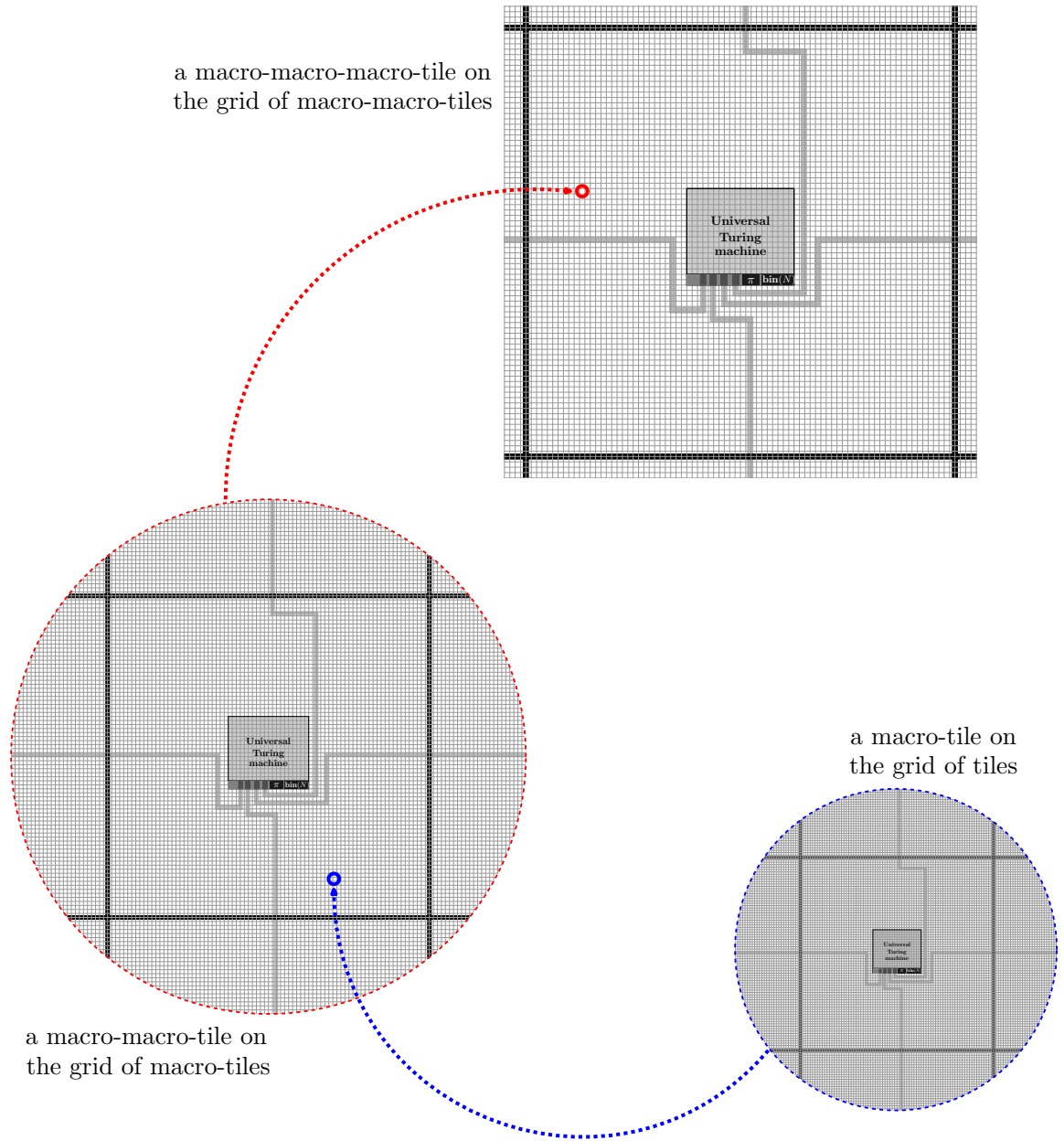} 
\caption{Hierarchical structure of macro-tiles. The level-$k$ macro-tiles are blocks in the level-$(k+1)$ macro-tiles, 
 the level-$(k+1)$ macro-tiles are blocks in level-$(k+2)$ macro-tiles, etc.
On all levels of the hierarchy, the structure of the macro-tiles is pretty much the same.}\label{fig-self-similar-macrotile}
\end{figure}

\smallskip

\emph{Notation.}
We now introduce some useful terminology. In a hierarchical structure of macro-tiles, if a level-$k$ macro-tile $M$ is a cell in a level-$(k+1)$ macro-tile $M'$, we refer to  $M'$ as the \emph{father} of $M$.
We refer to the level-$(k+1)$ macro-tiles neighboring $M'$ as the \emph{uncles} of $M$.

\subsection{A more flexible construction: The choice of the zoom factor}\label{ss-variable-zoom}

For a large class of sufficiently ``well-behaved'' sequences of integers $N_k$, we can construct a family of tile sets $\tau_k$ ($i=0,1,\ldots$)  such that each $\tau_{k-1}$ simulates the next $\tau_{k}$ with the zoom factor $N_k$ (and, therefore, $\tau_0$ simulates each $\tau_k$ with zoom factor $L_k = N_1\cdot N_2 \cdots N_{k}$).

The idea is to reuse the basic construction from the previous section and vary the sizes of the macro-tiles (the zoom factors) on the different levels of the hierarchy. While in the basic construction the macro-tiles (built of $N\times N$ tiles), the macro-macro-tiles (built of $N\times N$ macro-tiles), 
the macro-macro-macro-tiles (built of $N\times N$ macro-macro-tiles), and so on, behave in exactly the same way, in the revised construction the behavior of  a level-$k$ macro-tile depends on $k$. We want to have macro-tiles built of $N_1\times N_1$ ground-level  tiles, macro-macro-tiles built of $N_2\times N_2$ macro-tiles, macro-macro-macro-tiles built of $N_3\times N_3$ macro-macro-tiles, and so on. In this construction, the level-$k$ macro-tiles will  be isomorphic to the tiles of $\tau_k$, and the idea of ``self-simulation'' should be understood less literally.

To implement this idea, we need only a minor revision of the construction from the previous section.
Similar to our basic self-simulation construction, each tile of $\tau_k$  ``knows'' its coordinates modulo $N_k$ in the tiling:  the colors on the left and on the bottom sides should involve $(i,j)$, the color on the right side should involve $(i+1\mod N_k, j)$, and the color on the top side involves $(i, j+1 \mod N_k)$.
Consequently, every $\tau_k$-tiling can be uniquely split into blocks (macro-tiles) of size $N_k\times N_k$, where the coordinates of the cells range from $(0,0)$ in the bottom-left corner to $(N_k-1,N_k-1)$ in the top-right corner, similarly to Fig.~\ref{fig-0}. Again, intuitively, each macro-tile of level $k$ ``knows'' its position in the corresponding macro-tile of level $(k+1)$.  For each $k$, the  $N_k\times N_k$-macro-tile (built of tiles  $\tau_k$) should have the structure shown in Fig.~\ref{fig-macrotile-utm-n}, with communication wires, a computation zone, and auto-referential computation inside.

The difference with the basic construction is that now the computation simulated by a level-$k$ macro-tile gets, as an additional input, the value $k$, and the zoom factor $N_k$ is computed as a function of $k$. In what follows, we always assume that $N_k$ can be easily computed given the binary expansion of $k$ (say, in time $\poly(\log N_k)$).

Technically, we assume now  that the first line of the computation zone contains the following \emph{fields} of the \emph{input data}:
 \begin{itemize}
 \item[(i)] the program of a Turing machine $\pi$ that verifies whether a quadruple of macro-colors corresponds to a valid macro-color,
 \item[(ii)] the binary expansion of the integer rank $k$ of this macro-tile (the level in the hierarchy of macro-tiles),
 \item[(iii)] the bits encoding the macro-colors: each macro-color involves  the position inside the father macro-tile of rank $(k+1)$ (two coordinates modulo $N_{k+1}$) and 
 $O(1)$ bits of the supplementary information assigned to  the macro-colors.
 \end{itemize}
 Note that now the zoom factor is not provided explicitly as one of the input fields. Instead,  we have the binary expansion of $k$, so that a Turing machine can \emph{compute} the value of $N_k$, see Fig.~\ref{fig-macrotile-utm-k}.
 The difference from Fig.~\ref{fig-macrotile-utm-n} is that the computation in the macro-tile of rank $k$ gets as an input the index $k$ instead of the universal zoom factor $N$.
 \begin{figure}
\centering
\includegraphics[scale=0.62]{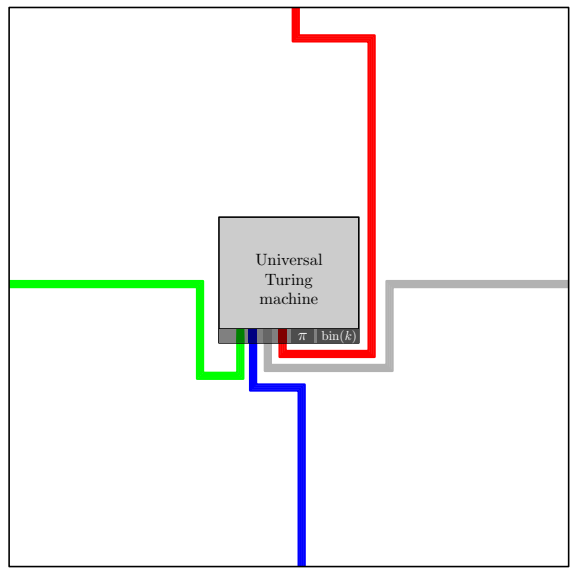}
\caption{A macro-tile of level $k$. The computation zone represents the universal Turing machine that simulates a program $\pi$, which gets as input the binary codes of  four macro-colors, the binary expansion  of the level $k$,  and the text of $\pi$ itself.}\label{fig-macrotile-utm-k}
\end{figure}
 
As before, we require that the simulated computation terminates in an accepting state if the macro-colors of the macro-tile form a valid quadruple
(if not, no correct tiling can be formed).
The simulated computation guarantees that the macro-tiles of level $k$ are isomorphic to the tiles of $\tau_{k+1}$. 

Note that on each level
$k$ of the hierarchy, we simulate in macro-tiles a computation of one and the same Turing machine $\pi$. Only the inputs for this machine
(including the binary expansion of the rank $k$) vary from level to level. 

This construction works well if $N_k$ does not grow too slowly (so that the level-$k$ macro-tiles have enough room to keep the binary expansion of $k$)
yet not too fast (so that the computation zone in the level-$k$ macro-tiles can handle elementary arithmetic operations with $N_{k+1}$).
In what follows we assume that $N_k = 3^{C^k}$ for some large enough constant $C$. 
 
The growing zoom factor $N_k$ allows embedding some \emph{payload} in the computation zone: some ``useful'' computation that has nothing to do with self-simulation but affects the properties of a tiling. 
More precisely, we require that the program $\pi$ (whose simulation by the Universal Turing machine is embedded in each macro-tile)
performs two different tasks: its primary job is to perform the checks of consistency for the four macro-colors of the corresponding  macro-tile, 
as explained above;
the secondary job is to run some specific ``useful'' algorithm $\cal A$.
In each proof based on this technique we use a particular algorithm $\cal A$ 
(which is explicitly hardwired in the program $\pi$ and, therefore, implicitly embedded in the constructed tile set).
If necessarily, this algorithm  may access as an input the  macro-colors of the corresponding macro-tile (they are available
in the computation zone). 
\emph{A priori}, the computation of $\cal A$ can be infinitely long.
We assume that in each macro-tile  the simulation of $\cal A$ is performed within the limits of the allocated space and time  
(the size of the ``computation zone''), and the simulation is aborted when the Universal Turing machine runs out these limits.
Though all macro-tiles (on all levels of the hierarchical structure) simulate one and the same algorithm $\cal A$, the available space
depends on the rank of a macro-tiles.
Since the zoom factor grows with the rank, on each subsequent level of the hierarchy of macro-tiles we can allocate 
to this secondary computation  more and more space and time.
 
 \label{disc:payload}
 
 \begin{rmrk}
 The presented construction of a self-simulating tile set is enough to prove the existence of an SFT where each configuration is  non-computable (the result known from \cite{nonrecursive1,nonrecursive2}), for details, see \cite{drs}. 
 \end{rmrk}

\section{Quasiperiodic self-simulating SFT}\label{s:quasiperiodic-sft}

In this section, we revise once again the  construction of a self-simulating tiling and enforce the property of quasi-periodicity or minimality. In particular, this construction will give a new proof of Theorem~\ref{thm-alexis-nicolas}. To implement this construction, we have to superimpose some new properties of a self-simulating tiling.
 
 \subsection{Supplementary features:  Constraints that can be imposed on the self-simulating tiling}
 
The tiles involved in our self-simulating tile set (as well as all macro-tiles of each rank) can be classified into three types:
 \begin{itemize}
 \item[(a)] the ``skeleton'' tiles, which keep no information except for their coordinates in the father macro-tile (the white area in Fig.~\ref{fig-macrotile-utm-k}; each of these tiles looks like the tile shown in Fig.~\ref{fig-0}):   these tiles work as building blocks of the hierarchical structure;
 \item[(b)] the ``communication wires,'' which transmit the bits of the macro-colors from the borderline of the macro-tile to the computation zone
 (the colored lines in Fig.~\ref{fig-macrotile-utm-k}; each of these tiles looks like the tiles shown in Fig.~\ref{fig-1-wires});
 \item[(c)] the tiles of the computation zone (intended to simulate the space-time diagram of the Universal Turing machine, the gray area in Fig.~\ref{fig-macrotile-utm-k}). 
 \end{itemize}
 Each pattern that includes only ``skeleton'' tiles (or ``skeleton'' macro-tiles of some rank $k$) reappears infinitely often in all homologous positions inside all macro-tiles of higher rank. Unfortunately, this property is not true for the patterns that involve  the ``communication zone'' or the ``communication wires.''  Thus, the basic construction of a self-simulating tiling does not imply the property of quasiperiodicity.
 To overcome this  obstacle we need  several new technical tricks.  

First of all, we impose several restrictions on our construction of a self-simulating tiling. 
These restrictions in themselves do not make the tilings quasiperiodic, but they simplify the upcoming revision of the construction.
More specifically, we  enforce the following additional properties (p1)--(p4) of a tiling, with only a minor modification of the construction.

\smallskip
\noindent
\textbf{(p1)} In our basic construction, each macro-tile contains a computation zone of size $m_k$, which is \emph{much} less than the size of the macro-tile $N_k$.  In what follows, we  need to reserve free space in a macro-tile, in order  to insert  $O(1)$ (some constant number) of copies of each $(2\times 2)$-pattern  from the computation zone (of this macro-tile), right above the computation zone. This requirement is easy to meet.
We assume that the size of a level-$k$ macro-tile (measured in blocks that are themselves macro-tiles of level $k-1$) is $N_k\times N_k$,
 and the computation zone in this macro-tile is  $m_k\times m_k$ for $m_k=\poly(\log N_k)$.
Therefore, we can reserve an area of size $\Theta(m_k)$ right above the computation zone, which is free of ``communication wires'' or any other functional gadgets, see the ``empty'' hatched area in Fig.~\ref{fig-macrotile-hatched-area}. So far, this area consisted  of only skeleton tiles; in what follows (Section~\ref{ss-3-2} below), we will use this zone to place some new nontrivial elements of the construction.
\begin{figure}
\begin{center}
\includegraphics[scale=0.62]{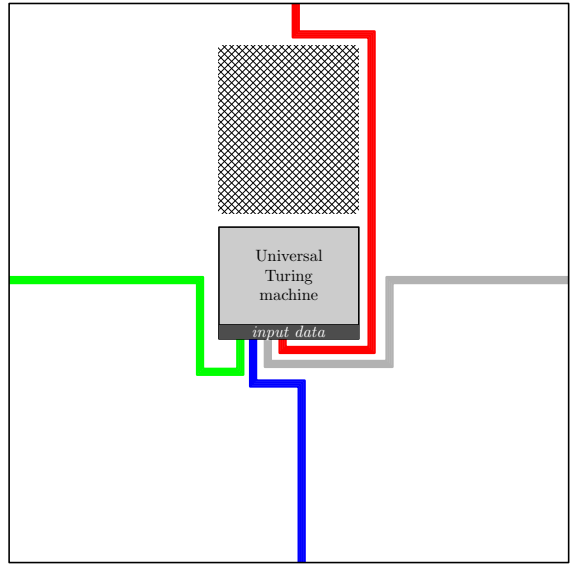}
\caption{The ``free'' area reserved above the computation zone.}\label{fig-macrotile-hatched-area}
\end{center}
\end{figure}

\smallskip
\noindent
\textbf{(p2)}  We require that the tiling inside the computation zone   satisfies the property of $2\times2$-\emph{determinacy}. That is, if we know all of the  colors  on the borderline of a $2\times2$-pattern inside the computation zone (i.e., a tuple of $8$ colors), then we can reconstruct the four tiles of this pattern. Again, we do not need  any new ideas to implement this property. It is not hard to see that this requirement is met if we represent the space-time diagram of a Turing machine in a natural way (see the discussion on p.~\pageref{remark-on-determ-turing-machine-diagram}). 

\smallskip
\noindent
\textbf{(p3)}  The communication channels in a macro-tile   (the wires that transmit information from the macro-color on the borderline of this macro-tile to the bottom line of its computation zone) must be isolated from each other. The gap between every two wires must be greater than two cells,  as shown in Fig.~\ref{fig-macrotile-isolated-wires}. In other words, each group of cells of size $2\times 2$  can touch at most one communication wire. Since the number of  wires in a level-$k$ macro-tile is only $O(\log N_{k})$, we have enough free space to lay the ``communication cables'' maintaining the required safety gap, so this constraint is easy to satisfy.
\begin{figure}
\begin{center}
\includegraphics[scale=0.47]{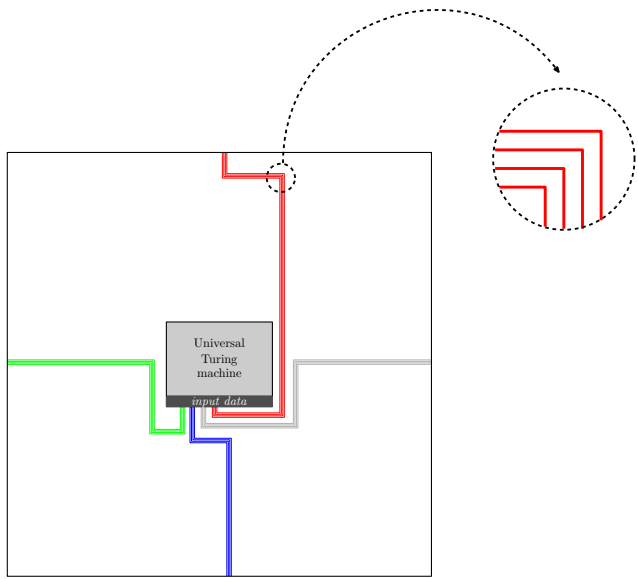}
\caption{The wires in the ``communication cables''  (shown in red, blue, green, and gray in the figure) are separated by a gap (shown in white), so that the  distance (measured in tiles) between every two wires is greater than two.}\label{fig-macrotile-isolated-wires}
\end{center}
\end{figure}

\smallskip
\noindent
\textbf{(p4)}  In our construction, the macro-colors of  a level-$k$ macro-tile are encoded by bit strings of length $r_k = O(\log N_{k+1})$.
In the previous section we only  assumed that this encoding is somewhat ``natural'' and easy to handle. So far, the choice of encoding was of small importance: we only required that some natural manipulations with macro-colors could be implemented in polynomial time.
\label{property:p4}

We now add another (seemingly artificial) condition.
We have decided that each macro-color is  encoded in a string of $r_k$ bits. We require now
that each bit in this encoding takes both values $0$ and $1$ \emph{quite often}.  More precisely,
we require that for each $i=1,\ldots,r_k$  there are \emph{quite many} macro-tiles where the $i$th 
bit  of encoding of the  top (bottom, left, right)  macro-color is equal to $0$, and there are 
\emph{quite  many}  other macro-tiles where  the $i$th bit of this encoding is equal to $1$. 
In what follows we specify what the words \emph{quite often} and \emph{quite many} mean in 
this context.

Technically, we use the following property: for every position $s=1,\ldots, r_k$ and for every $i=0,\ldots,N_{k+1}-1$ we require that
\begin{itemize}
\item there exists  $j_0$ such that the $s$th bit in the top, left, and right macro-colors of the level-$k$ macro-tile at the positions $(i,j_0)$  in the level-$(k+1)$ father macro-tile  are equal to  $0$, and
\item there exists  $j_1$ such that the $s$th bit in the top, left, and right macro-colors of the level-$k$ macro-tile at the positions $(i,j_1)$  in the level-$(k+1)$ father macro-tile  are equal to $1$.
\end{itemize}
There are many (more or less artificial) ways to implement  this constraint. For example, we may subdivide the array of $r_k$ bits encoding a macro-color  into three equal zones of size $r_k/3$ and require that for each macro-tile only one of these three zones contains the ``meaningful'' bits, and the two other zones contain only zeros and ones respectively; we require then that the ``roles'' of these three zones cyclically permute as we go upwards along a column of macro-tiles,
see Fig.~\ref{fig-excessive-encoding-of-macrocolors}.
\begin{figure}
\centering
\includegraphics[scale=0.60]{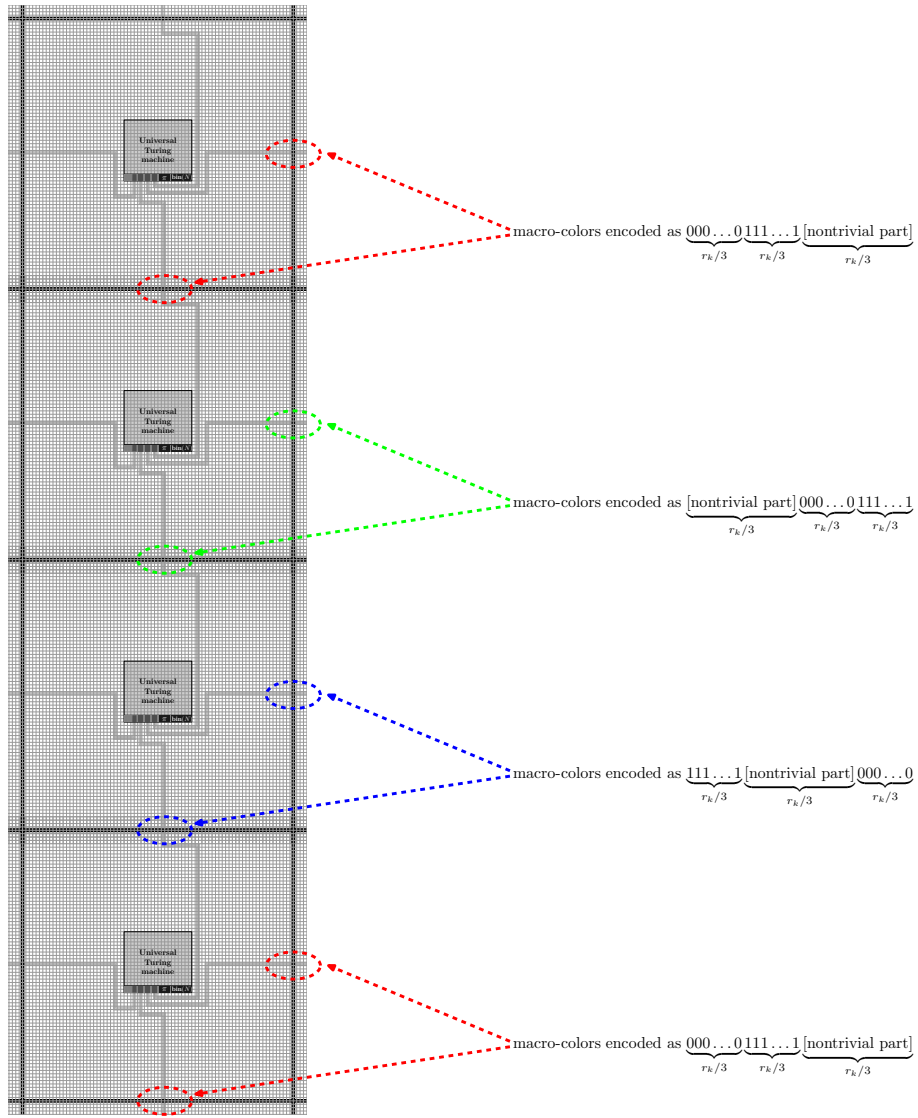}
\vspace{-10pt}
\caption{Encoding of macro-colors. The $r_k$ bits of the code are split into three blocks of size $r_k/3$. One of them consists of all $0$s, another of all $1$s, and only the third one contains a nontrivial binary code. The roles of the three  blocks  change cyclically  
from one macro-tile to another when we move upwards.}\label{fig-excessive-encoding-of-macrocolors}
\end{figure}

\subsection{Enforcing minimality}\label{ss-3-2}

To achieve the property of minimality of an SFT, we should guarantee  that every finite pattern that appears \emph{once} in at least one tiling must also appear in \emph{every large enough square} in every tiling.  
In a tiling with a hierarchical structure of macro-tiles  each finite pattern  can be covered by at most four macro-tiles  (by a $2\times2$-pattern)  of an appropriate rank. Hence, to guarantee the property of minimality, it is enough to show that every $(2\times 2)$-block of macro-tiles of any rank $k$ that appears in at least one $\tau$-tiling actually reappears in this tiling in every large enough square.  Let us classify all $(2\times 2)$-block of macro-tiles (by their position in the father macro-tiles of higher rank) and discuss what revisions of the construction are needed.

\emph{Case 1: Skeleton tiles.}
For a $(2\times 2)$-block of four ``skeleton'' macro-tiles of level $k$, there is nothing to do. Indeed, in  our construction we have exactly the same blocks with  every vertical translation by a multiple of  $L_{k+1}$ (we have there a similar block of level-$k$ ``skeleton'' macro-tiles contained in some other  macro-tile of rank $(k+1)$). 

\emph{Case 2: Communication wires.}
Let us consider the case when a $(2\times 2)$-block of level-$k$ macro-tiles involves a part of a communication wire. Due to property (p3) we may assume that only one wire is involved. The bit transmitted by this wire is either $0$ or $1$; in either case, due to property (p4),
we can find another similar $(2\times 2)$-block of level-$k$ macro-tiles (at the same position within the father macro-tile of rank $(k+1)$ 
and with the same bit included in the communication wire)  in  every macro-tile of level $(k+2)$. 
In this case we can find a duplicate of the given block with a vertical translation of size $O(L_{k+2})$.  

\emph{Case 3: Computation zone.} Now we consider the most difficult case: when a $(2\times 2)$-block of level-$k$ macro-tiles touches the computation zone.
In this case we cannot obtain the property of quasiperiodicity for free, and so we have to make one more  modification  to our general construction of a self-simulating tiling.
\begin{figure}
\centering
\includegraphics[scale=0.6]{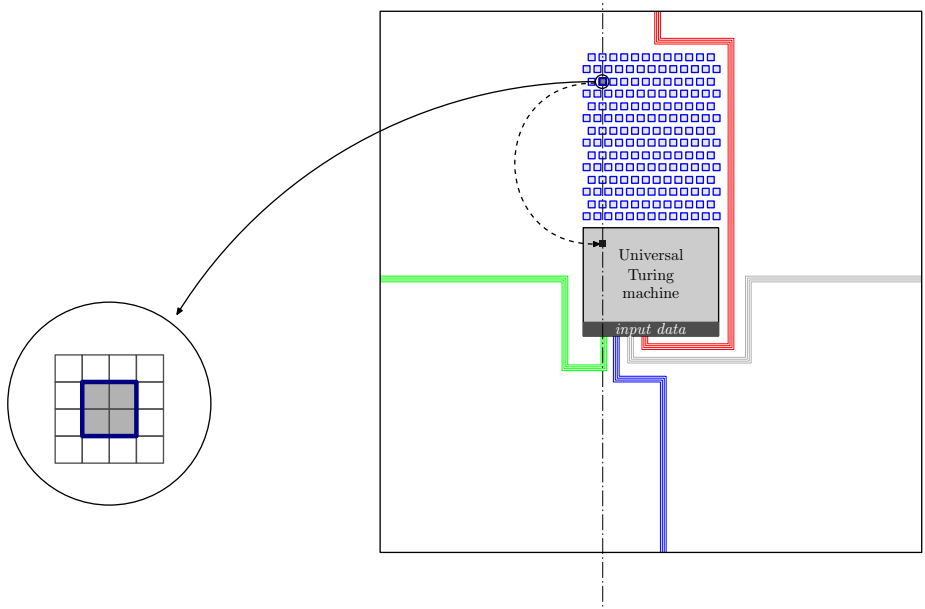}
\vspace{-10pt}
\caption{Positions of the \emph{diversification slots} for patterns from the computation zone.}\label{fig-macrotile-with-slots-detailed}
\end{figure}

 Note that for each $2\times2$-window that touches the computation zone  of a macro-tile, there are only $O(1)$ ways to tile them correctly. For each possible position of a $2\times2$-window in the computation zone and for each possible filling of this window by tiles, we reserve a special $2\times2$-\emph{slot} in a macro-tile, which is essentially a block of size $2\times2$ in the ``free'' zone of a macro-tile. We refer to this gadget as a \emph{diversification slot}. These slots will enforce the property of ``diversity'':  for every small pattern that \emph{could} appear in the computation zone, we will guarantee that it \emph{must} appear in the corresponding diversity slot in every macro-tile of the same rank. 
 
 The  diversification slots must be placed   far away from the computation zone and from all communication wires. We prefer to place every  diversification slot  in the same vertical stripe as the ``original'' position of this block, as shown in Fig.~\ref{fig-macrotile-with-slots-detailed} (this property of vertical alignment will be used in Section~\ref{s-subdynamics}).
 We have enough free space to place all necessary  diversification slots, due to property (p1). We  define the neighbors around each  diversification slot in such a way that only one specific $(2\times 2)$-pattern  can patch it (here we use the property (p2)).  
 
In our construction, the tiles around this  slot ``know'' their real coordinates in the bigger macro-tile, while the tiles inside the  diversification slot do not (they ``believe'' they are tiles in the computation zone, while in fact they belong  to an artificial isolated  ``diversity preserving'' slot far outside of any real computation), see Fig.~\ref{fig-macrotile-with-slots-detailed} and Fig.~\ref{fig-slots-detailed}.
 The frame of the  diversification slot consists of 12 ``skeleton'' tiles (the white squares in Fig.~\ref{fig-slots-detailed}); they form a slot  that involves inside a $(2\times 2)$-pattern extracted from the computation zone (the gray squares in Fig.~\ref{fig-slots-detailed}). 
In the picture, we show the ``coordinates'' encoded in the colors on the sides of each tile.  
Note that the colors of the bold lines (the blue lines between the white and gray tiles and the bold black lines between the gray tiles) should contain some information beyond the coordinates---these colors involve the bits used to simulate a space-time diagram of the universal Turing machine. 
In this picture, the ``real'' coordinates of the bottom-left corner of this slot are $(i+1,j+1)$, 
while the ``natural'' coordinates of the  pattern inside the  diversification slot (when this pattern appears in the computation zone) are $(s,t)$. 

\begin{figure}
\centering
\includegraphics[scale=0.40]{p16.mps}
\caption{A  diversification slot for a $2\times2$-pattern from the computation zone.}\label{fig-slots-detailed}
\end{figure}
 
 We choose the positions of the  diversification slots in the macro-tile so that the coordinates can be computed by some simple algorithm in time polynomial in $\log N_k$. We require that all  diversification slots be detached from each other in space, so they do not damage the general structure of the ``skeleton'' tiles building the macro-tiles.

\smallskip

Now it is not hard to see that for the revised tile set,  every pattern that appears \emph{at least once}  in \emph{at least one} tiling must in fact appear in \emph{every large enough} pattern in \emph{every} tiling.
Thus,  the revised construction of a self-simulating tiling guarantees that 
\begin{itemize}
\item every tiling is aperiodic (the argument from the previous section remains valid), and 
\item every pattern that appears at least once in at least one configuration must appear in every large enough square in every tiling.
\end{itemize}
Hence, our construction implies Theorem~\ref{thm-alexis-nicolas}.

\section{Quasiperiodicity and non-computability}\label{s:noncomputability}

So far, we have used the method of self-simulating tilings to combine the properties of quasiperiodicity (and even minimality) 
 and aperiodicity. Now we are going to go  further and
combine quasiperiodicity with non-computability. 
In this section, we extend the construction discussed above and prove  Theorem~\ref{thm1} and Theorem~\ref{thm1-bis}.
To this end, we extensively use  the technique of a  self-simulating  tiling with a \emph{variable zoom factor}, introduced in Section~\ref{ss-variable-zoom}. 
As we mentioned above, we assume that the size of a macro-tile of rank $k$ is equal to $N_k\times N_k$, for  $N_k=3^{C^k}$, $k=1,2,\ldots$, and the size of the computation zone $m_k$ grows as $m_k = \poly(\log N_k)$.

We take as the starting point the generic scheme of a quasi-periodic self-simulating tiling explained in the previous section, and adjust it with some new features. 
From now on we  require that all macro-tiles of rank $k$ contain, in their computation zone, the prefix (e.g., of length $\lceil \log k \rceil $) of some infinite sequence
 $X=x_0x_1x_2\ldots$
 We want that all macro-tiles of rank $k$ contain the same prefix $x_0x_1x_2\ldots k_{\lceil \log k \rceil}$, so we will embed these bits in the macro-colors of all  macro-tiles.  To make things more pictorial, we require also that these bits be provided in the bottom line of the computation zone as one supplementary input field, as shown in Fig.~\ref{fig-macrotile-comp-zone}. But we stress again that the bits  $x_0x_1x_2\ldots k_{\lceil \log k \rceil}$ make part of each macro-color. So in Fig.~\ref{fig-macrotile-comp-zone} these bits are repeated five times: they appear four times implicitly as a part the codes of macro-colors (greed, blue, gray, and red fields in the picture) and one more time explicitly (the rightmost part of the computation zone). The computation hidden in each macro-tile should verify that these five copies of $x_0x_1x_2\ldots k_{\lceil \log k \rceil}$ provided  to the computation zone are identical.

Since every two neighboring macro-tiles must have matching macro-colors, we can guarantee every two neighboring level-$k$ macro-tiles contain one and the same prefix $x_0x_1x_2\ldots k_{\lceil \log k \rceil}$ (this remains true even for neighboring level-$k$  macro-tiles that belong to different father macro-tiles of level $(k+1)$). Thus, the construction implies that in each valid configuration all level-$k$ macro-tiles involve one and the same prefix  $x_0x_1x_2\ldots k_{\lceil \log k \rceil}$. 
 
 We also need coherence between bits $x_i$ embedded in macro-tiles of different ranks (the string embedded in a macro-tile of higher rank should extend the string embedded in macro-tiles of lower rank).
 So we suppose that the computation embedded in the computational zones of macro-tiles verifies whether the input data are coherent, i.e.,  that the bits embedded in each of the four macro-tiles match the bits $x_0x_1x_2\ldots k_{\lceil \log k \rceil}$ given explicitly in this new input field.
 \begin{figure}
\centering
\includegraphics[scale=0.65]{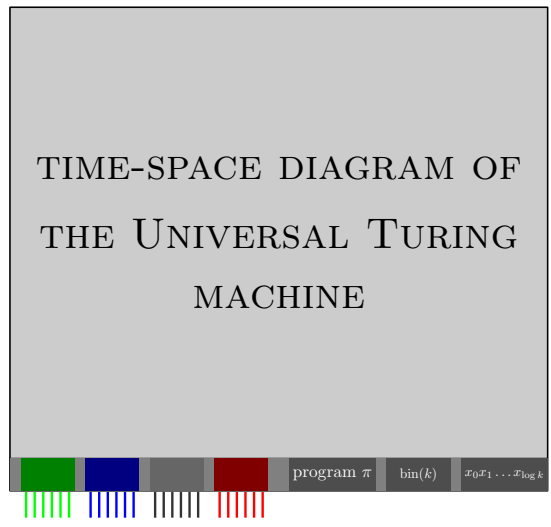}
\caption{The \emph{computation zone} of a macro-tile in the revised construction. The input data consist of the codes of the macro-colors, the simulated program $\pi$, 
the binary expansion of the rank $k$, and the first $\log k$ bits of the embedded sequence $X=x_0x_1x_2\ldots$}\label{fig-macrotile-comp-zone}
\end{figure}
Using the usual self-simulation, we can guarantee that the bits of $X$ embedded in a macro-tile of rank $k+1$ extend the prefix embedded in a macro-tile of rank $k$. Since the size of the computation zone increases as a function of $k$, the entire tiling of the plane determines  an infinite sequence of bits $X$ (whose prefixes are encoded in the macro-tiles of all ranks).

\begin{rmrk}
The new feature allows embedding an infinite sequence $X$ in a tiling. This embedding is \emph{highly distributed} in the following sense: we can extract the first $\log k$ bits of the sequence from any level-$k$ macro-tile.  Note that the tile set does not determine the embedded sequence $X$
\textup(different tilings of the same tile set can represent different sequences $X$\textup). However we are going to control the class of sequences $X$ that
can be embedded in a tiling.
\end{rmrk}

Since the embedded sequence $X$ is not uniquely defined by the tile set, we lose the property of \emph{minimality}
(different $\tau$-tilings involve different embedded sequences $X$, and therefore different  finite patterns). 
However, we still have the property of \emph{quasiperiodicity}. Indeed, every valid tiling contains a well-defined embedded sequence $X$.
Let us fix one infinite sequence $X$ and restrict ourselves to the class of tilings ${\cal T}(X)$ that represent this specific sequence.
Then, all level-$k$ macro-tiles in every tiling in ${\cal T}(X)$  involve (one and the same) prefix of $X$.
Now the argument from the previous section, repeated word for word, gives the following property.

\smallskip

\noindent
\emph{Every $(2\times 2)$-block of level-$k$ macro-tiles that appears at least once in at least one tiling in ${\cal T}(X)$
must reappear   in every large enough pattern of every tiling in ${\cal T}(X)$.
}

\smallskip

\noindent
It follows immediately that every tiling in this construction is quasiperiodic.

When the class of embedded sequences of $X$ is not restricted, this construction is of no interest. 
It becomes meaningful if we can enforce some special properties of the embedded sequence $X$.

As we explained in Section~\ref{section:2.2}, we can include in the computation performed by each macro-tile 
a simulation of some specific algorithm $\cal A$ (see the discussion on p.~\pageref{disc:payload}). We may
require that $\cal A$ verifies some particular property of the embedded sequence $X$. Each macro-tile
accesses a prefix of the embedded sequence, so we may test some computable properties of 
the available prefix of $X$. As usual, the computation  in each macro-tile is limited by the size of the computation zone.
However,  we may assume that the  space and time allocated to the computation
grow with the levels of the hierarchy.

We start with a rather simple example of a property of $X$ that can be implemented. We can guarantee that for every tiling the embedded  sequence $X$  is 
\emph{not computable}. 
This is not completely trivial: our aim is to implement a \emph{computable} procedure which guarantees that a sequence of bits $X$ is \emph{not computable}.
More technically, we need to find a computable set of constraints (forbidden finite patterns) such that each $X$ that satisfies these constraints
(i.e., contains none of the forbidden patterns) is not computable. From the computability theory we know many examples of sets with the required properties
(co-recursively enumerable sets with no computable points, see, e.g., \cite[Proposition~V.5.25]{odifreddi}).

Our construction can be combined with any example of a co-recursively enumerable non-empty set  $A\subset \{0,1\}^\mathbb{N}$   with no computable points (i.e., each sequence $x\in A$ must be incomputable). We can embed such a set $A$  in a quasi-periodic self-simulating tiling, so that the resulting SFT satisfies the statement of Theorem~\ref{thm1}. 
For the sake of self-containedness, we prefer to be more specific and describe below in some detail one particular construction of  such a set.

\smallskip

\emph{Short digression: a co-recursively enumerable set of bit sequences  with no computable points.}
Arguably the simplest classic construction of a set with this property is based on a pair of  recursively inseparable  enumerable sets. 
Let us remind that there exists 
a pair of disjoint recursively enumerable sets $S_1, S_2\subset \mathbb{N}$ such that there is no decidable ``separator'' $W$ satisfying $S_1\subset W$ and $S_2\subset \mathbb{N}\setminus W$ (see, e.g., \cite[Section~II.2]{odifreddi} and \cite[Section~2.4]{shen-vereshchagin-book}).
In order to  implement the outlined plan, we embed in the computation zone of macro-tiles an algorithm $\cal A$  that performs the following job: it enumerates two non-separable enumerable sets (on each level $k$ we run these two enumerations for the  number of steps that fits the  computation zone available in a macro-tile of rank $k$). Then,  $\cal A$ checks that the bits of $X$ represent a separator between these two sets. Technically, in each macro-tile 
the computation verifies that these (partially) enumerated sets are indeed separated by the given prefix of $X$.  
In every valid tiling, the embedded sequence $X$ must pass the checks performed in all macro-tiles on all levels of the hierarchy. 
Hence, this $X$ must be a separator between two non-separable enumerable sets. Therefore, this sequence $X$ must be non-computable.
It follows that the tiling (which contain this sequence) must be also non-computable.

\smallskip

Combining all the ingredients together, we obtain a tile set $\tau$ that  enjoys two nontrivial properties: all $\tau$-tilings are non-computable and quasiperiodic. This gives a new proof of  Theorem~\ref{thm1}. 

Thus, we constructed a tile set $\tau$ such that all $\tau$-tilings are quasiperiodic and non-computable. Up to now, we could not say much more about the degree of unsolvability  (Turing degrees) of $\tau$-tilings. Now we are going to enhance this construction  by implementing some   more precise control
on the class of embeddable sequences $X$, and therefore on the class of possible 
Turing degrees of $\tau$-tilings.   We start with a proposition that  characterizes the no-go zones for this technique.

\begin{theorem} \label{p-turing-degrees}
(a)~Every SFT is effectively closed.
(b)~For every infinite minimal SFT $\cal S$, the class of the Turing degrees representable by configurations in  $\cal S$ is upper-closed: if there exists a $\tau$-tiling that has a Turing degree $T$, then every Turing degree $T'>T$ is  also represented by some $\tau$-tiling.
\end{theorem}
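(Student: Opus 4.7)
Part (a) is immediate. A shift of finite type is by definition given by a finite set $\mathcal{F}$ of forbidden patterns, and a finite set is trivially computably enumerable, so the shift is effectively closed.

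For part (b), let $\mathcal{S}$ be an infinite minimal SFT. By (a) it is effective, and since it is moreover minimal, the language $L$ of globally admissible finite patterns is decidable (as recalled in the paper, from \cite{hochman, ballier-jeandel-2008}). In particular $\mathcal{S}$ contains a computable configuration, so the degree $\mathbf{0}$ is realized; hence it suffices to prove that \emph{every} Turing degree is realized in $\mathcal{S}$, which plainly implies the asserted upward closure.

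To realize a degree represented by $R\in\{0,1\}^{\mathbb{N}}$, I build by induction a configuration $y\in\mathcal{S}$ with $y\equiv_T R$. Fix a computable exhaustion $W_0\subset W_1\subset\cdots$ of $\mathbb{Z}^d$ by cubes, and a computable ordering of finite patterns. I produce nested patterns $p_n\in L$ supported on windows $V_n\supset W_n$. To pass from $p_n$ to $p_{n+1}$, I search (using decidability of $L$) for the smallest window $V\supset V_n\cup W_{n+1}$ on which $p_n$ admits at least two admissible extensions; among the two lexicographically smallest such extensions, I take the first as $p_{n+1}$ if the next bit of $R$ is $0$, and the second otherwise, and set $V_{n+1}=V$. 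The limit configuration $y:=\bigcup_n p_n$ lies in $\mathcal{S}$ by closedness.

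The key lemma is that the branching window $V$ exists at every stage. If some $p_n$ admitted a unique admissible extension to every larger window, then every configuration of $\mathcal{S}$ containing $p_n$ would be uniquely determined by it; by minimality every configuration of $\mathcal{S}$ contains a translate of $p_n$, whence $\mathcal{S}$ would be countable. But an infinite minimal shift is a perfect closed subset of $\Sigma^{\mathbb{Z}^d}$ and therefore uncountable --- a contradiction. Given this lemma, the equivalence $y\equiv_T R$ follows: the whole construction is computable from $R$ given decidable $L$, so $y\le_T R$; conversely, from $y$ one reconstructs the deterministic sequence $V_n,p_n$ and reads off at each stage whether the lexicographically first or second admissible extension was chosen, recovering the bits of $R$, so $R\le_T y$. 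I expect the main technical obstacle to be precisely this branching lemma and the bookkeeping needed so that the encoding and decoding rules align --- that is, that the two deterministic processes (oracle-driven construction from $R$, and extraction of $R$ from $y$) indeed enumerate the same sequence of branching windows and the same binary choices; restricting attention to the two lexicographically smallest extensions and ignoring the rest is what keeps the read-out unambiguous.
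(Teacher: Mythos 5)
Your proof is correct, and it takes a genuinely different and more self-contained route than the paper. For part (b) the paper simply defers to Jeandel and Vanier, whose result is a general structure theorem on Turing degree spectra of arbitrary SFTs, of which upward closure is a consequence. You instead exploit what is special to the minimal case: decidability of the language $L$ (a fact the paper itself recalls from Hochman and Ballier--Jeandel), plus the topological observation that an infinite minimal shift has no isolated points and is therefore uncountable. These two ingredients give you a computable perfect-tree argument that produces, for every oracle $R$, a configuration of degree exactly $\deg_T(R)$, i.e., you establish the \emph{a priori} stronger conclusion that the spectrum consists of \emph{all} Turing degrees. (For an infinite minimal SFT the two conclusions coincide, since $\mathbf{0}$ is realized by the computable configuration and an upward-closed set containing $\mathbf{0}$ is everything; so you are not overshooting the theorem, just phrasing it maximally.) Your branching lemma is argued correctly: a uniquely extendable globally admissible pattern would, via minimality, force every configuration to be a translate of a single one, contradicting uncountability. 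The one thing worth flagging is that your argument leans essentially on decidability of $L$, so it does not generalize beyond the minimal (or at least computable-language) case, whereas the cited Jeandel--Vanier machinery does; that is the trade-off between your elementary, transparent argument and the paper's appeal to a black-box general theorem. Also, a small stylistic nit: the ``hence it suffices to prove every degree is realized'' clause reads as if the realization of $\mathbf{0}$ were a needed hypothesis, when in fact ``all degrees are realized'' trivially implies upward closure on its own; the sentence would be clearer with that dependency dropped.
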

\begin{proof}
(a)~is trivial, (b)~is proven in~\cite{jeandel-vanier}.
\end{proof}

\begin{rmrk} Observe that we cannot guarantee that the Turing degrees of \emph{all} $\tau$-tilings are \emph{very} high. More specifically,
 we cannot guarantee that {all} $\tau$-tilings are \emph{not low} or 
\emph{not hy\-per\-im\-mune-free}. 
Indeed,   due to  the low basis theorems, for every tile set $\tau$, some $\tau$-tilings are not low and not hyperimmune-free.
\end{rmrk}

Theorem~\ref{thm1-bis} essentially claims  that 
 a class of Turing degrees which is not forbidden by Theorem~\ref{p-turing-degrees} (i.e., a class that is upwards closed and corresponds to an effectively closed set) can be implemented by a suitable tile set.

\begin{proof}[Proof of Theorem~\ref{thm1-bis}]
To prove this theorem, we again employ the  idea of embedding an infinite sequence $X$ in a tiling, and control more precisely the properties of the embedded sequence. 
Similarly to the construction discussed above, we require that all macro-tiles of rank $k$  involve  the same finite sequence of $\log k$ bits on their computation zone, which is understood as a prefix of $X$. We can guarantee that the prefix embedded in macro-tiles of rank $k$ is compatible with the prefix available to the macro-tiles of the next rank $(k+1)$. 

Further, since $\cal A$ is in $\Pi_1^0$, we can enumerate the (potentially infinite) list of patterns that should not appear in $X$. On each level, the macro-tiles run this enumeration for the available space and time (limited by the size of the computation zone available on this level), and verify that the discovered forbidden patterns do not appear in the prefix of $X$ accessible to the macro-tiles of this level. Since the computation zone becomes bigger and bigger with each level, the enumeration extends longer and longer. Thus, a sequence $X$ can be embedded in an infinite tiling if and only if this sequence does not contain any forbidden pattern (i.e., if this $X$ belongs to $\cal A$).

What are the Turing degrees of the tilings in the described tile set? In our tile set, every tiling is uniquely defined by the following information: the sequence $X$ embedded in this tiling and the sequences of integers $\sigma_h, \sigma_v$ that specify the shifts (the vertical and the horizontal ones) of macro-tiles of each level relative to the origin of the plane. Indeed, on each level $k$ we split the macro-tiles of the previous rank into blocks of size $N_k\times N_k$. These blocks make  level-$k$ macro-tiles, and there are $N_k^2$ ways to choose the grid of horizontal and vertical lines that define this splitting.
Given these sequences $\sigma_h, \sigma_v$ and an $X\in{\cal A}$, we can reconstruct the entire tiling. 
It remains to note that $\sigma_h$ and $\sigma_v$ can be absolutely arbitrary.
Thus, the Turing degree of a tiling is the Turing degree of  $(X,\sigma_h, \sigma_v)$, which can be an arbitrary degree not less than $X$. That is, the set of degrees of tilings is exactly the closure of ${\cal A}$, i.e., the set of all $Y$ that are not less than some $X\in{\cal A}$. So we get the statement of  Theorem~\ref{thm1-bis}.
\end{proof}

\section{Transitive version of the Hochman--Meyerovitch theorem about possible entropies of SFT}

\begin{figure}
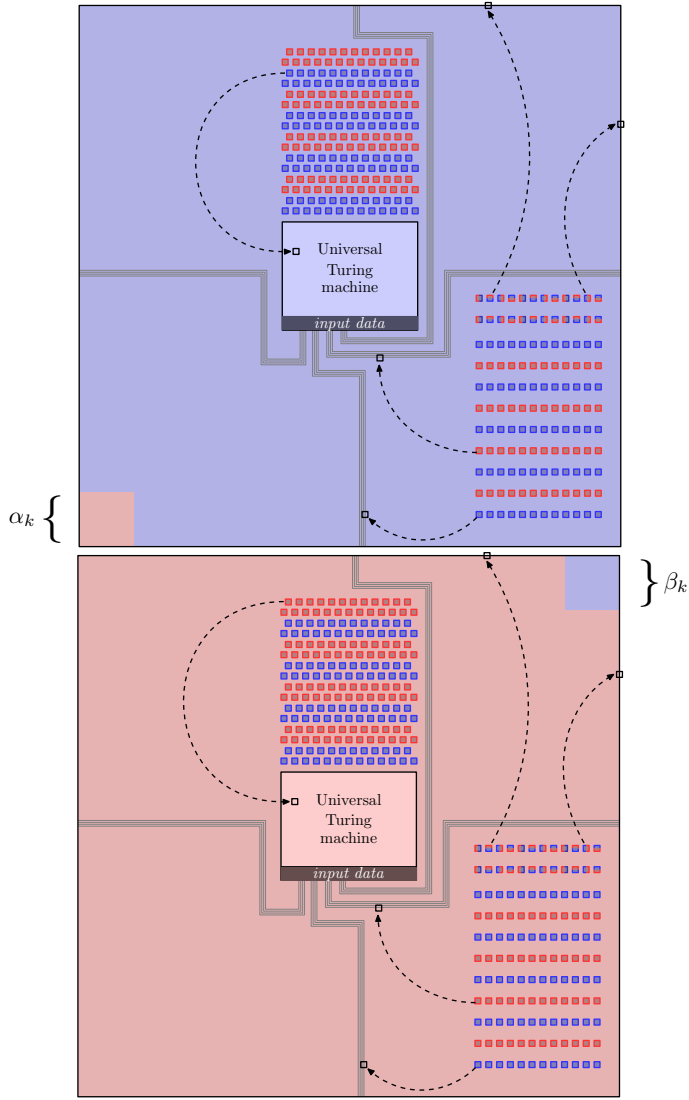

\begin{center}
\includegraphics[scale=0.6]{p24.mps}
\rule{1.65cm}{0cm}
\includegraphics[scale=0.6]{p23.mps}
\caption{A {blue} macro-tile of rank $k$ consists of mostly {blue} blocks of rank $(k-1)$, with only an $(\alpha_k  \times \alpha_k)$ corner of {red} blocks. For a {red} macro-tile the situation is dual: most blocks are {red},  with a $(\beta_k \times\beta_k)$ corner of {blue} blocks.  
In comparison with Fig.~\ref{fig-macrotile-with-slots-detailed}, here we have  diversification slots for $2\times2$-pattern not only  in the computation zone but also in the communication wires and on the borderline of the macro-tile.
In both types of macro-tiles (red and blue) we keep the  diversification slots with both red and blue versions of the ``cloned'' patterns.}\label{fig-blue-red-macro-tiles}
\end{center}
\end{figure}

In this section, we  prove  Theorem~\ref{t:transitiv-hochman-meyerovich}.
As usual, we consider only the case  $d=2$.
Before proceeding with the proof,  we establish the following proposition.

\begin{proposition}\label{p:transitiv-hochman-meyerovich}
For every integer $d>1$ and 
for every  nonnegative right recursively enumerable real $h$ there exists a tile set $\tau$ split into two disjoint subsets, $\tau=\tau_R\cup \tau_B$
\textup(we will say that the tiles in $\tau_R$ are {red} and the tiles in $\tau_B$ are {blue}\textup) that admits a tiling of the plane, and
\begin{itemize}
\item the set of $\tau$-tilings is a minimal shift,
\item $\limsup$ of the fraction of {red} tiles in globally admissible $n\times n$ patterns 
\textup(i.e., in correctly tiled squares of size $n\times n$ that can appear in a tiling of the plane\textup) is equal to $h$.
\end{itemize}
\end{proposition}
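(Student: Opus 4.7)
The plan is to extend the quasiperiodic self-simulating construction of Section~3 by introducing, at every rank $k \geq 0$, two species of macro-tiles, \emph{red} and \emph{blue}, whose type is encoded into their macro-colors. A blue rank-$k$ macro-tile is an $N_k \times N_k$ grid of rank-$(k-1)$ sub-blocks, of which all but an $\alpha_k \times \alpha_k$ corner are blue and the corner is red; red rank-$k$ macro-tiles are symmetric, with a $\beta_k \times \beta_k$ blue corner. The matching rules inherited from self-simulation force the prescribed layout, and the tile set is split $\tau = \tau_R \sqcup \tau_B$ by declaring the color of a base tile to be the color of the rank-$0$ macro-tile it represents.

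The integers $\alpha_k, \beta_k$ are determined by a computable non-increasing sequence $h_k \searrow h$ furnished by the right-r.e.\ presentation of $h$. Inside every rank-$k$ macro-tile the computation zone evaluates enough bits of $h_k$ (which fits in $\mathrm{poly}(\log N_k)$ space and time) and verifies that its layout matches the integer values of $\alpha_k, \beta_k$ prescribed by $h_k$. Writing $r_k, b_k$ for the densities of red base tiles inside red and blue rank-$k$ macro-tiles, the induction gives
$$r_k = \left(1 - \tfrac{\beta_k^2}{N_k^2}\right) r_{k-1} + \tfrac{\beta_k^2}{N_k^2}\, b_{k-1}, \qquad b_k = \tfrac{\alpha_k^2}{N_k^2}\, r_{k-1} + \left(1 - \tfrac{\alpha_k^2}{N_k^2}\right) b_{k-1}.$$
Since $N_k = 3^{C^k}$ grows very fast, the granularity $1/N_k^2$ is far finer than what is needed. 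I would pick $\alpha_k$ very small (so that $b_k \to 0$) and tune $\beta_k$ at each step so that $r_k$ tracks $h_k$; then $r_k \to h$ from above.

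Minimality is recovered by the diversification-slot mechanism of Section~3 with one modification: because $2 \times 2$ patterns from the computation zone, the communication wires, the borderline, and especially the red/blue corner junctions now appear in two color versions, every macro-tile must contain, for each admissible $2 \times 2$ window, \emph{both} a red and a blue slot (this is what Figure~\ref{fig-blue-red-macro-tiles} depicts). This guarantees that any pattern appearing in any tiling reappears in every sufficiently large region, which gives minimality of the $\tau$-shift.

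For the density claim, any globally admissible $n \times n$ pattern sits inside a $2 \times 2$ block of rank-$k$ macro-tiles with $L_k \gtrsim n$, so its red-density is a convex combination of $r_{k-1}, b_{k-1}, r_k, b_k$ up to $o(1)$; the maximum is attained by patterns essentially filling a red rank-$k$ macro-tile, which gives $\limsup = h$. The main obstacle, I expect, is the bookkeeping of the recurrence: one must show that the integer values $\alpha_k, \beta_k$ can be chosen computably from the right-r.e.\ approximation $h_k$, compatibly with each other, so that $r_k \to h$ and $b_k \to 0$ hold \emph{uniformly} (rather than just in the limit), because only uniform convergence yields the correct $\limsup$ rather than a merely lower bound. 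The superexponential growth of $N_k$ and the freedom to adjust $\beta_k$ by $O(1)$ at each level are what make this tuning possible.
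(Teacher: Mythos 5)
Your construction is essentially the paper's: red/blue macro-tiles encoded in the macro-color, small off-color corners of size $\alpha_k$ (in blue) and $\beta_k$ (in red), the same two-by-two recurrence for the densities $r_k,b_k$, diversification slots in both colors, and tuning from a right-r.e.\ approximating sequence $h_k\searrow h$. The paper takes $\alpha_k\equiv 1$ and only varies $\beta_k$; your suggestion to make $\alpha_k$ ``very small'' is moot since $\alpha_k$ is a positive integer number of cells, so $\alpha_k\ge 1$, and your side claim that $b_k\to 0$ is in fact false: with $\alpha_k$ constant, $b_k$ is increasing and converges to a small \emph{positive} constant $\sum_j O(1/N_j^2)$; what actually matters (and what the paper uses) is that this limit can be made $<h$ by taking $C$ large in $N_k=3^{C^k}$, and that $\beta_k^2/N_k^2\to 0$ so the $r_k$ recursion stabilizes. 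One substantive point you gloss over is the minimality argument for purely skeleton $(2\times 2)$ blocks: these cannot be put into slots (there are $\Theta(N_{k+1}^2)$ positions), and the paper handles them by noting that, since $\alpha_k,\beta_k>0$, every rank-$(k+2)$ macro-tile contains both a red and a blue rank-$(k+1)$ macro-tile, so a monochromatic skeleton block recurs at homologous positions inside same-colored macro-tiles in every large enough window; your enumeration of slot types for ``corner junctions'' does not substitute for this observation and should be stated explicitly.
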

\begin{proof}
We take as the starting point the construction of a tile set from Section~\ref{s:quasiperiodic-sft}, which enforces a tiling with growing zoom factors and guarantees the minimality of the corresponding SFT. We upgrade it in the following way.

\emph{Red and blue tiles and macro-tiles.} We want that each tile and each macro-tile (of each rank) ``knows'' its color,  which can be red or blue. On the ground level, we basically make two copies of every tile (one copy is red and another is blue). Below  we specify the local constraints imposed on the ``colors''  assigned to the neighboring tiles.
We also slightly revise the structure of the macro-tiles (which implicitly requires, of course, some minor modification of the ground level tiles).

In the new construction, every macro-tile can be ``red'' or ``blue.'' First of all, \emph{the color of a macro-tile} (represented by a single bit of information) is written as a new  input field in the bottom line of the computation zone. (Thus, we add one more input field besides the macro-colors and  the binary expansion of macro-tile's rank, similarly to Fig.~\ref{fig-macrotile-comp-zone}.)  The color of the cells composing a level-$k$ macro-tile
(as usual, the cells are macro-tiles of level $(k-1)$) are subject to the following rule:
 \begin{itemize}
 \item the cells in the bottom-left corner of size $\alpha_k  \times \alpha_k$ are always {red} 
 (whatever is the ``color'' of the macro-tile in whole);
 \item  the cells in the top-right corner of size $\beta_k \times \beta_k$ are always {blue} 
 (whatever is the ``color'' of the macro-tile in whole);
 \item the color in  diversification slots (that guarantee quasiperiodicity) can be red or blue, see below;
 \item the other cells (skeleton cells as well as communication wires and the computation zone) are  {red}  in a red macro-tile
 and {blue} in a blue macro-tile. 
 \end{itemize}
 The structure of ``red'' and ``blue'' macro-tiles is show in Fig.~\ref{fig-blue-red-macro-tiles}.
 
 \begin{rmrk}
(a) The parameters $\alpha_k$ and $\beta_k$ are small enough so that the ``red'' and ``blue'' corners in a macro-tile neither intersect the computation zone nor the communication wires. We fix the values of these parameters later.
 
(b) The coherence of the ``color'' of most cells in a macro-tile is guaranteed by local rules:  neighboring cells inside a macro-tile  must have the same ``color'' (except for the two special corners mentioned above and the  diversification slots).
 
(c) Since the cells in the computation zone ``know'' their colors, it is easy to guarantee that the color of the macro-tile provided as an input of the Turing machine (simulated  in the computation zone) is coherent with the color of the cells in this macro-tile.
 \end{rmrk}
 
It remains to explain the policy for the color of  the diversification slots (which reproduce all $(2\times 2)$ patterns from the computation zone). For each of the $(2\times 2)$-patterns touching the computation zone, we reserve twice as many slots as we did in  Section~\ref{s:quasiperiodic-sft}. For each of these configurations, we prepare both ``red'' and ``blue''  clones, no matter what the real color of the entire macro-tile is. 

In addition to the  diversification slots duplicating the patterns in the 
computation zone,  we  embed, in the macro-tile, similar slots (with both possible colors) for all $(2\times 2)$-patterns that intersect the communication wires, see Fig.~\ref{fig-blue-red-macro-tiles}. (In a level-$k$ macro-tile, we have only $O(\log N_k)$ wires, and the length of each wire is at most $O(N_k)$. Hence, we have enough free space to place all necessary  diversification slots.)

Last, we also add similar  diversification slots to the $(2\times 2)$-patterns intersecting the borderline of the macro-tile. Note that neighboring cells  that belong to different macro-tiles  may have different colors. So in this case we build  diversification slots  with multi-colored $2\times2$-patterns inside. 
Since the length of the borderline of a macro-tile is $O(N_k)$, we have enough free space in a macro-tile to make an isolated  diversification slot for each of these patterns. This family of gadgets concludes the construction of the new tile set.

\smallskip

\noindent
\emph{Claim 1.} The set of all tilings for the described tile set is a minimal SFT.

\begin{proof}[Proof of Claim 1.]
We need to show that every finite pattern that appears in at least one tiling must appear in every tiling (in every large enough area). As usual,
we profit from the structure of a hierarchical self-simulating tiling: it is enough to prove this property for patterns built of 
$(2\times 2)$-group of level-$k$ macro-tiles. For a $(2\times 2)$-pattern that touches either the computation zone or a communication  wire or the borderline of  a macro-tile of level $(k+1)$, this property is simple to establish:   we can find a clone of such a pattern inside the corresponding  diversification slot (which exists in every macro-tile of rank $k+1$). 

For a $(2\times 2)$-pattern that consists of only ordinary skeleton cells, things are somewhat trickier. We cannot say that exactly the same 
pattern can be found at the homologous position in \emph{every} other macro-tile of rank $(k+1)$. In fact, we only find the same pattern
at the same position in macro-tiles of rank $(k+1)$ with the same color. Thus, it remains to explain why in every tiling, for every $k$, there exist
both red and blue macro-tiles of rank $k+1$. Here we use the existence of red and blue corners in our macro-tiles. Indeed, by construction,
in every  macro-tile of rank $(k+2)$ there exist some red cells and some blue cells (which are themselves macro-tiles of rank $(k+1)$). This argument 
works for any reasonable choice of $\alpha_k$ and $\beta_k$: we only need to assume that $\alpha_k$ and $\beta_k$ are 
strictly positive for all $k$.
\end{proof}

\smallskip

\noindent
\emph{Claim 2.} The values of $\alpha_k$ and $\beta_k$ can be chosen so that
(a) these parameters are computable (as functions of $k$) in the space and time available in the computation zone of macro-tiles of rank $(k-1)$,
and (b) $\limsup$ of  the fraction of {red} tiles in the globally admissible blocks of size $n\times n$ is equal to $h$.

\begin{proof}
To prove  property (b), it is enough to estimate the fractions of red and blue tiles 
inside a macro-tile (of growing rank). It is clear that the fraction of red tiles in level-$k$ red and blue macro-tiles 
(denote this by $\nu_R(k)$, and that for blue by $\nu_B(k)$)
can be computed recursively by

\[
\begin{array}{rcl}
\nu_R(k) &:=& \left[%
	\begin{array}{c}
\text{fraction of red level-$(k-1)$ macro-tiles}\\
\text{in each red level-$k$ macro-tile}
\end{array}
\right]\times \nu_R(k-1) \\ \\
                &&{} +  \left[%
                \begin{array}{c}
                \text{fraction of blue level-$(k-1)$ macro-tiles}\\
                \text{ in a red level-$k$ macro-tile}
                \end{array}
                \right]
                \times \nu_B(k-1),\\
                \\ \\
\nu_B(k) &:=& \left[%
	\begin{array}{c}
	\text{fraction of red level-$(k-1)$ macro-tiles}\\
	\text{in a blue level-$k$ macro-tile}
	\end{array}
	\right]\times \nu_R(k-1) \\ \\
                &&{} +  \left[%
                	\begin{array}{c}
		\text{fraction of blue level-$(k-1)$ macro-tiles}\\
		\text{in a blue level-$k$ macro-tile}
		\end{array}
		\right]\times \nu_B(k-1).                
\end{array}
 \]

 Recall that the zoom factor $N_k$ grows very fast (as $3^{C^k}$). The fractions of red and blue level-$(k-1)$ macro-tiles in a red level-$k$ macro-tile
 depend on the choice of $\alpha_k$ and $\beta_k$, and on the fraction of diversification slots (by construction, for each slot the ``color'' of the involved pattern is uniquely defined, and does not depend on the color of the entire macro-tile). The fraction of a macro-tile occupied by the diversification slots is only $O\big(\frac{\log N_{k+1}}{N_k}\big)$, and the choices of $\alpha_k$ and $\beta_k$ are under our control.
  
The construction works well for a very wide range  of parameters. 
For example, we can set $\alpha_k=1$ (or $\alpha_k=\mathrm{const}$ for any other constant) for all $k$, and vary $\beta_k$  between $1$ and, 
say,  $N_k/10$ (as a function of $k$).   

Note that if  $\alpha_k=\mathrm{const}$ and $\beta_k=\mathrm{const}$ for all $k$, then 
the value of $\nu_R(k)$ converges to some (computable) limit  in the interval $(0,1)$. Moreover, we can make this limit arbitrarily close to $1$ 
by choosing a large enough constant $C$ in the definition of the zoom factor $N_k=3^{C^k}$.

It remains to explain how to reduce the limit of  $\nu_R(k)$  to the given right recursively enumerable number $h<1$.  
To this end we increase the values of $\beta_k$ (for some indices $k$). 
There is a difficulty:   we cannot compute the exact value of $h$ in finite time, unless $h$ is rational.  Moreover,  in general, we cannot
even compute an $\epsilon$-approximation of $h$ for every given precision $\epsilon$. 
We only know that this number is right recursively enumerable. That is, by definition, there is an algorithm
which never halts and  enumerate an infinite sequence of rational numbers that converge to the limit $h$  from the right. We embed
in the computation zone this algorithm and simulate it within the space and time available on the computation zone
(the simulation is aborted when the computation runs out of the allocated space and time, see the discussion on p.~\pageref{disc:payload}).
As the level of a macro-tile grows, the size of the computation zone becomes greater, and we have more and more space
and time to simulate this computation.  Hence, on each next level of the hierarchy of macro-tiles we obtain better and better approximations 
of $h$. (We have no tool to estimate the precision of the current approximation, but we know that the limit of the enumerated sequence
is equal to $h$.)

Thus, each level-$k$ macro-tile obtains some approximation of $h$ (from above). 
Accordingly,  given the current approximation of $h$, we compute a suitable value of $\beta_k$,
so that $\nu_R(k)$ converges to  $h$ as $k$ tends to infinity.
\end{proof}

Combining Claim~1 and Claim~2 completes the proof of the theorem.

\end{proof}

\begin{proof}[Proof of Theorem~\ref{t:transitiv-hochman-meyerovich}]
At first we consider the case $h<1$.
Let us take the SFT $\cal S$ from the proof of Proposition~\ref{t:transitiv-hochman-meyerovich}. Now we make two copies of each red proto-tile and denote the resulting shift by ${\cal S}'$. We claim that ${\cal S}'$ has entropy $h$. Indeed, the hierarchical structure of macro-tiles gives no contribution to the entropy. Indeed, in the initial SFT $\cal S$ every level-$k$ macro-tile of size $L_k\times L_k$ can be reconstructed given its color (red or blue) and the macro-colors on its borderline, which requires only $O(L_k)$ bits of information. Thus, the positive entropy of the new shift ${\cal S}'$ results  from the choices between two copies of each red tile (at every position where such a tile is used). Hence, the entropy is equal to the $\limsup$ of the density of red tiles, which is guaranteed to be $h$.  

In case $h> 1$, we can superimpose the same construction with a trivial shift (without any local constraints) with an alphabet of size $2^{\lfloor h \rfloor}$.

\smallskip

Let us prove that ${\cal S}'$ is transitive.
The minimality of ${\cal S}$ means that {every} configuration in this shift contains all globally admissible patterns.  Let us fix any configuration  $x\in{\cal S}$. 
Now we want  to make a choice for each position with a red tile in $x$ 
so that the resulting configuration  involves  all  possible instantiations  of every   globally admissible pattern. 
In fact, this property is true
with probability $1$ if we choose an instance of each red tile in $x$ at random. Indeed,   
for every finite pattern  (with finitely many red tiles) we can find  in $x$ infinitely many disjoint  copies.
Hence, if we choose a version of every red tile at random, then with probability $1$ we  obtain infinitely
many copies of every instantiation of every finite pattern. 
\end{proof}

\begin{rmrk}
Theorem~\ref{t:transitiv-hochman-meyerovich} guarantees the existence of a transitive SFT with a given right recursively enumerable entropy.  The construction in the proof implies that this SFT enjoys also a (weak) version of irreducibility: every two globally admissible patterns can be combined in one common configuration; moreover, every two globally admissible patterns can be positioned rather close to each other (roughly speaking, every two globally admissible macro-tiles of level $k$ can be placed inside one and the same globally admissible macro-tile of level $(k+2)$). However, the relative arrangement of two globally admissible patterns must be coherent with the global hierarchical grid of macro-tiles. For example, the vertical and the horizontal translations of a pattern $P_1$ with respect to another pattern  $P_2$ in a common infinite configuration  modulo the first rank zoom factor $N_1$ is uniquely defined (by the this pair of patterns).
Gangloff and Sablik in \cite{gangloff-sablik} proposed a construction with a somewhat stronger property of irreducibility, where the relative arrangement of any two globally admissible patterns is rather flexible.
\end{rmrk}

\section{On subdynamics of co-dimension $1$ for self-si\-mu\-la\-ting SFT}\label{s-subdynamics}

In Section~\ref{s:noncomputability} we used a sort of  embedding  of one-dimensional sequences in a two-dimensional SFT.
That embedding was highly distributed: the first $\log k$ bits of the sequence embedded in a configuration could be found in every macro-tile of rank $k$ of this configuration.
In this section we discuss a different way of embedding a (bi-infinite) one-dimensional sequence in two-dimensional configurations of an SFT; 
this version of embedding is less distributed and more local. With this technique we will be able to control the subdynamics of a two-dimensional shift,
and as a result we will prove Theorems~\ref{thm-main}--\ref{thm-main-min}.

\subsection{The general scheme of letter delegation}\label{subsection:delegation}

We are going to embed a bi-infinite sequence $\mathbf{x} = (x_i)$ over an alphabet $\Sigma$ into our tiling.
To this end we assume that each individual $\tau$-tile ``keeps in mind'' a letter from $\Sigma$ that propagates without change in the
vertical direction.  Formally speaking, a letter from $\Sigma$ should be a part of the top and bottom colors of every $\tau$-tile 
(the letters assigned to both sides of a tile must be equal to each other),  see Fig.~\ref{fig-tile-with-embedded-letter} 
and   Fig.~\ref{fig-tiling-with-embedded-letters}.  
\begin{figure}
\begin{center}
\includegraphics[scale=0.75]{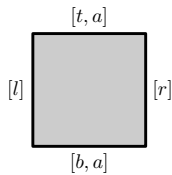}
\caption{A tile propagating a letter $a\in \Sigma$ in the vertical direction. Formally speaking, this tile is a quadruple of \emph{colors},
the left side has color  $[l]$, the right side has color $[r]$, the top and the bottom sides have colors $[t,a]$ and $[b,a]$, respectively.
The colors for the top and bottom sides involve a letter from $\Sigma$.
We allow only tiles where the colors of the top and bottom sides involve one and the same letter.  }\label{fig-tile-with-embedded-letter}
\end{center}
\end{figure}
We want to guarantee that a $\Sigma$-sequence  can be embedded in a $\tau$-tiling if and only if this sequence belongs to a fixed given effective 
shift $\cal A$.   (We postpone for a while the discussion of \emph{quasiperiodicity} of this embedding.)

The general plan is to ``delegate'' the factors of the embedded sequence to the computation zones of macro-tiles,
where these factors can be validated (that is, the simulated Turing machine  can verify whether these factors do not contain any forbidden subwords). 
By using tilings with growing zoom factors, we can guarantee that  the size of the computation zone of a $k$-rank macro-tile grows with $k$. 
So we have at our disposal the computational resources required to run all necessary validation tests on the embedded sequence.
It remains to organize the propagation of the letters of the embedded sequence to the ``conscious memory'' 
(the computation zones) of the macro-tiles of all ranks. In what follows we explain how this propagation is organized. 

 \emph{The zone of responsibility  of a macro-tile.}\label{letter-delegation}
In our construction,  a macro-tile of level $k$ is a square of size $L_k\times L_k$, with 
$L_k=N_1\cdot N_2\cdot \ldots\cdot N_{k}$ (where $N_i$ is the zoom factor on level $i$ of the hierarchy of macro-tiles).
We say that a level-$k$ macro-tile is \emph{responsible} for the letters of the embedded sequence $\mathbf{x}$
assigned to the columns of the (ground level) tiles of this macro-tile as well as  to the columns of macro-tiles of the same rank 
on its left and on its right. 
That is, the \emph{zone of responsibility} of a level-$k$ macro-tile is a factor of length $3L_k$
from the embedded sequence, see Fig.~\ref{fig-responsib}. (The zones of responsibility of  two vertically aligned macro-tiles are the same; the zones of responsibility of  two horizontally neighboring macro-tiles overlap.)

\begin{figure}
\begin{center}
\includegraphics[scale=0.70]{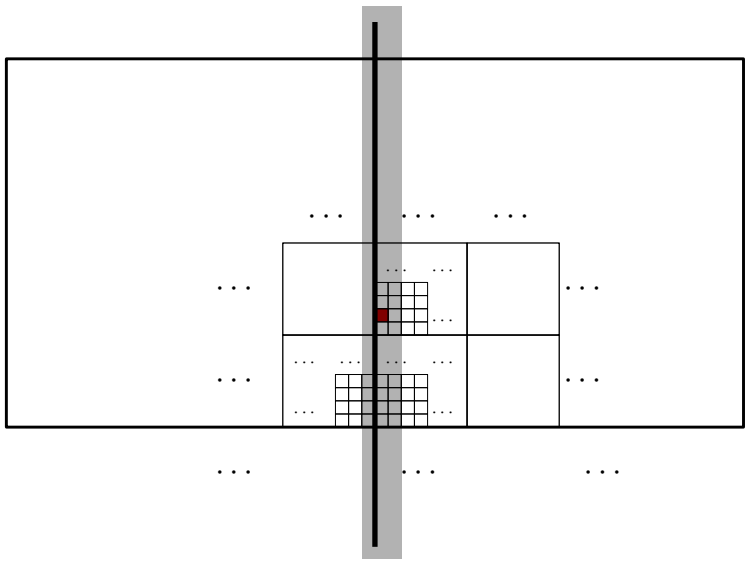}
\caption{The zone of responsibility (the gray vertical stripe)  of a macro-tile (the red square) is three times wider than the macro-tile itself.}\label{fig-responsib}
\end{center}
\end{figure}

\emph{Letter assignment:}\label{letter-assignment}
The computation zone of a level-$k$ macro-tile (of size $m_k\times m_k$) is too small to contain all the letters from its zone 
of responsibility. So we require that the computation zone obtains as an input a (short enough) chunk of letters from its zone 
of responsibility. Let us say that it is a factor of length $l_k := \log \log L_k$ from the stripe of $3L_k$ columns constituting the zone
of responsibility of this macro-tile. We will say that this chunk is \emph{assigned} to this macro-tile.

The infinite stripe of vertically aligned level-$k$ macro-tiles share the same zone of responsibility.  However, different macro-tiles 
in such a stripe will obtain different assigned chunks. The choice of the assigned chunk varies from $0$ to $(3L_{k}-l_k)$. Therefore, we need
to choose for each level-$k$ macro-tile a position of a factor of length $l_k$ in its zone of responsibility of length $3L_k$.  
This choice is quite arbitrary.
Let us say, for definiteness, that for a macro-tile $M$ of rank $k$
the first position of the assigned chunk (in the stripe of length $3L_k$) is defined as the vertical position of $M$ in the father macro-tile
of  rank $(k+1)$ (taken modulo $(3L_{k}-l_k)$). 

\begin{rmrk} 
We have chosen zoom factors $N_k$ growing doubly exponentially in $k$, so $N_{k+1}\gg 3L_k$. Hence, every chunk of length $l_k$ from
a stripe of width $3L_k$ is assigned to some \textup(actually, to infinitely many\textup) of the macro-tiles ``responsible''  for these $3L_k$ letters.
\end{rmrk}

\begin{rmrk}  
Since the zones of responsibility of neighboring level-$k$ macro-tiles overlap by more than $l_k$, every finite factor of length $l_k$ in the embedded sequence $\mathbf{x}$  is assigned  to some level-$k$ macro-tile \textup(even if the $l_k$ columns containing the letters of this factor are not covered by any single level-$k$ macro-tile and touch two horizontally neighboring level-$k$ macro-tiles\textup), see Fig.~\ref{fig-extended-resp-zone}.
\begin{figure}
\centering
\includegraphics[scale=0.30]{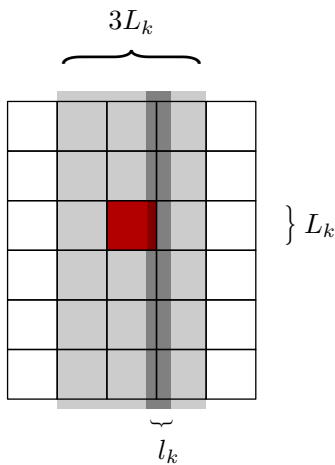}
\caption{The macro-tile of size $L_k$ (shown in red) is \emph{responsible} for the vertical stripe of width $3L_k$ shown in light gray (three times wider than the macro-tile itself). Such a macro-tile can handle a factor of length $l_k$ of the embedded sequence that corresponds to the group of columns that touch this macro-tile as well as its  neighbor on the right or on the left (an example is shown in dark gray).}\label{fig-extended-resp-zone}
\end{figure}
\end{rmrk}

\emph{Implementing the letter assignment by self-simulation.} In the \emph{letter assignment} paragraph above, we formulated several requirements: how the data should be be propagated from the ground level (individual tiles) to level-$k$ macro-tiles.  That is, for each level-$k$ macro-tile $\cal M$ we specified which chunk of the embedded sequence should be a part of the data fields on the computation zone of  $\cal M$. So far we have not explained how this propagation can be implemented, i.e., how the assigned chunks can arrive at the high-level data fields.  
Now, we will explain how to implement the required scheme of letter assignment in a self-simulating tiling. 
 Technically, we append to the input data of the computation zones of macro-tiles some supplementary data fields:
 \begin{itemize}
 \item[(iv)] a block of $l_k$ letters from the embedded sequence assigned to this macro-tile,
 \item[(v)] three blocks of bits of $l_{k+1}$ letters  of the embedded sequence assigned to this father macro-tile,
 and two uncle  macro-tiles (the left and the right neighbors of the father),
 \item[(vi)] the coordinates of the father macro-tile in the ``grandfather'' (of rank $(k+2)$).
 \end{itemize}
(In other words, the first line of the computation zone still looks similar to Fig.~\ref{fig-macrotile-comp-zone}, 
but now it contains more input data in the first line of the computation zone.)
Speaking informally, the computation in each level-$k$ macro-tile must check the consistency of the data in fields (iv), (v) and (vi). 
That is, if some letters from the fields 
(iv) and (v) correspond to the same vertical column in the zone of responsibility, then these letters must be equal to each other. Also,
if a level-$k$ macro-tile plays the role of a cell in the computation zone of the level-$(k+1)$ father, it should check the consistency of
its (v) and (vi) with the bits displayed in  the father's computation zone.  Lastly, we must ensure the coherence of  the fields (v) and (vi)
for each pair of neighboring level-$k$ macro-tiles; so this data should also be a part of the macro-colors (with a minor exception, see
Remark~\ref{r:field-vi-exception} below).

Note that the data from the ``uncle'' macro-tiles are necessary to deal with the letters from the columns that physically  belong to the
neighboring macro-tiles. So the consistency of the fields (v) is imposed also on  neighboring  level-$k$ macro-tiles that belong to 
different  level-$(k+1)$ fathers (the borderline between these  level-$k$ macro-tiles is also the borderline between their fathers).

The coherence of fields (iv), (v), (vi) on every level of the hierarchy implies that for each macro-tile the content of  field~(iv)
fairly represents  the assigned factor of the embedded sequence.
If we want to be formal, this statement can be proven by induction on the level $k$ of the hierarchy of macro-tiles. 
Indeed, by construction, 
the chunk of $l_{k}$ letters assigned to a level-$k$ macro-tile $M$  is ``known'' to the ``children''  and ``nephews'' of this macro-tile --- 
to those level-$(k-1)$ macro-tiles that constitute $M$  itself and its ``brothers'' on the left and on the right; these 
level-$(k-1)$ macro-tiles make sure that their father's (or uncle's) chunk of $l_{k}$  letters is consistent with their own chunks of $l_{k-1}$ letters.
The letters assigned to the  level-$(k-1)$ macro-tiles must be consistent with the letters assigned to their own children, which
are level-$(k-2)$ macro-tiles, and which in turn keep on their computation zones the assigned chunks of $l_{k-2}$ letters, and so on.
On the ground level we explicitly make sure that the assigned letters are consistent with the symbols of the embedded sequence.

This scheme works properly since the chosen zoom factors $N_k$ grow very fast, so that $N_{k}\gg L_{k-1}$. This guarantees
for every level-$k$ macro-tile $M$ that each factor of the embedded sequence of length $l_{k-1}$ that appears in the zone of responsibility of $M$,
is assigned to several level-$(k-1)$ macro-tiles $M'$ that are children or nephews of $M$. So every letter assigned to $M$ is validated by 
macro-tiles of the previous level.  

This procedure of validation of letter assignment is even redundant --- each letter from the zone of responsibility of a level-$k$ macro-tile $M$ 
is validated at once by  \emph{many} level-$(k-1)$ macro-tiles $M'$ inside $M$. The delegation still works correctly 
even if we exclude some level-$(k-1)$ macro-tiles 
from the procedure of validation of the data of their level-$k$ father. 
More precisely, the construction still works properly, 
if (a)~the excluded macro-tiles occupy only 
$O(1)$ successive positions in each column, and (b) the non-excluded macro-tiles constitute 
a connected component and, therefore, are coherent with each other. Thus, in what follows we may revise our construction
and exclude, for example, the cells of the \emph{communication wires} from  validation of their father's data,
see Remark~\ref{r:field-vi-exception} below.

The  computation that verifies the coherence of the data in fields (iv)--(vi) is pretty simple. It can be performed in polynomial time, and 
the required revision of the construction fits the usual constraints on the parameter (the size of the computation zone in a level-$k$ macro-tile
is $\poly\log(N_k)$). For a detailed discussion of the hierarchical
schema of ``letter delegation''  we refer the reader to  \cite[Section~7]{drs}).

\begin{rmrk}\label{r:field-vi-exception}
As mentioned above, the defined construction  is somewhat excessive:  
to ensure the correct ``information propagation'' from the level $(k-1)$   to the  level $k$ of the hierarchy, 
we do not need to keep the content of the auxiliary fields~(v) and~(vi) in  \emph{each} macro-tiles  of level $(k-1)$.

We  take advantage of this observation and make a minor (seemingly artificial) revision of our construction.
We assume that the content of the field~(vi) is \emph{empty} for the macro-tiles that play  in their fathers the role of a communication wire, 
as well as the neighbors of the  communication wires. 
(Every macro-tile ``knows'' its position in the father macro-tiles, so it knows whether it is a communication wire or not). 
Observe that the macro-tiles with a non-empty field~(vi) form a connected component in their father macro-tile, so they must be 
coherent with each other.
The purpose of this modification  will become clear in the proof of Theorem~\ref{thm-main} and Theorem~\ref{thm-main-min}
(see Remark~\ref{r:field-vi-exception-in-use} below).
\end{rmrk}

\emph{Concluding remarks: Testing against forbidden factors.}
To guarantee that the embedded sequence $\mathbf{x}$ contains no forbidden patterns, each level-$k$ 
macro-tile should allocate some  part of its computation zone to enumerate (within the limits of available space and time) 
the forbidden pattern, and verify that the block of $l_k$ letters assigned to this macro-tile contains none of the  forbidden factors found.

The time and space allocated to  enumerating the  forbidden words  grow as  functions of $k$. To ensure that the embedded sequence
contains no forbidden patterns,  it is
enough to guarantee that each forbidden pattern is found by macro-tiles of high enough rank, and every factor of the embedded sequence
is compared (on some level of the hierarchy) with every forbidden factor.

Thus, we get a  construction of a two-dimensional tiling that simulates a given effective one-dimensional shift: for a given effective one-dimensional
shift $\cal A$, we can construct a tile set $\tau$ such that  the bi-infinite sequences that can be embedded in $\tau$-tilings are
exactly the sequences of $\cal A$. 
In the next sections we explain how to make these tilings quasiperiodic in the case when the simulated one-dimensional shift is also quasiperiodic.

\subsection{The letter delegation scheme combined with quasiperiodicity}\label{subsection:delegation-quasiperiodicity}\label{subsection:comb-lemmas}

In Section~\ref{s:quasiperiodic-sft} we described a very general construction of a self-simulating tile set, and showed that the corresponding SFT enjoys the properties of quasiperiodicity or even minimality. In the previous section we upgraded this construction and superimposed on the generic scheme of self-simulation   a new  technique:  the scheme of embedding in a tiling a  sequence from some effective one-dimensional shift. In general, the new ``upgraded'' SFT may lose the property of quasiperiodicity. To maintain it, we will need some additional effort. The following lemma is a useful technical tool: it helps  control the properties of quasiperiodicity and minimality of the tilings with embedded sequences.
 \begin{lemma}\label{lemma-clones-with-embedded-bits}
For a tile set defined in Section~\ref{subsection:delegation}, 
two globally admissible macro-tiles of rank $k$ are equal to each other if these macro-tiles
\begin{itemize}
\item[\textup{(a)}] contain the same bits in fields \rm{(i)--(vi)} in the input data on the computation zone,
   and
 \item[\textup{(b)}] the factors of the encoded sequence corresponding to the zones of responsibility   of these macro-tiles \textup(in the corresponding vertical stripes of width $3L_k$\textup) are equal to each other.
 \end{itemize}
 \end{lemma}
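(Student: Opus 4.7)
The plan is to argue by induction on the rank $k$, showing that the entire tile-by-tile content of a globally admissible rank-$k$ macro-tile is recoverable from its input fields (i)--(vi) together with the letters of the embedded sequence in its $3L_k$-wide zone of responsibility.

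For the base case $k=1$ the cells are ground-level tiles. Each ground-level tile's role within the macro-tile (skeleton, communication wire, computation-zone tile, or diversification-slot tile) is determined solely by its coordinates, which are encoded in the macro-colors (part of field (iii)). The bit content carried by wire, computation-zone, and slot tiles is determined by fields (i)--(vi): the computation-zone tiling is forced by the deterministic simulation of the universal Turing machine on the given input, using the $2\times 2$-determinacy property (p2); wire bits come from field (iii); and slot tiles merely replicate $2\times 2$ patterns already fixed in the computation zone. Finally the letter that each column propagates vertically is read off from the embedded sequence at the column's position, which lies in the given $3L_1$-wide stripe.

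For the inductive step $k \geq 2$, assume the lemma for rank $k-1$. It suffices to reconstruct each of the $N_k\times N_k$ rank-$(k-1)$ children of the parent rank-$k$ macro-tile and apply the induction hypothesis to each. Fix a child at position $(i,j)$. Its role in the parent is computable from $(i,j)$, and its macro-colors (field (iii)) are dictated either by the parent's outward macro-colors (for boundary children), by the bits flowing along wires (for wire children), by the TM tape content at the corresponding step via property (p2) (for computation-zone children), or by the replicated pattern (for diversification-slot children); in each case the information comes from the parent's fields (i)--(vi). The child's field (iv), an $l_{k-1}$-letter chunk of the embedded sequence starting at an offset determined by $(i,j)$, is recovered from the parent's $3L_k$-wide stripe because the child's $3L_{k-1}$-wide zone of responsibility sits inside it. Field (v) of the child is the parent's own assigned chunk (= the parent's field (iv), given) together with the $l_k$-chunks of the parent's two horizontal uncles at rank $k$; since horizontally adjacent rank-$k$ macro-tiles share the same row in their respective rank-$(k+1)$ fathers, these uncles use the same offset as the parent (which is encoded in the parent's field (iii)), and so the chunks can be read from the given $3L_k$-wide sequence. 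Field (vi), namely the parent's coordinates in its rank-$(k+1)$ father, is contained in the parent's macro-colors. Thus each child satisfies the hypotheses of the lemma at rank $k-1$ and is therefore uniquely determined, completing the induction.

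The main obstacle is the bookkeeping around field (v): one must verify carefully that the positions from which the uncles' assigned chunks are read actually fall inside the parent's $3L_k$-wide zone of responsibility, relying on the coincidence of row indices of horizontally adjacent rank-$k$ macro-tiles in their respective rank-$(k+1)$ fathers, together with the choice of the width $3L_k$. Once this bookkeeping is set up, the rest of the recursion is a mechanical unwinding of the self-simulation and the delegation scheme from Section~\ref{subsection:delegation}.
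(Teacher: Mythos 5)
Your proof takes the same route as the paper's: induction on the rank $k$, reconstructing the rank-$(k-1)$ children of a rank-$k$ macro-tile from the parent's data and the embedded sequence, and then invoking the inductive hypothesis. The base case and the recovery of fields (iv) and (vi) in the inductive step are fine and match the paper's (much terser) sketch.

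The problem is exactly where you suspect it, but the difficulty does not resolve the way you claim. To reconstruct the child's field (v) you need the $l_k$-letter chunks assigned to the parent's two horizontal neighbors, and you assert these ``can be read from the given $3L_k$-wide sequence'' because the offsets coincide. That coincidence is correct, but it does not put the chunks inside the parent's zone of responsibility. Write $p$ for the common offset (the parent's vertical position taken $\bmod\ (3L_k-l_k)$). In the parent's stripe coordinates the parent's chunk starts at $p$, but the left neighbor's chunk starts at $p-L_k$, which is negative whenever $p<L_k$; symmetrically the right neighbor's chunk overruns the stripe whenever $p>2L_k-l_k$. Since $p$ ranges over all of $[0,3L_k-l_k)$, for a large fraction of offsets at least one uncle's chunk lies outside the $3L_k$-wide factor given in hypothesis (b), so the bookkeeping you defer to (``relying on $\ldots$ the choice of the width $3L_k$'') does not go through. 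The reconstruction you need is not from the stripe at all: the uncles' chunks reach the children only because fields (v)/(vi) are carried inside the children's macro-colors and are matched across the parent by the coherence constraints, so they are fixed once the parent's macro-colors (field (iii)) and internal structure are fixed. This is precisely what the paper's one-line argument appeals to when it says ``the communication wires of $M_1$ and $M_2$ carry the same information bits, their computation zones represent exactly the same computations, etc.'' (It is also why Corollary~\ref{corollary-clones-with-embedded-bits} deliberately bundles the fathers' $3L_{k+1}$-wide stripes into its hypothesis; that extra width is wide enough to contain the uncles' chunks for every offset, whereas $3L_k$ is not.)
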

 \begin{proof}
 The proof is by induction on rank $k$. For a macro-tile of rank $1$, the statement follows directly from the construction. 
 In the inductive step, we are given a pair of macro-tiles $M_1$ and $M_2$ of rank  $(k+1)$ that hold identical data in fields (i)--(vi), and
 the  factors (of length $3L_k$) from the encoded sequences in the zones of responsibility of  $M_1$ and $M_2$  are also equal to each other.
We observe that the corresponding cells in $M_1$ and $M_2$
 (which are macro-tiles of rank $k$) contain the same data in their own   fields (i)--(vi), since  the communication wires of $M_1$ and $M_2$
 carry the same information bits, their computation zones represent exactly the same 
 computations, etc. 
Therefore, we can apply the inductive assumption.
 \end{proof}
 The statement of Lemma~\ref{lemma-clones-with-embedded-bits} is chosen in such a way  that the inductive proof is simple.
However, to apply this lemma, it is useful to separate the data contained  in a macro-tile into two parts:
 the data relevant to the construction of the hierarchical structure of macro-tiles,  and the symbols of the embedded sequence:
  \begin{corollary}\label{corollary-clones-with-embedded-bits}
 For a tile set defined in Section~\ref{subsection:delegation},  two globally admissible level-$k$ macro-tiles $M_1$ and $M_2$ are equal to each other if 
 the following conditions hold true
\begin{eqnarray}\label{lemma-clones-with-embedded-bits-a}
\left.
\begin{array}{l}
\bullet\ \text{\rm$M_1$ and $M_2$ have the same position \textup(modulo $N_{k+1}$\textup) with respect} \\
\text{\rm to their fathers, which are level-$(k+1)$ macro-tiles;} \\
\\
\bullet\  \text{\rm the fathers of $M_1$ and $M_2$ have the same position \textup(modulo $N_{k+2}$\textup) } \\
\text{\rm with respect to the grandfathers of $M_1$ and $M_2$, which are in turn }\\
\text{\rm level-$(k+2)$ macro-tiles;}\\
\\
\bullet\ \text{\rm if  $M_1$ and $M_2$ play the role of communication wires in their fathers, } \\
\text{\rm then these wires communicate the same value \textup(i.e,  both of them}\\
\text{\rm  communicate  $0$ or both of them communicate  $1$\textup)};\\
\\
\bullet\ \text{\rm  if  $M_1$ and $M_2$ are involved in the computation zone in their fathers, } \\
\text{\rm  then they contain identical finite patterns of the space-time diagram};\\
\end{array}
\right\}
 \end{eqnarray}
and 
\begin{eqnarray}\label{lemma-clones-with-embedded-bits-b}
\left.
\begin{array}{l}
\bullet\ \text{\rm the factors of length $3L_{k}$ of the embedded sequence for which  $M_1$} \\
\text{\rm  and $M_2$ are responsible are equal to each other; moreover, the factors    } \\
\text{\rm of length $3L_{k+1}$  of the embedded sequence for which the fathers of} \\
\text{\rm  $M_1$ and $M_2$ are responsible, are also equal to each other}. \\
\end{array}
\right\}
 \end{eqnarray}
 \end{corollary}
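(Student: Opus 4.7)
The plan is to reduce Corollary~1 to Lemma~\ref{lemma-clones-with-embedded-bits}: I will verify that conditions (\ref{lemma-clones-with-embedded-bits-a})--(\ref{lemma-clones-with-embedded-bits-b}) of the corollary force $M_1$ and $M_2$ to share identical data in all six input fields (i)--(vi) of their computation zones, and that their length-$3L_k$ factors of the embedded sequence coincide. Condition~(b) of the lemma is given outright by the last bullet of (\ref{lemma-clones-with-embedded-bits-b}), so the core of the argument is a field-by-field inspection aimed at recovering condition~(a).

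Several fields fall out directly. Fields (i) and (ii) are uniform across all level-$k$ macro-tiles---the hardwired program $\pi$ and the binary expansion of the rank $k$---so they agree trivially. Field (vi) is exactly the father-in-grandfather coordinate data furnished by the second bullet of (\ref{lemma-clones-with-embedded-bits-a}). Field (iii) decomposes, for each of the four macro-colors, into the position in the father modulo $N_{k+1}$ (controlled by the first bullet) and an $O(1)$-bit supplementary payload; depending on what role $M_i$ plays inside its father, this payload is either a bit travelling along a communication wire, a fragment of the simulated space--time diagram, or an embedded letter propagated vertically, and in each case it is pinned down by one of the remaining bullets of (\ref{lemma-clones-with-embedded-bits-a}) or by the length-$3L_k$ agreement of (\ref{lemma-clones-with-embedded-bits-b}). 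Field (iv), the $l_k$-letter chunk assigned to $M_i$, is a function of the vertical coordinate in the father (first bullet) and of the length-$3L_k$ factor in the zone of responsibility, so it agrees as well.

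The one step where genuine effort is required is field (v), which carries the length-$l_{k+1}$ chunks assigned to the father and to its two uncle macro-tiles. The father's own chunk is handled in parallel with field (iv), using the second bullet of (\ref{lemma-clones-with-embedded-bits-a}) together with the length-$3L_{k+1}$ hypothesis of (\ref{lemma-clones-with-embedded-bits-b}). The subtle case, which I expect to be the main obstacle, is the two uncle chunks: a priori an uncle's zone of responsibility protrudes $L_{k+1}$ columns beyond the father's, so the uncle's assigned $l_{k+1}$-letter window may land in a region that is not covered by the length-$3L_{k+1}$ factor supplied by the corollary. The plan for handling this is twofold. First, I will check that the uncles' assignment offsets coincide for $M_1$ and $M_2$: an uncle's vertical coordinate inside its own level-$(k+2)$ parent equals the father's vertical coordinate in the grandfather, even when the uncle lies in a neighbouring grandfather, so the offsets are forced to agree by the second bullet of (\ref{lemma-clones-with-embedded-bits-a}). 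Second, I will exploit the redundancy of the delegation scheme from Section~\ref{subsection:delegation}: the uncle chunks stored in field (v) of $M_i$ are also required to be consistent, via the coherence rules imposed on fields (iv)--(vi) across neighbouring macro-tiles, with the corresponding data in the father's own field (v); this reduces the equality of uncle chunks to a level-$(k+1)$ instance of the same statement, which can be threaded into an induction on the rank that mirrors the proof of Lemma~\ref{lemma-clones-with-embedded-bits}. Once all six fields are shown to agree, Lemma~\ref{lemma-clones-with-embedded-bits} applies and yields $M_1 = M_2$, concluding the proof.
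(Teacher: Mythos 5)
Your reduction to Lemma~\ref{lemma-clones-with-embedded-bits} via a field-by-field check of (i)--(vi) is precisely the paper's approach; the paper's proof merely asserts that conditions \eqref{lemma-clones-with-embedded-bits-a}--\eqref{lemma-clones-with-embedded-bits-b} force the six input fields to coincide, and you have usefully spelled out why this is clear for fields (i)--(iv) and (vi). You have also correctly isolated the one genuinely delicate point: the uncle chunks in field (v). A left (resp.\ right) uncle's $l_{k+1}$-letter window is taken at an offset $d$ (the common vertical coordinate taken modulo $3L_{k+1}-l_{k+1}$) within the \emph{uncle's own} zone of responsibility, which is shifted by $L_{k+1}$ relative to the father's; for $d<L_{k+1}$ (resp.\ $d\ge 2L_{k+1}-l_{k+1}$) that window falls outside the father's $3L_{k+1}$-wide zone that hypothesis \eqref{lemma-clones-with-embedded-bits-b} controls.

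However, the remedy you propose for this case does not hold up. You argue that the uncle chunks in $M_i$'s field (v) are pinned down because they must cohere ``with the corresponding data in the father's own field (v).'' But the father's field (v) carries the chunks assigned to the \emph{grandfather} and to the grandfather's two horizontal neighbours, not the chunks assigned to $M_i$'s uncles, so that coherence constraint says nothing about the data in question; and the ``induction on rank'' you invoke is never actually set up, nor is it clear what its hypothesis would be, since the uncle chunk is a level-$(k+1)$ datum rather than something lower in the hierarchy. What would actually close this step is to strengthen hypothesis \eqref{lemma-clones-with-embedded-bits-b} so that the agreeing factor of the embedded sequence covers the union of the father's and the two uncles' zones of responsibility (a stripe of width $5L_{k+1}$); in every place the corollary is used, that wider factor is produced by Lemma~\ref{lemma-quasiperiodic-times-periodic} just as readily as the $3L_{k+1}$-wide one. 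As written, the uncle-chunk case in your argument is not resolved.
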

\begin{proof}
Conditions  \eqref{lemma-clones-with-embedded-bits-a} and  \eqref{lemma-clones-with-embedded-bits-b}
imply that the fields \rm{(i)--(vi)} in the input data on the computation zones of $M_1$ and $M_2$ are equal to each other.
Besides,  \eqref{lemma-clones-with-embedded-bits-b} implies that  $M_1$ and $M_2$ contain  in their zones of responsibility
the same factors of the embedded sequence. Thus, we can apply Lemma~\ref{lemma-clones-with-embedded-bits}.
\end{proof}
\begin{rmrk}
We made  Condition \eqref{lemma-clones-with-embedded-bits-b} very strict
(we could be less restrictive on the  symbols in the zones of responsibility of the fathers of  $M_1$ and $M_2$)
in order to simplify the future applications of this corollary.
\end{rmrk}

To control the property of  quasiperiodicity of self-simulating tilings, 
we will use two  simple  lemmas concerning quasiperiodic sequences.  
One of them (Lemma~\ref{lemma-quasiperiodic-times-periodic}, which is known from  \cite{avgust,salimov}) is purely combinatorial.
In the other one (Lemma~\ref{lemma-minimal-times-periodic}, which to the best of our knowledge is new), we combine the combinatorial
properties with an algorithmic twist.
For the sake of self-containedness, we give the proofs of  both of these lemmas in the next section.

\begin{lemma}
\label{lemma-quasiperiodic-times-periodic}
 Let $\mathbf{x}=(\ldots x_{-1} x_0 x_1 x_2 \ldots)$ be a bi-infinite  recurrent sequence,  $v=x_s x_{s+1}\ldots x_{s+N-1}$ be an $N$-letter factor  
in $\mathbf{x}$, and $q$ a positive integer.
Then there exists an integer $t>0$ such that another copy of $v$ appears in  $\mathbf{x}$ with a translation $q\cdot t$, i.e., 
 \begin{equation}\label{eq:lemma2}
  x_s x_{s+1}\ldots x_{s+N-1} = x_{s+qt}x_{s+qt+1}\ldots x_{s+qt+N-1}.
 \end{equation}
In other words, in  $\mathbf{x}$ there exists another instance of the same factor $v$ with a translation divisible by $q$.

Moreover, if $\mathbf{x}$ is quasiperiodic, then  for all integers $q$ and $N$ there exists an integer $L=L(\mathbf{x},N,q)$  such that 
the absolute value of  $(qt)$ in \eqref{eq:lemma2} can be chosen less than $L$.
\textup(Note that $L$ does not depend on $s$, i.e., it does not depend on a specific instance of the pattern $v$ in $\mathbf{x}$.\textup)
\end{lemma}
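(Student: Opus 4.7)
My plan is to translate the combinatorial statement into a topological-dynamics one and derive it from the fact that $\sigma$-recurrence of a bi-infinite sequence implies $\sigma^q$-recurrence. Viewing $\mathbf{x} \in \Sigma^{\mathbb{Z}}$ as a point under the shift $\sigma$, recurrence of $\mathbf{x}$ is equivalent (in the topological-dynamics sense) to the existence of a sequence $n_i \to +\infty$ with $\sigma^{n_i}(\mathbf{x}) \to \mathbf{x}$ in the product topology, and the lemma's conclusion asks for such a sequence along multiples of $q$: indeed, $\sigma^{qt}(\mathbf{x})$ agreeing with $\mathbf{x}$ on $[s,s+N-1]$ is exactly the reappearance of $v = x_s\ldots x_{s+N-1}$ at position $s+qt$.

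To pass from $\sigma$-recurrence to $\sigma^q$-recurrence I would study the set $A = \{\, r \in \mathbb{Z}/q\mathbb{Z} : \sigma^{-r}(\mathbf{x}) \in \omega(\mathbf{x}, \sigma^q)\, \}$. Pigeonholing the residues $n_i \bmod q$ produces some $r$ taken by infinitely many $n_i$, so writing $n_i = r + q m_i$ gives $\sigma^{q m_i}(\mathbf{x}) \to \sigma^{-r}(\mathbf{x})$ with $m_i \to +\infty$; hence $A$ is non-empty. The set $A$ is also closed under addition: for $r, r' \in A$ realised by sequences $\sigma^{q m_i}(\mathbf{x}) \to \sigma^{-r}(\mathbf{x})$ and $\sigma^{q m'_j}(\mathbf{x}) \to \sigma^{-r'}(\mathbf{x})$, the continuity of each shift $\sigma^{q m'_j}$ yields
\[
 \sigma^{q(m_i + m'_j)}(\mathbf{x}) \;=\; \sigma^{q m'_j}\bigl(\sigma^{q m_i}(\mathbf{x})\bigr) \;\longrightarrow\; \sigma^{-(r+r')}(\mathbf{x})
\]
along a diagonal extraction in $(i,j)$ with $m_i + m'_j \to +\infty$, so $r+r' \in A$. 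Since any non-empty sub-monoid of the finite group $\mathbb{Z}/q\mathbb{Z}$ contains $0$ (any $a \in A$ has some order $d \mid q$ and $d \cdot a = 0 \in A$), we conclude $0 \in A$, i.e.\ $\mathbf{x} \in \omega(\mathbf{x}, \sigma^q)$. Unwinding the definitions produces some $t > 0$ with $x_{s+qt}\ldots x_{s+qt+N-1} = v$, proving the first part.

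For the \emph{moreover} clause, assume $\mathbf{x}$ is quasi-periodic, so that its $\sigma$-orbit closure $X$ is a minimal subshift. Fix any minimal $\sigma^q$-invariant subset $X_0 \subseteq X$; the translates $X_0, \sigma(X_0), \ldots, \sigma^{q-1}(X_0)$ are each minimal under $\sigma^q$ (being $\sigma^i$-images of $X_0$ under homeomorphisms commuting with $\sigma^q$), their union is $\sigma$-invariant, and by minimality of $X$ coincides with $X$. Thus $X$ decomposes into finitely many minimal $\sigma^q$-components and $\mathbf{x}$ lies in one of them; since every point of a minimal system is uniformly recurrent, the set of $\sigma^q$-return times of $\mathbf{x}$ into the cylinder determined by the pattern $v$ is syndetic with a gap depending only on the cylinder, i.e.\ only on $N$ and $q$. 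This gives the uniform bound $L = L(\mathbf{x}, N, q)$ in the statement, independent of the specific occurrence $s$. The most delicate step in the whole argument is the closure of $A$ under addition, where one must order the two limits carefully and guarantee that the diagonal exponents tend to $+\infty$ rather than staying bounded; a preliminary point worth noting is that \emph{bi-infinite recurrent} should be read in the topological-dynamics sense (every factor reappearing infinitely often to the right, which is automatic under quasi-periodicity), since otherwise picking $s$ as a rightmost occurrence of $v$ would immediately falsify the conclusion.
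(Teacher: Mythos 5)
Your proof is correct and takes a genuinely different route from the paper's. The paper argues purely combinatorially: iterating recurrence produces positive shifts $l_1,\ldots,l_k$ such that $v$ reappears at every position $\sum_{i\in S} l_i$ for $S\subseteq\{1,\ldots,k\}$, and once $k>q(q-1)$ the pigeonhole principle on residues modulo $q$ yields a subset summing to a multiple of $q$; for the \emph{moreover} clause the function of quasiperiodicity bounds each $l_i$, hence gives an explicit (if recursively defined) $L$. You instead work at the level of topological dynamics: the set $A$ of residues $r$ with $\sigma^{-r}(\mathbf{x})\in\omega(\mathbf{x},\sigma^q)$ is a non-empty additively closed subset of $\mathbb{Z}/q\mathbb{Z}$ and hence contains $0$, and for the \emph{moreover} clause you decompose the minimal orbit closure into finitely many $\sigma^q$-minimal pieces. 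Both are sound; the paper's version is more elementary and produces a concrete bound, while yours is slicker conceptually but leaves implicit the compactness step needed to make the return time uniform in $s$ --- the various $\sigma^s(\mathbf{x})$ fall into different $\sigma^q$-minimal components, so one must note that the first-return time to the defining cylinder is locally bounded on each component (hence globally bounded by compactness) and then take the maximum over the finitely many components. Finally, your closing caveat about the meaning of \emph{recurrent} deserves to be taken seriously: with the paper's verbatim definition (every pattern appearing once appears infinitely often, direction unspecified), the lemma as stated is actually false for bi-infinite sequences. For example, set $x_i=0$ for $i\geq 0$ and choose the left half $(x_i)_{i<0}$ so that for every $M$ the word $0^{M+1}x_{-1}x_{-2}\cdots x_{-M}$ recurs infinitely often reading leftward (this is straightforward to arrange by a nested substitution); then every factor of $\mathbf{x}$ appears infinitely often, yet $x_{-1}x_0x_1$ has its rightmost occurrence at $s=-1$ and no $t>0$ exists. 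Forward recurrence is what is genuinely needed; both your proof and the paper's use it implicitly, and in the paper's applications the sequences are quasiperiodic so no harm is done, but you were right to flag it.
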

\begin{rmrk}
From the symmetry, it follows that a similar statement can be proven with an integer  $t<0$. Thus, we can find in $\mathbf{x}$ 
two copies of $v$, one on the left and the other one on the right of the originally given instance of this factor, both copies with translations divisible by $q$.
Clearly, we can iterate this procedure and obtain a bi-infinite (infinite to the left and to the right) sequence of copies of $v$, where each copy has a translation 
(with respect to the original instance of the factor) divisible by $q$.
\end{rmrk}
\begin{rmrk}\label{rmrk:dense-siblings}
The property of quasiperiodicity  guarantees by definition that there is a uniform bound for the gaps between neighboring appearances of 
each $N$-letter factor $v$  in $\mathbf{x}$. 
Lemma~\ref{lemma-quasiperiodic-times-periodic} claims that there is also a uniform bound for the gaps between neighboring appearances of 
each $N$-letter factor $v$  in $\mathbf{x}$, for appearances at positions that are congruent to each other modulo $q$.
\end{rmrk}

\begin{rmrk}
For higher dimensions, Lemma~\ref{lemma-quasiperiodic-times-periodic} can be generalized as follows.
 Let $\mathbf{x}$ be a  $d$-dimensional recurrent configurations on $\mathbb{Z}^d$ (over a finite alphabet $\Sigma$),  
 $v$ be a finite pattern  in $\mathbf{x}$, and $q$ be a positive integer.
Then in $\mathbf{x}$ there exists another instance of the same pattern $v$   such that  the translation between these two copies of 
$v$ is a non-zero vector $\bar t = (t_1,\ldots,t_d)$, where each component $t_i$ is divisible by $q$.
Moreover, if $\mathbf{x}$ is quasiperiodic, then the size of each $t_i $ can be bounded by some number $L$ that depends only on $\mathbf{x}$, $q$,  and $v$ \textup(but not on a specific instance of the pattern $v$ in $\mathbf{x}$\textup). 

Below we prove Lemma~\ref{lemma-quasiperiodic-times-periodic}  for $d=1$; our argument can be easily extended to the general case (for any $d>1$).
\end{rmrk}

In the next lemma we use the following notation.
For a configuration $\mathbf{x}$ (over some finite alphabet) we denote with ${\cal S}(\mathbf{x})$ the shift that consists of all configurations $\mathbf{x}'$ containing only patterns from $\mathbf{x}$. If a shift $\cal T$ is minimal, then ${\cal S}(\mathbf{x}) = {\cal T}$ for all configurations $\mathbf{x}\in {\cal T}$.
\begin{lemma}
\label{lemma-minimal-times-periodic}
(a) Let  $\cal T$ be an effective minimal shift. Then for every  $\mathbf{x}=(x_i)$ from ${\cal T}$ and  every periodic configuration $\mathbf{y}=(y_i)$,
the direct product $\mathbf{x} \otimes \mathbf{y}$  \textup(the bi-infinite sequence of pairs $(x_i,y_i)$ for $i\in \mathbb{Z}$\textup) generates a minimal
shift, i.e., ${\cal S}(\mathbf{x} \otimes \mathbf{y}) $  is minimal.
(b) If, in addition, the sequence $\mathbf{x}$ is computable, then  the set of patterns in ${\cal S}(\mathbf{x} \otimes \mathbf{y}) $  is also computable.
\end{lemma}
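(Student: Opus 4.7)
The plan is to reduce both parts to Lemma~\ref{lemma-quasiperiodic-times-periodic} applied to $\mathbf{x}$ with the period $q$ of $\mathbf{y}$.

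For part~(a), I first note that, since $\cal T$ is minimal, every configuration of $\cal T$ is quasiperiodic; in particular $\mathbf{x}$ is. Fix a period $q$ of $\mathbf{y}$ and consider any factor $v=v_x\otimes v_y$ of length $N$ appearing in $\mathbf{x}\otimes\mathbf{y}$ at some position $s$. By the quasiperiodic version of Lemma~\ref{lemma-quasiperiodic-times-periodic} applied to $\mathbf{x}$ with modulus $q$, there exist a uniform constant $L=L(\mathbf{x},N,q)$ and a nonzero integer $t$ with $|qt|\le L$ such that $v_x$ reappears in $\mathbf{x}$ at position $s+qt$. Because $\mathbf{y}$ is $q$-periodic, the factor $v_y$ is invariant under a shift by $qt$, so the whole factor $v$ reappears at $s+qt$. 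Hence $\mathbf{x}\otimes\mathbf{y}$ is uniformly recurrent with a function of quasiperiodicity bounded by $L(\mathbf{x},N,q)+N$, and the standard fact that the orbit closure of a uniformly recurrent point is minimal yields that ${\cal S}(\mathbf{x}\otimes\mathbf{y})$ is minimal.

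For part~(b) the task is to make the bound $L$ \emph{computable} in $N$ and $q$. Since $\cal T$ is effective and minimal, its function of quasiperiodicity $\varphi_{\cal T}$ is computable, by the result of Hochman and Ballier--Jeandel recalled in the introduction. Tracing through the proof of Lemma~\ref{lemma-quasiperiodic-times-periodic}, the bound $L$ can be taken of the form $(q+1)\,\varphi_{\cal T}(N)$: any window of $\mathbf{x}$ of this size, split into $q+1$ disjoint sub-windows of size $\varphi_{\cal T}(N)$, contains at least $q+1$ distinct occurrences of $v_x$, and by pigeonhole two of them lie in the same residue class modulo $q$. This bound is computable from $N$ and $q$. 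To decide whether a length-$N$ pattern $P$ belongs to ${\cal S}(\mathbf{x}\otimes\mathbf{y})$, I would compute the first $L(N,q)+N$ symbols of $\mathbf{x}\otimes\mathbf{y}$ (using the computability of $\mathbf{x}$ together with the periodicity, hence computability, of $\mathbf{y}$) and test whether $P$ occurs there; by part~(a) and the uniform bound, $P$ is in the shift if and only if it is found in this window.

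The main technical point is the computability step of~(b): one must verify that the bound extracted from Lemma~\ref{lemma-quasiperiodic-times-periodic} can be chosen \emph{effectively} from $N$, $q$, and the computable function $\varphi_{\cal T}$, rather than merely asserted to exist. Once this bookkeeping is done, the decidability of the language of ${\cal S}(\mathbf{x}\otimes\mathbf{y})$ reduces to a finite search in a computable sequence.
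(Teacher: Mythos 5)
Your part (a) is correct and essentially coincides with the paper's argument: apply Lemma~\ref{lemma-quasiperiodic-times-periodic} to $\mathbf{x}$ with modulus $q$ and use $q$-periodicity of $\mathbf{y}$. For part (b), the overall plan---reduce the decision problem to inspecting a computably-bounded window of the computable sequence $\mathbf{x}$---is sound, and the observation that $\varphi_{\cal T}$ is computable for an effective minimal shift is also correct. However, the bound $L=(q+1)\varphi_{\cal T}(N)$ and the pigeonhole you offer for it do not establish what your algorithm needs. Your pigeonhole shows that in any window of size $(q+1)\varphi_{\cal T}(N)$ there exist two occurrences of $v_x$ that are congruent \emph{to each other} modulo $q$; it does not show that for a \emph{given} occurrence at position $s$ there is another occurrence congruent to $s$ modulo $q$ within that window. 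The correctness of the ``compute a window and search'' algorithm rests exactly on the latter property (the \emph{moreover} clause of Lemma~\ref{lemma-quasiperiodic-times-periodic}): whenever $v_x$ appears somewhere at a position $\equiv i \pmod q$, it must also appear at such a position inside the computed window, otherwise a negative answer would be unsound. The proof of that clause is more involved than a single pigeonhole on occurrences of $v_x$: one builds a family of translations $l_1,\dots,l_k$ with $k>q(q-1)$ by applying recurrence to \emph{successively larger} factors containing many copies of $v_x$, so that all subset sums of the $l_j$ are valid translations of the original occurrence, and only then pigeonholes on the residues $l_j \bmod q$. The resulting bound involves iterating $\varphi_{\cal T}$ on growing arguments roughly $q^2$ times; it is still computable, but it is not of the linear form you propose, and your argument does not derive it.

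The paper sidesteps the need for any explicit formula: since the language of $\cal T$ is computable, one searches over $M=N,N+1,\dots$ for the least $M$ such that every factor of $\cal T$ of length $M$ that begins with $v_x$ contains a second occurrence of $v_x$ at a position divisible by $q$. Lemma~\ref{lemma-quasiperiodic-times-periodic} guarantees that this search terminates, and once $L(v_x)$ is found the decision reduces to inspecting the first $L(v_x)+N$ letters of $\mathbf{x}$. To repair your proof, either adopt this brute-force computation of $L$, or replace your bound by one obtained by carefully unwinding the iterated applications of $\varphi_{\cal T}$ in the proof of Lemma~\ref{lemma-quasiperiodic-times-periodic}.
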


\begin{rmrk} In general, different configurations $\mathbf{x} \in {\cal T}$ in the product with one and  the same periodic $\mathbf{y}$ can result in different
shifts ${\cal S}(\mathbf{x} \otimes \mathbf{y}) $.
\end{rmrk}

In the next section we employ these lemmas in the proofs of Theorem~\ref{thm-main} and Theorem~\ref{thm-main-min}.

\subsection{Proof of Theorems~\ref{thm-main}--\ref{thm-main-min}}

\begin{proof}[The proof of Theorem~\ref{thm-main}] 
The proof of statement (b) of Theorem~\ref{thm-main} is simple. Let a one-dimensional configuration $\mathbf{x}\in {\cal A}$ 
be a projection of a two-dimensional configuration $\mathbf{y}\in{\cal B}$.  Let $w$ be a factor of $\mathbf{x}$. Then $w$ is a obtained as a projection of a vertical stripe of width $|w|$  in $\mathbf{y}$. Let us take any finite part $P$ of this stripe. Since $\mathbf{y}$ is quasiperiodic, the pattern $P$ reappears in every large enough region in $\mathbf{y}$. The projections of all these patterns result in copies of the same factor $w$ in $\mathbf{x}$. Hence, $w$ reappears in every large enough subword of $\mathbf{x}$.

Now we prove statement~(a) which is much more difficult.
In this proof we combine arguments from Sections~\ref{subsection:delegation}--\ref{subsection:delegation-quasiperiodicity} and show that the defined embedding  of a quasiperiodic one-dimensional  shift in a two-dimensional tiling  results in a quasiperiodic two-dimensional SFT.
 The main technical tools in the argument are Lemma~\ref{lemma-clones-with-embedded-bits} 
 (and Corollary~\ref{corollary-clones-with-embedded-bits}) and  Lemma~\ref{lemma-quasiperiodic-times-periodic}.

When we prove quasiperiodicity of our tiling,
we use its hierarchical structure: instead of looking for copies of all pattern $P$ in the tiling,  we  need only to show the existence of copies 
for each $(2\times 2)$-block of level-$k$ macro-tiles (for every integer $k$). To prove this property, we use Corollary~\ref{corollary-clones-with-embedded-bits}.
To explain the general scheme of the proof, we introduce a technical notion of a \emph{grid of siblings}.


\begin{definition}
\label{d:grid-of-clones}
We say that a configuration $\mathbf{x}$ \textup(a tiling\textup) contains a \emph{grid of siblings} for a $(2\times 2)$-block $P$ of level-$k$ macro-tiles, 
if in $\mathbf{x}$ there is  an infinite grid of patterns $P_{ij}$, $i,j\in\mathbb{Z}$  \textup(with a horizontal step of $q_x$ and a vertical step of $q_y$\textup) 
such that every $P_{ij}$ is also a $(2\times 2)$-block  of level-$k$ macro-tiles, and the four macro-tiles in  $P_{ij}$   play  
the same roles in the hierarchical structure 
\textup(in the sense of  Condition~\eqref{lemma-clones-with-embedded-bits-a} in Corollary~\ref{corollary-clones-with-embedded-bits}\textup) 
as  the corresponding four macro-tiles in the original pattern $P$.  

Moreover, we say that a grid of siblings is \emph{horizontally aligned} with $P$, if the grid includes a column of patterns $P_{i_0j}$, $j\in\mathbb{Z}$
that  are  aligned with $P$ in the $x$-coordinate \textup(i.e., they share with $P$  the same $(2L_k)$ columns of the tiling\textup).
\end{definition}
\begin{rmrk}
The initial block $P$ does not necessarily belong  to the grid of its siblings  (even if this grid is horizontally aligned with $P$).
\end{rmrk}
The patterns in a \emph{grid of siblings} are not necessarily equal to each other and to the original pattern $P$. Indeed, they all play similar roles
in the hierarchical structure of macro-tiles, but they may involve different parts of the embedded sequence. However, we show that in
a horizontally aligned grid of siblings there are \emph{some} patterns that are equal to the original $P$.

\begin{lemma}\label{l:sub-grid-of-clones}
If a $(2\times 2)$-block $P$ of level-$k$ macro-tiles  in our tiling has a horizontally aligned grid of siblings  $P_{ij}$
\textup(with a horizontal step of $q_x$ and a vertical step of $q_y$\textup), then 
some elements of this grid are equal to the pattern $P$. Moreover,  copies of $P$ in the grid of $P_{ij}$ can be found in every large enough 
$(M\times M)$-pattern of the tiling
\textup(the value of $M$ can be computed as a function of $k$, $q_x$, and $q_y$\textup).
\end{lemma}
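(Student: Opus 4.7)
My plan is to use Corollary~\ref{corollary-clones-with-embedded-bits} as the main engine: it promotes \emph{structural} identity among siblings to genuine equality of patterns, provided the letters of the embedded sequence also coincide in the right responsibility zones. I would feed this engine with matches produced by Lemma~\ref{lemma-quasiperiodic-times-periodic}. First I would dispose of the easy part: every sibling in the aligned column $\{P_{i_0,j} : j\in\mathbb{Z}\}$ is already equal to $P$. Indeed, by hypothesis $P_{i_0,j}$ occupies the same $2L_k$ columns of the tiling as $P$, and since the level-$(k+1)$ grid is uniform in the horizontal direction the fathers of the macro-tiles of $P_{i_0,j}$ sit in the same columns of the tiling as the fathers of the corresponding macro-tiles of $P$. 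Because the embedded letters depend only on the column index (they propagate vertically), the factors of lengths $3L_k$ and $3L_{k+1}$ required by condition~\eqref{lemma-clones-with-embedded-bits-b} coincide letter for letter between $P$ and $P_{i_0,j}$. Condition~\eqref{lemma-clones-with-embedded-bits-a} is built into the very definition of a grid of siblings, so Corollary~\ref{corollary-clones-with-embedded-bits} yields $P_{i_0,j} = P$. This already gives vertical density of copies of $P$ with step $q_y$.

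To obtain horizontal density I would harvest copies from the other columns of the grid using quasiperiodicity of the embedded sequence $\mathbf{x}$. Let $v$ be the factor of $\mathbf{x}$ indexed by the columns of the tiling covered by the responsibility zones of the fathers of the four macro-tiles of $P$; this $v$ has length $O(L_{k+1})$. Applying Lemma~\ref{lemma-quasiperiodic-times-periodic} to $\mathbf{x}$, the factor $v$, and the modulus $q_x$, I obtain a number $L = L(\mathbf{x},|v|,q_x)$ such that every interval of length $L$ on $\mathbb{Z}$ contains some integer $t$ for which $v$ reappears in $\mathbf{x}$ at the shift $q_x \cdot t$.

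For each such $t$, the sibling $P_{i_0+t,j}$ occupies the tiling columns obtained from those of $P$ by the horizontal translation $q_x t$, and the reappearance of $v$ at that translation is exactly the statement that the factors of $\mathbf{x}$ filling the responsibility zones of $P_{i_0+t,j}$ and of its fathers are identical, letter by letter, to those of $P$ and its fathers. Since $P_{i_0+t,j}$ also matches $P$ in its structural role (by the grid-of-siblings hypothesis), both conditions of Corollary~\ref{corollary-clones-with-embedded-bits} are satisfied and we conclude $P_{i_0+t,j} = P$. Combined with the vertical density from the first step, copies of $P$ appear at horizontal positions spaced by at most $L$ and at vertical positions spaced by $q_y$; hence every $(M\times M)$-pattern of the tiling with $M \ge L + q_y + 2L_k$ contains at least one copy of $P$. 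The value of $M$ is computable from $k$, $q_x$, $q_y$ and the function of quasiperiodicity of $\mathbf{x}$.

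The main delicate point is the choice of $v$: it must simultaneously cover all responsibility zones of widths $3L_k$ and $3L_{k+1}$ that enter condition~\eqref{lemma-clones-with-embedded-bits-b}, yet remain a single finite factor of $\mathbf{x}$ so that Lemma~\ref{lemma-quasiperiodic-times-periodic} applies with the single modulus $q_x$. Once $v$ is fixed, the rest is bookkeeping: the grid of siblings supplies structural agreement, Corollary~\ref{corollary-clones-with-embedded-bits} converts structural agreement plus letter-wise agreement into pattern equality, and Lemma~\ref{lemma-quasiperiodic-times-periodic} guarantees that letter-wise agreement occurs densely enough along $\mathbb{Z}$.
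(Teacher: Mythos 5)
Your proof is correct and follows essentially the same path as the paper's: you feed Corollary~\ref{corollary-clones-with-embedded-bits} with letter-wise agreement produced by Lemma~\ref{lemma-quasiperiodic-times-periodic} applied to the factor of the embedded sequence that covers the responsibility zones of the four macro-tiles of $P$ and their fathers, using the horizontal step $q_x$ of the grid as the modulus. The one small organizational difference is that you first observe that the aligned column $P_{i_0,j}$ already consists of copies of $P$ (vertical density) and then separately harvest horizontal density; this first step is actually subsumed by the second, since the reappearance of $v$ at shift $q_x t$ forces $P_{i_0+t,j}=P$ for every $j$ at once (the paper makes this observation in the remark after the lemma rather than in the proof itself). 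Your explicit remark that the bound $M$ also depends on the function of quasiperiodicity of the embedded sequence $\mathbf{x}$ is a correct and welcome clarification of a point the lemma statement leaves implicit (the tiling, and hence $\mathbf{x}$, is fixed).
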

\begin{proof}[Proof of lemma:] We are going to apply Corollary~\ref{corollary-clones-with-embedded-bits}. By definition, for all patterns $P_{ij}$
in the grid of siblings, we have  Condition~\eqref{lemma-clones-with-embedded-bits-a} from Corollary~\ref{corollary-clones-with-embedded-bits}.
It remains to find one $P_{ij}$ that satisfies Condition~\eqref{lemma-clones-with-embedded-bits-b} of this corollary.

To this end, we use Lemma~\ref{lemma-quasiperiodic-times-periodic}. More specifically, we focus on the  factor of the embedded sequence 
which covers the zones of responsibility of the four macro-tiles in the pattern $P$ and their fathers, and find with the help of 
Lemma~\ref{lemma-quasiperiodic-times-periodic} a copy of this factor
in some other place of the embedded sequence, with a non-zero horizontal translation divisible by $q_x$ (the horizontal step of the grid).

Since the grid of siblings is horizontally aligned with $P$, the found copy of the factor in the  embedded sequence will
be aligned with some $P_{ij}$. Due to  Corollary~\ref{corollary-clones-with-embedded-bits}, such a  pattern must be equal to $P$.

The \emph{moreover} part of Lemma~\ref{lemma-quasiperiodic-times-periodic} (see also Remark~\ref{rmrk:dense-siblings}) 
guarantees that the copies of the required factor of the embedded sequence  (taken with translations divisible by $q_x$) are dense:
they appear in every large enough segment of the embedded sequence.
Hence, we can find a block $P_{ij}$ that is equal to $P$
in every large enough part of the grid and, therefore, in every large enough $(M\times M)$-pattern  of the tiling.
(The minimal size of $M$ depends on the size of $k$, and on the steps of the grid $P_{ij}$, but not on a specific instance of $P$.)
\end{proof}
\begin{rmrk}
In the proof of the lemma we have shown a slightly stronger statement: if one pattern $P_{ij}$ in the grid is equal to $P$, then the entire column of patterns  $P_{ij'}$  in the grid consists of copies of $P$ (since all vertically aligned patterns $P_{ij'}$ share the same factor from the embedded sequence).
However,  we will only use the fact that the copies of $P$ are dense (appear in every large enough pattern in the tiling).
\end{rmrk}
In what follows we systematically apply Lemma~\ref{l:sub-grid-of-clones} to different patterns $P$. In each case, to apply the lemma we only need to 
find a \emph{horizontally aligned grid of siblings}, with uniformly bounded steps of $q_x$ and $q_y$.

\label{proof-of-thm-6-quasiperiodicity}
\smallskip
\emph{Case 1: Assume all macro-tiles in a $(2\times 2)$-block are  skeletons.}
In this case we can immediately include the given block $P$ in a grid of $(2\times 2)$-blocks of level-$k$ macro-tiles, 
with the vertical and horizontal steps of $L_{k+2}$. 
Each block in this grid has exactly the same position with respect to its father of rank $(k+1)$ and its grandfather of rank $(k+2)$. 
Thus, all blocks in this grid  are similar to each other in the sense of 
Condition~\eqref{lemma-clones-with-embedded-bits-a}, see the grid of siblings in Fig.~\ref{fig-clones-skeletons}.
Now we can apply  Lemma~\ref{l:sub-grid-of-clones}.


\begin{figure}
\centering
\includegraphics[scale=0.35]{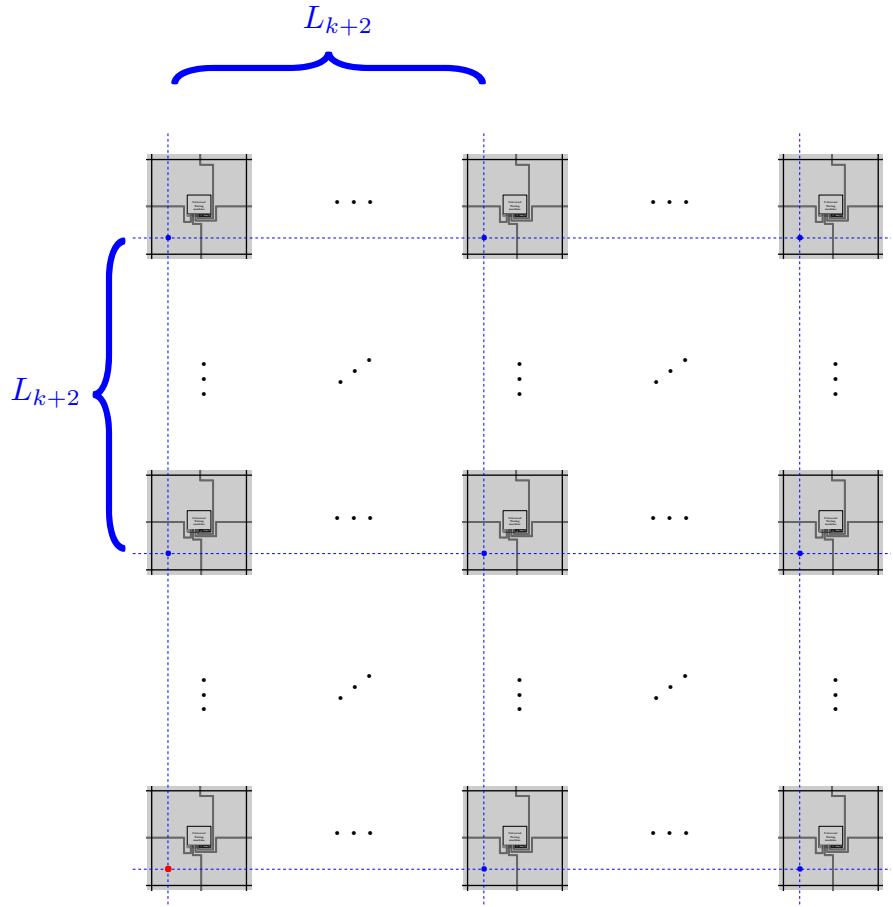} 
\caption{We are looking for a copy of a block of level-$k$ macro-tiles, which is shown as a red spot. We take its ``siblings'' --- blocks at the same position in the father level-$(k+1)$ macro-tiles  (shown as gray squares), which in turn must have the same position with respect to their level-$(k+2)$ fathers. These siblings form a regular grid (shown in blue). The step of the grid of blue siblings is equal to $L_{k+2}$, i.e., to the size of the grandfather of the initial block.}\label{fig-clones-skeletons}
\end{figure}

\smallskip
\emph{Case 2: Computation zone.}  Let us consider now the case when  the $(2\times 2)$-block of level-$k$ macro-tiles touches the computation zone.
In this case we employ the trick introduced in Section~\ref{ss-3-2}.
Recall that  for each $(2\times2)$-window that touches the computation zone,  there are only $O(1)$ admissible patterns from the space-time diagram. For each possible position of a $(2\times2)$-window in the computation zone and for each possible $(2\times 2)$-pattern in the space-time diagram, we reserved a special $(2\times2)$-\emph{diversification slot} in a macro-tile, which is essentially a block of size $2\times2$ in the ``free'' zone of the macro-tile, as shown in   Fig.~\ref{fig-macrotile-with-slots-detailed}.

Note that this diversification slot is  placed   far away from the computation zone and from all communication wires, but in the same vertical stripe as the ``original'' position of this block.  
Further, we  defined the neighbors around each  diversification slot in such a way that ``conscious memory''
(i.e., the content of input data fields (i)--(vi)) of the macro-tiles inside this slot is uniquely defined
(here we use property (p2), see Section~\ref{ss-3-2}). 

Hence, the sibling that we found for the original block of level-$k$ macro-tiles  
(the sibling which is placed in one of the diversification slots) is similar to the original block 
in the sense of the content of its computational zone and also in the sense of the involved factor of the embedded sequence.

Observe that there are infinitely many ``homologues'' of the found sibling  --- there are similar diversification slots in each level-$(k+1)$ macro-tile that take the same position with respect  to their fathers and grandfathers, see Fig.~\ref{fig-clones-comp-zone}.
All these blocks are similar to each other in the sense of Condition~\eqref{lemma-clones-with-embedded-bits-a}.
Thus, we obtain a regular grid (with the horizontal and vertical steps of $L_{k+2}$) of blocks that all satisfy Condition~\eqref{lemma-clones-with-embedded-bits-a},
and we can apply Lemma~\ref{l:sub-grid-of-clones}.
\begin{figure}
\centering
\includegraphics[scale=0.30]{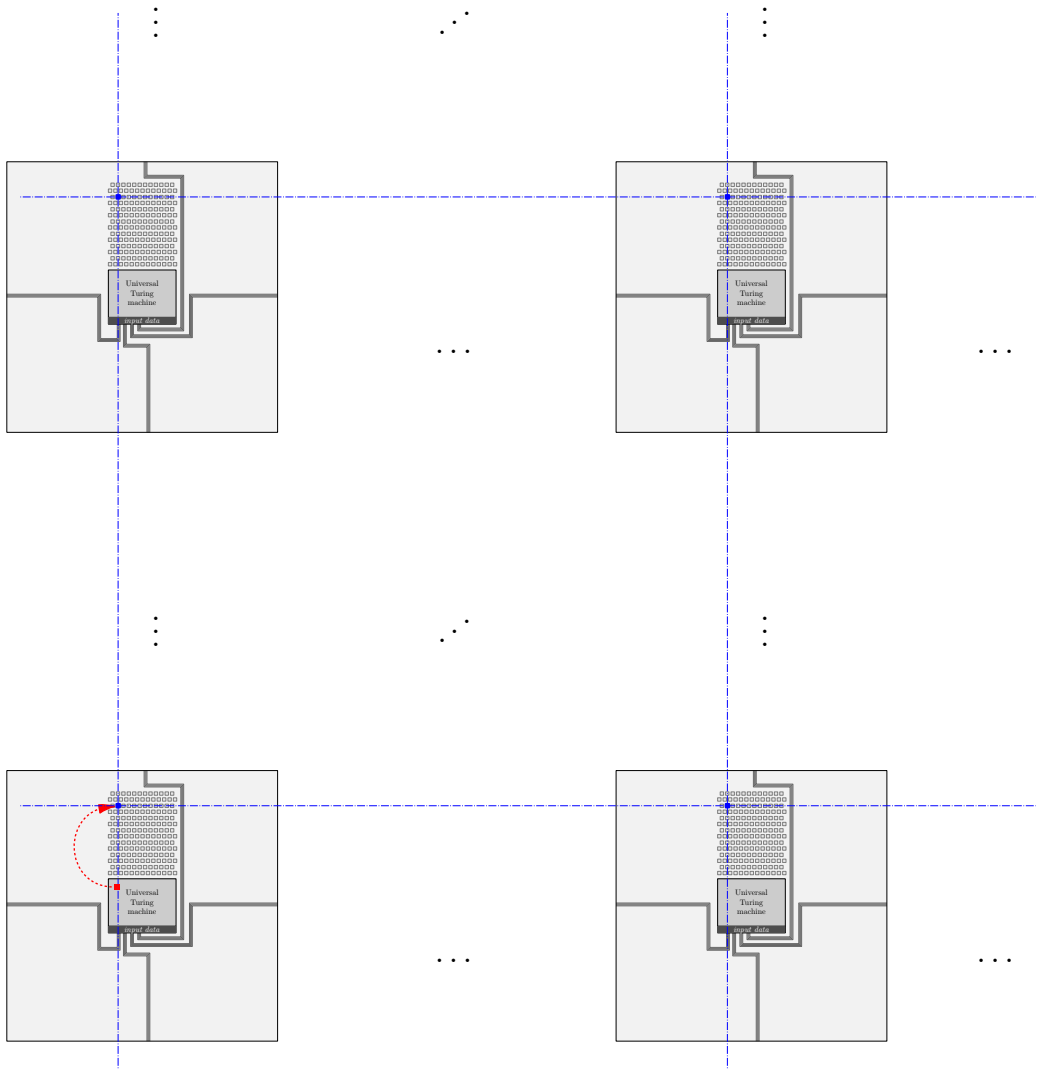}
\vspace{-10pt}
\caption{We are looking for a copy of a block of level-$k$ macro-tiles (shown as a red spot) touching the computation zone of its father. We find a periodic grid of \emph{diversification slots}  (shown as blue spots) that contain ``siblings'' of the original block in the sense of Condition~\eqref{lemma-clones-with-embedded-bits-a}. The step of this grid is $L_{k+2}$. Note that not all these siblings are similar to the original block  in the sense of Condition~\eqref{lemma-clones-with-embedded-bits-b}.}\label{fig-clones-comp-zone}
\end{figure}

\smallskip
\emph{Case 3: Communication wires.}
Now we consider the case when a $(2\times 2)$-block of level-$k$ macro-tiles involves a part of a communication wire. Due to property (p3) 
(see p.~\pageref{property:p4})
we may assume that only one wire is involved. The bit transmitted by this wire is either $0$ or $1$; in both cases, due to property (p4)
we can find another similar $(2\times 2)$-block of level-$k$ macro-tiles (at the same position within the father macro-tile of rank $(k+1)$ 
and with the same bit included in the communication wire)  in  every macro-tile of level $(k+2)$
(see the grid of siblings in Fig.~\ref{fig-clones-wires}).
\begin{figure}
\centering
\includegraphics[scale=0.3]{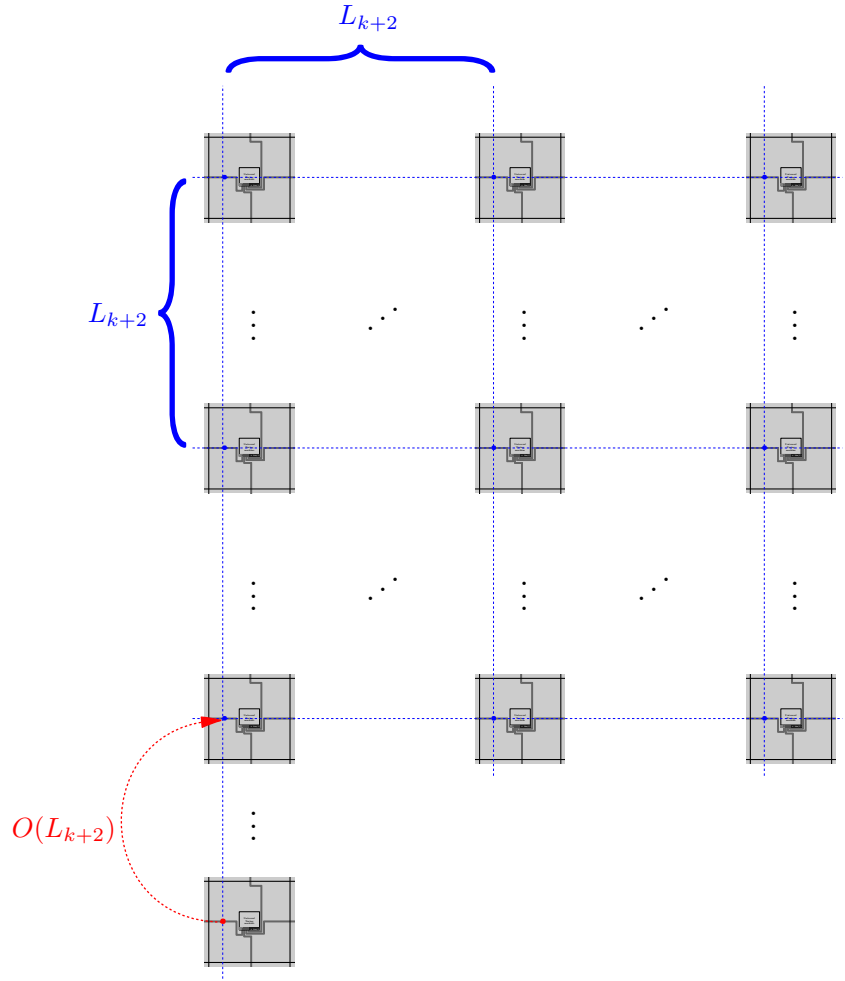} 
\caption{We are looking for a copy of a block of level-$k$ macro-tiles (shown as a red spot) touching its father's communication wire. We find a periodic grid with step $L_{k+2}$ (shown as blue spots), which consists of  ``siblings'' of the original block in the sense of Condition~\eqref{lemma-clones-with-embedded-bits-a}. Notice that the original block does not belong to this grid: it is aligned with one of the grid column but not with the rows.}\label{fig-clones-wires}
\end{figure}

We can immediately find blocks  that are similar to ours in the sense of Condition~\eqref{lemma-clones-with-embedded-bits-b},
with a vertical translation of size $O(L_{k+2})$. These blocks are obviously vertically aligned with the original block.
\begin{rmrk}
\label{r:field-vi-exception-in-use}
In this case, we find for the given block of level-$k$ macro-tiles a  sibling which has the same position with respect to their father macro-tile,
but possibly different position with respect to the grandfather. However, due to Remark~\ref{r:field-vi-exception} (p.~\pageref{r:field-vi-exception}),
for a block of macro-tiles involving a cell from a 
communication wire,  this minor displacement does not affect the field (vi) of the computation zone (which is empty). 
Therefore, the found siblings have exactly the same data in all fields (i)-(vi) as
the initial $(2\times 2)$-block of macro-tiles.
\end{rmrk}

It remains to extend the found column of siblings of $P$ and obtain a two-dimensional grid of similar siblings.
We can take the number $L_{k+2}$ as the horizontal step of the grid since
a horizontal translation of size $L_{k+2}$ does not change the position of $(2\times 2)$-blocs of level-$k$ macro-tiles with respect to their fathers and grandfathers, see Fig.~\ref{fig-clones-wires}. Then, we can again apply Lemma~\ref{l:sub-grid-of-clones}.

\smallskip
 
Thus, we have constructed a tile set $\tau$ such that every $L_k\times L_k$ pattern that appears in a $\tau$-tiling
(and which is necessarily covered by a $(2\times 2)$-block of level-$k$ macro-tiles)
 must also appear in every large enough square in this tiling. So, the constructed tile set satisfies the requirements of Theorem~\ref{thm-main}.
\end{proof}

\begin{proof}[The proof of Corollary~\ref{thm-kolmogorov}.]
To prove Corollary~\ref{thm-kolmogorov} we combine Theorem~\ref{thm-main} with a fact from  \cite{rumyantsev-ushakov}:  there exists a one-dimensional shift $\cal S$ that is quasiperiodic, and   for every configuration $\mathbf{x}\in{\cal S}$ the Kolmogorov complexity of all factors is linear, i.e.,
 $
 C(x_{i}x_{i+1}\ldots x_{i+n})  = \Omega(n)
 $
for all $i$ (here $C(w)$ denotes the Kolmogorov complexity of a string $w$).
\end{proof}

\begin{proof}[The proof of Theorem~\ref{thm-main-min}]
The proof of Theorem~\ref{thm-main-min}(b) is very similar to the proof of Theorem~\ref{thm-main}(b) and rather simple. 
We focus on the proof of Theorem~\ref{thm-main-min}(a).

First of all, we stress that the proof of Theorem~\ref{thm-main}(a) discussed above does not imply Theorem~\ref{thm-main-min}(a). If we take an effective 
 minimal one-dimensional shift $\cal A$ and plug it into the construction from the proof of  Theorem~\ref{thm-main}, we obtain a tile set  $\tau$
(simulating $\cal A$) which is quasiperiodic but not necessarily minimal. 

The property of minimality can be lost even for a periodic shift $\cal A$. Indeed, assume that the minimal period $t>0$ of the configurations in $\cal A$ is a factor of $L_k$ (i.e., of the size of each level-$k$ macro-tile in our self-simulating tiling). Then  we can extract from the resulting  SFT $\tau$ nontrivial subshifts $T_i$, $i=0,1,\ldots, t-1$ corresponding to the position of the embedded 
one-dimensional configuration with respect to the grid of macro-tiles.

To overcome this obstacle, we  superimpose some additional constraints on the embedding of the simulated $\mathbb{Z}$-shift in a $\mathbb{Z}^2$-tiling. Roughly speaking, we will enforce only ``standard'' positioning of the embedded  $1\mathrm{D}$ sequences with respect to the grid of macro-tiles. This will not change the class of the one-dimensional sequences that can be embedded in a tiling (we still get all configurations from a given minimal shift $\cal A$), but the classes of all valid tilings will reduce to some minimal SFT on $\mathbb{Z}^2$.

\emph{The standardly aligned grid of macro-tiles:} In general, the hierarchical structure of macro-tiles permits uncountably many ways of cutting the plane into macro-tiles of different ranks. We fix one particular variant of this hierarchical structure and say that a grid of macro-tiles is \emph{standardly aligned} if for each level $k$ the point $(0,0)$ is the bottom-left corner of a level-$k$ macro-tile,
see Fig.~\ref{pic:degenerate}. 
 This means that the tiling is cut into level-$k$ macro-tiles of size $L_k\times L_k$ by vertical lines with abscissae $x=L_k\cdot t'$ and ordinates  $y=L_k\cdot t''$, with $t', t''\in \mathbb{Z}$.
This structure has a kind of degeneration: the vertical line $(0,*)$ and the horizontal line $(*,0)$ serve as separating lines for macro-tiles of all ranks. 
This specific structure of macro-tiles is obviously computable.
\begin{figure}
\centering
\includegraphics[scale=0.50]{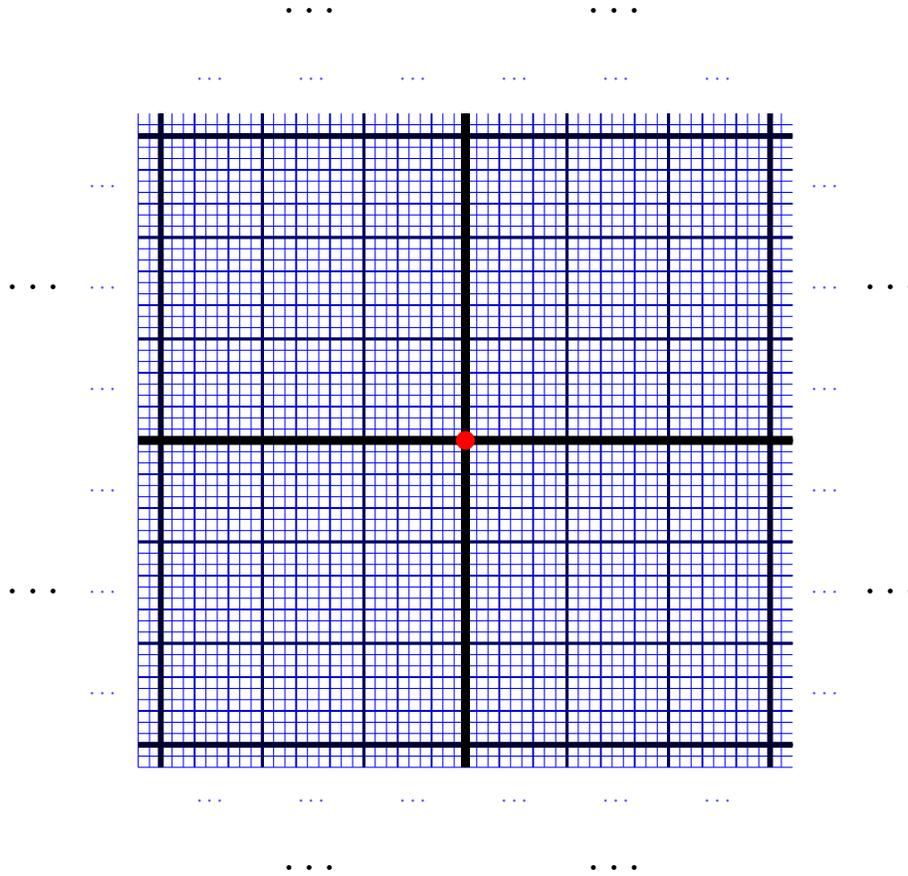}
\caption{An example of a standardly aligned grid of macro-tiles (in this example we use on each level the same zoom factor $N=3$, so every square of rank $k$  consists of $3\times 3$ squares of the previous rank). The central point (marked red) is a corner of four squares (macro-tiles) of each rank $k=1,2,3,\ldots$}\label{pic:degenerate}
\end{figure}

\emph{The canonical representative of a minimal shift:} A minimal effectively closed  $1\mathrm{D}$-shift $\cal A$ is always computable, i.e., the set of finite patterns that appear in configurations of this shift is computable, 
see \cite[Corollary~4.9]{ballier-jeandel-2008} or \cite[Proposition~9.6]{hochman}. 
 It follows immediately that $\cal A$ contains some computable configuration: we can incrementally increase a pattern maintaining the property that it 
 is globally admissible, i.e.,  appears in configurations of the shift. 
 Let us fix one (arbitrary) computable configuration $\mathbf{x}$; in what follows we call it  \emph{canonical}.

\smallskip

\emph{The standard  embedding of the canonical configuration:} We  superimpose the canonical configuration $\mathbf{x}$ on the standardly aligned grid of macro-tiles:  we take the direct product of the hierarchical structures  of the standardly aligned  grid of macro-tiles
with  the {canonical} configuration $\mathbf{x}$ from ${\cal A}$  (that is, each tile with coordinates $(i,j)$ ``contains'' in itself the letter $x_i$ from the {canonical} configuration). We split the rest of the proof into five claims.

\smallskip
 
\emph{Claim 1 (purely combinatorial): Let $m, L$ be integers and $w$ be a word. If the factor $w$ appears at least once in the standard embedding at the position $m\mod L$,  then this happens infinitely often: $w$ reappears  in the standard embedding at infinitely many positions $i$ congruent to $m\mod L$. Moreover, such positions $i$ can be found in every large enough factor of the standard embedding.}
This claim follows from Lemma~\ref{lemma-minimal-times-periodic}(a) applied to the product of the {canonical} configuration $\mathbf{x}$
with the periodic sequence 
\[
\ldots 1\ 2\ 3\ \ldots\  L\  1\ 2\ 3\ \ldots\ L\ 1\ 2\ \ldots
\]
This claim implies that the combinatorial structure of the  standard  embedding of the canonical configuration is regular:
if a factor $w$ appears at least once  in the standard embedding with a horizontal coordinate $(m\mod L_k)$ 
(where $L_k$ is the size of the level-$k$ macro-tiles),  
then $w$ reappears at the same position with respect to the grid of level-$k$ macro-tiles in every large enough pattern of this configuration. 

\begin{rmrk} \label{remark-quasi-standard}
We will say that an infinite configurations is  \emph{quasi-standard} if it contains only finite patterns that appear in the standard  embedding of the canonical configuration. The comment above applies to all quasi-standard configurations: 
if a factor $w$ appears at least once in the standard embedding at the position $m\mod L_k$,  then $w$ reappears at the same position (with respect to the grid of level-$k$ macro-tiles) in all quasi-standard configurations. On the other hand,
if a factor $w$ never appears at the position $m\mod L_k$   in the standard embedding, it cannot appear at this position in any quasi-standard configuration.
\end{rmrk}

\smallskip

\emph{Claim 2: Given a pattern $w$ of size $n\le L_k$ and an integer $i$, we can algorithmically verify whether the factor $w$ appears in the standard embedding of the canonical representative at the position $(i \mod L_k)$ relative to the grid of level-$k$ macro-tiles.} This follows from Lemma~\ref{lemma-minimal-times-periodic}(b) applied to the superposition of the canonical  configuration $\mathbf{x}$ with the  periodical grid of squares of size $L_k\times L_k$.

\begin{rmrk} \label{remark-verif}
This verification procedure is computable, but its computational complexity can be very high. 
To perform the necessary  computation, the space and time needed might be  much bigger than the length of $w$ and $L_k$.
\end{rmrk}

\emph{Upgrade of the main construction:} Before we go further, we need to update the construction of the self-simulating tiling from the proof of Theorem~\ref{thm-main}. Up to now,  we have required that every macro-tile (of every level $k$) performs in its computation zone the standard  \emph{verification procedure}, which checks whether the delegated factor of the embedded sequences contains no patterns forbidden for the shift $\cal A$. Now we make the verified property stronger: we require that the delegated factor contains only factors allowed in the shift $\cal A$, and these factors must be located at the positions (relative to the grid of macro-tiles) permitted for factors in the standard  embedding of the canonical configuration $\mathbf{x}$. This property is computable 
(due to Lemma~\ref{lemma-minimal-times-periodic}(b)), so every  forbidden pattern, or a pattern in a forbidden position, will be discovered in a computation in a macro-tile of high enough rank.

The computational complexity of this procedure can be very high (see Remark~\ref{remark-verif}), and we cannot guarantee that the forbidden patterns of small length are discovered by the computations in macro-tiles of small size. However, we do guarantee that each forbidden pattern, or a pattern in a forbidden position, is discovered by a computation in some macro-tile of high enough rank.


\smallskip

\emph{Claim 3: The new tile set admits valid  tilings of the plane.}  By construction, there exists at least one valid tiling. Indeed,  the standard  embedding of the canonical representative corresponds to a valid tiling of the plane:  in this specific tiling,  the macro-tiles of all rank never find any forbidden placement of patterns of the embedded sequence.

\begin{rmrk}
Our tile set admits many (in fact, infinitely many) different tilings. 
We cannot guarantee that all valid tilings represent the \emph{standard embedding of the canonical configuration} $\mathbf{x}$ defined above. 
However, we can guarantee a weaker property:  \emph{locally} all valid tilings look similar to each other.
More specifically, all valid configurations are \emph{quasi-standard} in the sense of Remark~\ref{remark-quasi-standard}.

\end{rmrk}

\smallskip

\emph{Claim 4: The new tile set simulates the shift $\cal A$.} This follows immediately from the construction: the embedded sequence must be a configuration without factors forbidden for~$\cal A$.

\smallskip

\emph{Claim 5: For the constructed tile set $\tau$ the set of all tilings is a minimal shift.}  We need to show that every $\tau$-tiling contains all patterns that can appear in at least one $\tau$-tiling. Similarly to the proof of  Theorem~\ref{thm-main}, it is enough to prove this property for $(2\times2)$-blocks of 
level-$k$ macro-tiles.

The argument is similar to the proof in Theorem~\ref{thm-main}. Let us fix a  $(2\times2)$-block  of level-$k$ macro-tiles that appears in a tiling $T$ and denote it $B$. We need to show that $B$ reappears in every valid tiling. That is, in every other tiling $T'$ we must find  a similar $(2\times2)$-block  of level-$k$  macro-tiles  that is identical to $B$.
The main technical tools are   Lemma~\ref{lemma-clones-with-embedded-bits} and  Corollary~\ref{corollary-clones-with-embedded-bits}.
More specifically, we need to find  in $T'$ a $(2\times 2)$-block of level-$k$ macro-tiles to which we can apply
Conditions  \eqref{lemma-clones-with-embedded-bits-a} and  \eqref{lemma-clones-with-embedded-bits-b},
see p.~\pageref{corollary-clones-with-embedded-bits}.
We will find  such a block in $T'$ in two steps.

At first we focus on Condition~\eqref{lemma-clones-with-embedded-bits-a}. 
We re-employ the argument from the proof of Theorem~\ref{thm-main} and study separately three cases:
our block of macro-tiles either appears in the ``skeletons'' area,   or it involves  part of the computation zone of the father level-$(k+1)$ macro-tile,
or it touches one of the communication wires of the father level-$(k+1)$ macro-tile. 
In each of these three cases we can find a periodic grid of blocks
with exactly the same position with respect to their fathers and grandfathers, and which are similar to the original block in the sense of Condition~\eqref{lemma-clones-with-embedded-bits-a}
(see Cases~1-3 of the proof of Theorem~\ref{thm-main}). 
In each of these cases, the $(2\times2)$-block  of level-$k$ macro-tiles that look similar to the original block $B$ in the sense of
Condition~\eqref{lemma-clones-with-embedded-bits-a} form in every tiling $T'$ a regular grid with the horizontal and vertical steps $L_{k+2}$. (Note that similar macro-tiles may appear also in other positions, outside of this regular grid; but we focus only on the regular grid of blocks with the required properties.) However,  the blocks that appear at nodes of this grid are not necessarily equal to each other: the involved macro-tiles may contain 
different factors of the embedded sequence.

We need to find in the infinite grid of $(2\times2)$-blocks a position where the block is equal to the original block $B$. 
Due to Corollary~\ref{corollary-clones-with-embedded-bits},  it is enough to find a block where we can apply 
Condition~\eqref{lemma-clones-with-embedded-bits-b}. 
Denote $w$ the factor of the embedded sequence that covers the zones of responsibility of the macro-tiles in $B$.
Let  $m$ be the horizontal coordinate of this factor with respect to the grid of level-$(k+2)$ macro-tiles. 
By construction, if  a factor $w$ of the embedded sequence may appear in a valid tiling at the position $m$ modulo $L_{k+2}$, 
than it appears  at least once (and, therefore, infinitely often, see Claim~1)  
at this specific position in the standard embedding of the canonical configuration.

Now we  use Remark~\ref{remark-quasi-standard} above. We know that the factor $w$ appears at least once at the position $m \mod L_{k+2}$ 
in  every large enough pattern of the standard embedding of the canonical configuration. Therefore,  
it appears at the same position  in every other valid configuration  of our tile set.  This concludes the proof of minimality.
 \end{proof}

\section{The proofs of the combinatorial lemmas}

\begin{proof}[Proof of Lemma~\ref{lemma-quasiperiodic-times-periodic}]
Denote by  $N$ the length of $v$.
We are given that $v$ is a factor of $\mathbf{x}$. W.l.o.g., we may assume that
 $
 v = \mathbf{x}_{[0: N-1]}.
 $
We need to prove that $v$ reappears again in $\mathbf{x}$ with a shift $t\cdot q$, i.e.,  
$
v = \mathbf{x}_{[t q : t q + N-1]}
$
for some $t>0$.

 Since $\mathbf{x}$ is recurrent, there exists an integer $l_1>0$ such that the pattern $v$ appears once again in $\mathbf{x}$ with the shift of size $l_1>l_0$ to the right,
  $$v = \mathbf{x}_{[l_1+1:l_1+N-1]}.$$ 
  If $q$ is a factor of $l_1$, then we are done.  Otherwise (if $q$ is not a factor of $l_1$), we use  the recurrency of $\mathbf{x}$ once again (now for a bigger pattern). From recurrence it follows that there exists an $l_2>0$ such that $\mathbf{x}_{[l_0:l_0+l_1+N-1]}$ appears again in $\mathbf{x}$  with some shift of size $l_2$ to the right,
  $$\mathbf{x}_{[l_0:l_0+l_1+N-1]}= \mathbf{x}_{[l_2:l_1+l_2+N-1]}.$$   
Now we have two new occurrences of $v$ in $\mathbf{x}$,
  $$
  v = \mathbf{x}_{[l_1:l_1+N-1]}= \mathbf{x}_{[l_2:l_2+N-1]}= \mathbf{x}_{[l_1+l_2:l_1+l_2+N-1]}.
  $$
If $q$ is a factor of $l_2$ or $l_1+l_2$, we get a subword in $\mathbf{x}$ (starting at the position $l_2$ or, respectively, $l_2+l_1$) that is equal to  $v$. 
Otherwise, we repeat the same argument  again, and find in $\mathbf{x}$ a copy of an even greater pattern $\mathbf{x}_{[l_0:l_0+l_1+l_2+N-1]}$. Repeating this argument $k$ times, we obtain  a sequence of positive integers  $l_1,\ldots, l_k$  such that the word $v$ reappears in $\mathbf{x}$ with all shifts composed of terms $l_i$  (all possible sums of several different $l_i$).  That is, for each integer
\begin{equation}\label{sum-l}
 \sigma = l_{j_1}+l_{j_2}+\ldots + l_{j_r}
\end{equation}
(composed of a family of $r\le k$ pairwise different $j_1,\ldots, j_r$) we have 
  $
  v= \mathbf{x}_{[\sigma:\sigma+N-1]},
  $ 
see Fig.~6.
If $k$ is large enough, then $q$ is a factor of at least one of the shifts~(\ref{sum-l}). Indeed, if $k>q(q-1)$ then we can find  $q$ different $l_j$ congruent to each other modulo $q$ (the pigeon hole principle). Then the sum of these $l_j$ must be equal to $0$ modulo $q$.  
\medskip

\begin{figure}
\includegraphics[width=\hsize]{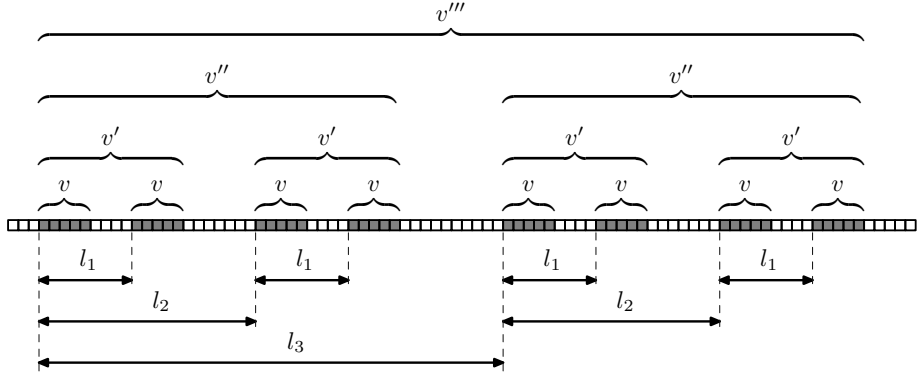}
\caption{Reappearance of factors in a quasiperiodic sequence. At  first we find a reappearance of the factor $v$, then we find a reappearance of the factor $v'$ involving two copies of $v$, then a reappearance of $v''$ involving two copies of $v'$ (and four copies of $v$), etc.}
\end{figure}

By iterating the same argument we obtain that every factor $v$ of a recurrent $\mathbf{x}$ reappears in the sequence with  infinitely many shifts divisible by $q$ (if $v=\mathbf{x}_{[i:i+n-1]}$, then there are infinitely many $t$ such that $v=\mathbf{x}_{[qt+i:qt+i+n-1]}$).

\medskip

So far  we have used only the fact that $\mathbf{x}$ is  recurrent. To prove the \emph{moreover} part of the lemma, we note that for a uniformly recurrent $\mathbf{x}$, all integers $l_j$, $j=1,\ldots,k$ in the argument above can be majorized by some  uniform  upper bound that depends only on $\mathbf{x}$ and $n$ (but not on the specific position of a factor $w$ in $\mathbf{x}$ chosen in the first place).  
This observation concludes the proof.

\end{proof}

\begin{proof}[Proof of lemma~\ref{lemma-minimal-times-periodic}]
(a) Denote by $q$ the period of $\mathbf{y}$.
Since $\cal T$ is  a minimal shift, the configuration $\mathbf{x}$ is recurrent and even quasiperiodic. By Lemma~\ref{lemma-quasiperiodic-times-periodic}, every factor $w=\mathbf{x}_{[m:n]}$ reappears in $\mathbf{x}$ with a shift divisible by $q$. Moreover, a copy of $w$ can be found with a shift divisible by $q$ in every large enough pattern
 $
 \mathbf{x}_{[i: i+L]}
 $
(where the value of $L$ depends on $w$ but not on the specific position of $m$ in $\mathbf{x}$).
It follows that the corresponding factor of the product $\tilde{w} =(\mathbf{x} \otimes \mathbf{y})_{[m:n]}$ 
reappears in every large enough pattern of all sequences in ${\cal S}(\mathbf{x} \otimes \mathbf{y})$.

\smallskip

(b) We need to verify algorithmically  whether a given factor $\tilde v$ appears in $\mathbf{x} \otimes \mathbf{y}$.  In other words, for a given word $v$ and for a given integer $i$ we need to find out whether $v$ appears in $\mathbf{x}$ in a  position congruent to $i\mod q$.

From Lemma~\ref{lemma-quasiperiodic-times-periodic} it follows that 
there exists an $L=L(v)$ such that for every appearance of $v$ in $\mathbf{x}$, 
 this factor reappears in $\mathbf{x}$  in the same position modulo $q$ with a translation at most $L$ to the right. More precisely, 
if $v=\mathbf{x}_{[i:i+|v|-1]}$ for some $i$, then there exists a positive $\sigma<L$ divisible by $q$ such that $\mathbf{x}_{[i:i+|v|-1]}=\mathbf{x}_{[i+\sigma:i+\sigma+|v|-1]}$.

Since $\cal T$ is minimal, the language of the finite factors of all configurations in $\cal T$  is computable. It follows that the bound $L=L(v)$ defined above is computable. Indeed, by the brute force search we can find the maximal possible gap between two neighboring appearances  (in this shift) of the word $v$ in  positions congruent to each other modulo $q$.
Thus, to  find out whether $v$ appears in $\mathbf{x}$ in a position congruent to $i\mod q$, it is enough to compute the first $L(v)$ letters of the sequence~$\mathbf{x}$.
\end{proof}

\bigskip
\noindent
\textbf{Acknowledgments.} 
We  thank Emmanuel Jeandel  for raising  the questions addressed in Theorem~\ref{thm-main}.
We are grateful to   Gwena\"el Richomme and Pascal Vanier   for  fruitful discussions. 
We also thank the anonymous reviewers of the ETDS journal for numerous comments and suggestions, which significantly contributed to improving the quality of the publication.


\begin{thebibliography}{30}


\bibitem{berger} R. Berger, \textit{The Undecidability of the Domino Problem}. \textit{Mem. Amer. Math. Soc.}, vol.~66 (1966).
\bibitem{robinson} R.M.~Robinson, Undecidability and nonperiodicity for tilings of the plane. \emph{Inventiones Mathematicae} 12, 177--209 (1971).


\bibitem{nonrecursive1}  W.~Hanf. Nonrecursive tilings of the plane. I. \emph{The Journal of Symbolic Logic}  39, 283--285 (1974). 

\bibitem{nonrecursive2}  D.~Myers. Nonrecursive tilings of the plane. II. \emph{The Journal of Symbolic Logic}  39,  286--294 (1974).  

\bibitem{gacs-self-similar-automata} P.~G\'acs. Reliable computation with cellular automata.
\emph{Journal of Computer and System Sciences} 32(1), 15--78 (1986).


\bibitem{odifreddi}  P.~Odifreddi.  \emph{Classical recursion theory: The theory of functions and sets of natural numbers.} (1992) Elsevier.


\bibitem{allauzen-durand} C. Allauzen and B. Durand. Tiling problems.  In E. Borger, E. Gradel, Y. Gurevich, \emph{The Classical Decision Problem}. Springer-Verlag (1997).

\bibitem{bruno} B.~Durand, Tilings and quasiperiodicity.  \emph{Theoretical Computer Science} 221(1),  61--75 (1999).

\bibitem{avgust}  S. V. Avgustinovich, D. G. Fon-Der-Flaass, and A. E. Frid. 
Arithmetical Complexity of Infinite Words. In: \emph{Words, Languages \& Combinatorics III. 2003.} pp.~51--62.

\bibitem{shen-vereshchagin-book} A.~Shen and N.~Vereshchagin. \emph{Computable functions}. American Mathematical Soc., 2003.

\bibitem{rumyantsev-ushakov} A.~Rumyantsev and M.~Ushakov. 
Forbidden substrings, Kolmogorov complexity and almost periodic sequences.
In \emph{Proc. Annual Symposium on Theoretical Aspects of Computer Science} (2006) pp.~396--407.


\bibitem{ollinger} N.~Ollinger, Two-by-two substitution systems and the undecidability of the domino problem. \emph{Logic and Theory of Algorithms} 476--485 (2008).


\bibitem{dls} B.~Durand, L.~Levin, and A.~Shen. Complex tilings.  \emph{The Journal of Symbolic Logic} 73(2), 593--613 (2008).

\bibitem{alexis} A. Ballier, \emph{Propri\'et\'es structurelles, combinatoires et logiques des pavages.} PhD~thesis, Marseille, November 2009.


\bibitem{hochman} M.~Hochman.  
On the dynamics and recursive properties of multidimensional symbolic systems.
\emph{Inventiones Mathematicae} 176(1),  131--167 (2009)


\bibitem{ballier-jeandel-2008} A.~Ballier and E.~Jeandel.  Tilings and model theory. In Proc. \emph{Journées Automates Cellulaires} (JAC 2008), pp. 29-39. 

\bibitem{hochman-meyerovitch} M.~Hochman and T.~Meyerovitch. 
A characterization of the entropies of multidimensional shifts of finite type.
\textit{Annals of Mathematics} 2011--2038 (2010).

\bibitem{salimov} P.~Salimov,  On uniform recurrence of a direct product. 
\textit{Discrete Mathematics and Theoretical Computer Science} 12(4), 1 (2010).

\bibitem{drs} B.~Durand, A.~Romashchenko, and A.~Shen, Fixed-point tile sets and their applications.
\textit{Journal of Computer and System Sciences} 78(3), 731--764 (2012).

\bibitem{aubrun-sablik} N.~Aubrun, M.~Sablik.
Simulation of effective subshifts by two-dimensional subshifts of finite type.
\textit{Acta Applicandae Mathematicae} 128(1), 35--63 (2013)


\bibitem{jeandel-vanier} E.~Jeandel and P.~Vanier. Turing degrees of multidimensional SFTs.  \textit{Theoretical Computer Science} 505, 81--92 (2013).


\bibitem{jeandel-email} E.~Jeandel, Personal communication. 2015.

\bibitem{mfcs2015} B.~Durand and A.~Romashchenko. Quasiperiodicity and non-computability in tilings. In \textit{Proc. International Symposium on Mathematical Foundations of Computer Science}, pp.~218--230 (2015).


\bibitem{zinoviadis} C.~Zinoviadis. \emph{Hierarchy and expansiveness in two-dimensional subshifts of finite type.} In proc. International Conference on Language and Automata Theory and Applications 2015 Mar 2 (pp. 365--377). The full version: arXiv:1603.05464 (2016).


\bibitem{gangloff-sablik} S.~Gangloff and M.~Sablik. 
Block gluing intensity of bidimensional SFT: Computability of the entropy and periodic points.
arXiv preprint arXiv:1706.01627 (2017). 

\bibitem{mfcs2017} B.~Durand and A.~Romashchenko. On the Expressive Power of Quasiperiodic SFT. In \textit{Proc. International Symposium on Mathematical Foundations of Computer Science}. (2017) LIPIcs-Leibniz International Proceedings in Informatics, vol.~83. Schloss Dagstuhl-Leibniz-Zentrum fuer Informatik.


\bibitem{hochman-vanier} 
M.~Hochman and P.~Vanier, Turing degree spectra of minimal subshifts. In \textit{Proc. International Computer Science Symposium in Russia.} pp.~154--161 (2017).

\bibitem{westrick} L.~B.~Westrick. Seas of squares with sizes from a $\Pi^0_1$.  \textit{Israel Journal of Mathematics} 222(1), 431--462 (2017).



\bibitem{labbe} S.~Labb\'e. \emph{A self-similar aperiodic set of $19$ Wang tiles.}  arXiv:1802.03265 (2018).

\bibitem{goodman-strauss} C.~Goodman-Strauss.
\emph{Lots of aperiodic sets of tiles.} Journal of Combinatorial Theory, Series A, 160, 409-445 (2018).

\end{thebibliography}
\end{document}